\newtheorem{defi}{Definition}
\newtheorem{lem}{Lemma}
\newtheorem{thm}{Theorem}
\newtheorem{prop}{Proposition}
\newtheorem{rem}{Remark}
\newtheorem{ex}{Example}
\newcommand{\C}{\mbox{$\cal C$}}
\newcommand{\ben}{\begin{equation*}}
\newcommand{\een}{\end{equation*}}
\newcommand{\comment}[1]{}
\newcommand{\F}{\mathbb{F}}
\newcommand{\x}{\mbox{\bf x}}
\begin{document}

\author{ Minjia Shi$^\ast$\thanks{smjwcl.good@163.com},
	Xuan Wang \thanks{wang\_xuan\_ah@163.com}, Junmin An\thanks{junmin0518@sogang.ac.kr}, Jon-Lark Kim\thanks{jlkim@sogang.ac.kr},
	\thanks{Minjia Shi and Xuan Wang are with the Key Laboratory of Intelligent Computing Signal 		Processing, Ministry of Education, School of Mathematical Sciences, Anhui
		University, Hefei 230601, China; State Key Laboratory of integrated Service Networks, Xidian University, Xi'an,
		710071, China. Junmin An and Jon-Lark Kim are with Department of Mathematics, Sogang University, Seoul, South Korea. This research (M. Shi) is supported by National Natural Science Foundation of China (12471490). This research (J.-L. Kim and J. An) is partially supported by the 4th BK 21 ``Nurturing team for creative and convergent mathematical science talents'' of Department of Math at Sogang University.}}

\title{Log-Concave Sequences in Coding Theory}


\date{}

\maketitle
	\begin{abstract}
	We introduce the notion of logarithmically concave (or log-concave) sequences in Coding Theory. A sequence $a_0, a_1, \dots, a_n$ of real numbers is called log-concave if $a_i^2 \ge a_{i-1}a_{i+1}$ for all $1 \le i \le n-1$. A natural sequence of positive numbers in coding theory is the weight distribution of a linear code consisting of the nonzero values among $A_i$'s where $A_i$ denotes the number of codewords of weight $i$. We call a linear code log-concave if its nonzero weight distribution is log-concave. Our main contribution is to show that all binary general Hamming codes of length $2^r -1$ ($r=3$ or $r \ge 5$), the binary extended Hamming codes of length $2^r ~(r \ge 3)$, and the second order Reed-Muller codes $R(2, m)~ (m \ge 2)$ are all log-concave  while the homogeneous and projective second order Reed-Muller codes are either log-concave, or 1-gap log-concave.
Furthermore, we show that any MDS $[n, k]$ code over $\mathbb F_q$ satisfying $3 \leqslant k \leqslant n/2 +3$ is log-concave if $q \geqslant q_0(n, k)$ which is the larger
root of a quadratic polynomial. Hence, we expect that the concept of log-concavity in coding theory will stimulate many interesting problems.

	\end{abstract}
	\textbf{Keywords:} log-concave, weight distribution, linear code, Hamming code,  Reed-Muller code\\
	\textbf{MSC(2020):} 94B15

\section{Introduction}

Many interesting sequences in algebra, combinatorics, geometry, analysis, probability and statistics have been known to be log-concave or unimodal. We recall that a sequence $a_0, a_1, \dots, a_n$ of real numbers is called {\em log-concave} if $a_i^2 \geqslant a_{i-1}a_{i+1}$ for all $1 \leqslant i \leqslant n-1$. A sequence $a_0, a_1, \dots, a_n$ of real numbers is called {\em unimodal} if there is an index $0 \leqslant j \leqslant n$ such that $a_0 \leqslant \cdots  \leqslant a_{j-1} \leqslant a_j \geqslant  a_{j+1} \geqslant \cdots  \geqslant a_n$. For instance, the well-known binomial coefficient sequence ${n \choose i}$ $(0 \leqslant i \leqslant n)$ is log-concave and unimodal. We refer readers to the survey papers by Stanley~\cite{Sta} and Brenti~\cite{Bre}.

Some recent mathematical breakthroughs on log-concavity occurred in matroid theory~\cite{Wel}. Matroid theory shares common concepts in graph theory and linear algebra. More precisely, a matroid $M$ is a pair $(E, I)$, where $E$ is the finite set of elements of the matroid, and $I$ a family of subsets of $E$ called the independent sets of the matroid, satisfying (i) any subset of an independent set is independent and (ii) if $I_1$ and $I_2$ are independent and $|I_1| < |I_2|$, then there exists $e \in  I_2 -I_1$ such
that $I_1 \cup \{e \}$ is independent.  The Tutte polynomial of a matroid is a generalization of the chromatic polynomial of a graph.
The {\em Tutte polynomial} $T(M; x, y)$ of a matroid $M$ is defined as
\[
T(M; x, y) = \sum_{A \subset E} {(x-1)^{{\mbox{rk}}(E)-{\mbox{rk}}(A)}} {(y-1)^{|A|-{\mbox{rk}}(A)}}.
\]
A special case of the Tutte polynomial is the {\em characteristic polynomial}
\[
\chi(M, q) = (-1)^{{\mbox{rk}}(E)} T(M; 1-q, 0)=\sum_{A \subset E} (-1)^{|A|} q^{r-{\mbox{rk}}(A)}.
\]
June Huh, et al. proved the long-standing conjecture that the sequence of absolute values of coefficients of the characteristic polynomial $\chi(M, q)$ of a matroid is unimodal and log-concave \cite{AKH2015}, \cite{AKH}, \cite{KH}. There are two standard examples of matroids: the cycle matroid of a graph and a vector matroid defined by a matrix (this might be the origin of term matroid). Vector matroids occurs naturally in error-correcting codes as seen below.

 A vector matroid is constructed from a vector space $V=\F_q^m$. Let ${\bf{v}}_1, {\bf{v}}_2, \dots, {\bf{v}}_n$ in $V$. We take $E = \{1, 2, \dots, n \}$ and the subset $I$ of $E$ is independent if and only if $\{{\bf{v}}_i ~|~ i \in I \}$ is linearly independent in $V$. Then we can represent the elements of the vector matroid as the columns of an $m \times n$ matrix over $\F_q$ with independence as linearly independence of the columns of the matrix.
A linear $[n, k]$ code $\C$ can be described in terms of a generator matrix or a parity check matrix. Hence a linear code is related to a special vector matroid.
 The Tutte polynomial $T(M; x, y)$ of that vector matroid has a connection with the weight enumerator $W_{\C} (x, y)$ in two variables of the linear $[n, k]$ code $\C$~\cite{Cam} it comes from:
\begin{equation}\label{eq-wt-Tut}
  W_{\mathcal C}(x,y)=y^{n-{\mbox{dim}}(\C)} (x-y)^{{\mbox{dim}}(\C)} T \left( M; \frac{x+(q-1)y}{x-y}, \frac{x}{y} \right).
\end{equation}

It is easy to see that when a linear code has only even weight, its weight distribution is not log-concave in general. For example, the binary $[n, n-1, 2]$ even code has 1, 0, $\binom{n}{2}, \dots$ as the first three vectors and so the sequence is not log-concave.
 In fact, since the characteristic polynomial of a connected graph~\cite[p.65]{Rea} has nonzero coefficients except for the constant, it is natural to consider the nonzero weight distribution of a linear code. However, although the characteristic polynomial of a matroid is log-concave in general, the nonzero weight distribution of a linear codes is not always log-concave. Hence the above equation (\ref{eq-wt-Tut}) does not seem to give us proper information whether the nonzero weight distribution of a linear code is log-concave or not.
 Therefore, it appears that the determination of a log-concavity of the nonzero weight distribution of a random linear code is a very difficult problem since it is an NP-hard problem to find the exact minimum weight of a random linear code~\cite{Var}.

 Currently, the only method to determine whether a given nonzero weight distribution of a linear code is log-concave is to check actually the log-concavity condition of the definition. In general, it is complicated to check the log-concavity because the weights tend to involve several parameters and complex summations. To our surprise, after using various estimation techniques, we were able to show that several families of well-known linear codes have log-concave nonzero weight distributions.

The main purpose of the paper is to show that all binary general Hamming codes of length $2^r -1$ ($r=3$ or $r \ge 5$), the extended Hamming codes of length $2^r ~(r \ge 3)$, the second order Reed-Muller codes $R(2, m)~ (m \ge 2)$ are log-concave while the homogeneous and projective second order Reed-Muller codes are log-concave or 1-gap log-concave. Furthermore, we show that MDS codes of moderate lengths and dimensions are log-concave. Since we considered a few families of linear codes, we expect that the log-concavity topic in coding theory will stimulate many research problems in the future and will be a bridge between combinatorics and coding theory.

The paper consists of six sections. In Section 2, we recall some basic notions and definitions from coding theory. In Section 3, we show that binary Hamming codes except for one length and the binary extended Hamming codes are log-concave. Section 3 proves that the second order Reed-Muller codes are log-concave and that homogeneous and projective second order $q$-ary Reed Muller codes
are 1-gap log-concave. Section 5 shows that MDS codes of dimension at most $n/2 +3$ under some condition are log-concave. In Section 6, we conclude with a summary of obtained results, and offer some challenging open problems.



\section{Preliminaries}

We present some facts from coding theory. We refer to~\cite{GurRudSud},~\cite{HufKimSol},~\cite{HuffmanPless},~\cite{JoyKim},~\cite{MacSlo} for more details.

Let $\mathbb F_q$ be a finite field with $q$ elements, where $q=p^r$ for some prime $p$ and a positive integer $r \ge 1$. We denote the $n$-dimensional vector space over $\F_q$ by
$$\F_q^n = \{\x=(x_1, x_2, \dots, x_n) ~|~ x_i \in \F_q {\mbox{ for all }} i \}.$$

\begin{defi}{\em
A {\em linear $[n, k]$ code} (shortly {\em $[n, k]$ code}) $\mathcal C$ of length $n$ with dimension $k$ over $\F_q$ is a $k$-dimensional subspace of $\F_q^n$.
 If $q=2$, $\mathcal C$ is called a {\em binary} code.
The {\em weight} of ${\bf{x}}=(x_1, \dots, x_n) \in \F_q^n$ is the
number of nonzero coordinates, denoted by wt({\bf{x}}).  The {\em Hamming distance} $d({\bf{x}},{\bf{y}})$ between ${\bf{x}},{\bf{y}} \in \F_q^n$ is wt$({\bf{x}}-{\bf{y}})$.
The {\em minimum distance (weight) $d$} of a linear code
$\C$ is the minimum of wt({\bf {x}}), ${\bf{0 \ne x}} \in \C$. We denote it by an {\em $[n,k,d]$ code} over $\mathbb \F_p$.
A {\em Maximum Distance Separable(MDS) code} is an $[n, k, d]$ code satisfying  $d=n-k+1$.
}
\end{defi}

\begin{defi}{\em
A {\em generator matrix} for an $[n, k]$ code $\C$ over $\F_q$ is a
$k \times n$ matrix $G$ whose rows form a basis for $\C$.
A generator matrix of the form $[I_k | A]$ where $I_k$ is the $k \times k$ identity matrix is called in {\em standard form}.
Given an $[n, k]$ code $\C$ over $\F_q$, there is an $(n-k)\times n$ matrix $H$, called a {\em parity check matrix} for $\C$, defined
by
\[
\C = \{ \x \in  \F_q^n ~|~ H{\x}^T = {\bf 0} \}.
\]
}
\end{defi}

\begin{defi}{\em
The {\em dual} of $\C$ is $\C^{\perp}=\{\x \in \mathbb F_2^n~|~ \x \cdot {\bf {c}} = 0 {\mbox{ for any }} {\bf {c}} \in \C\}$, where the dot product is the usual inner product. A linear code $\C$ is called {\em self-dual} if $\C=\C^{\perp}$ and {\em self-orthogonal} if $\C \subseteq \C^{\perp}$.
}
\end{defi}

\begin{defi}{\em
Given an $[n, k, d]$ code $\C$ over $\F_q$, define $A_i$ as the number of codewords of $\C$ whose weight is $i$. The sequence $A_0, A_1, \dots, A_n$ is called the weight distribution of $\C$, denoted by $A(\C)$. The {\em weight enumerator} $A_{\C}$ is defined by
\[
A_{\C} (x, y) = \sum_{i=0}^{n} A_i x^{n-i}y^i = x^n + A_dx^{n-d}y^d + \cdots + A_ny^n.
\]
}
\end{defi}

Any $[n, k, d]$ code $\C$ has $A_0 =1$, $A_1 = A_2 = \dots = A_{d-1}=0$, and $A_d >0$.
The weight distribution $A(\C^{\perp})$ of the dual of $\C$ is related to $A(\C)$ by the following identity.

\begin{thm}(The MacWilliams identity) If $\C$ is a linear code over $\F_q$, then
\[
A_{\C^{\perp}} (x, y) = \frac{1}{|\C|} A_{\C} (x+ (q-1)y, x-y).
\]
\end{thm}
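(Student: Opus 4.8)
The plan is to derive this classical identity by discrete Fourier analysis on $\F_q^n$, via the character‑theoretic (Poisson summation) route. Fix a nontrivial additive character $\chi$ of $\F_q$; concretely one may take $\chi(a)=\exp\!\bigl(2\pi i\,\Tr(a)/p\bigr)$, where $\Tr$ is the absolute trace $\F_q\to\F_p$. For a $\mathbb{C}[x,y]$‑valued function $f$ on $\F_q^n$, define the Fourier (Hadamard) transform
\[
\hat f(\mathbf u)=\sum_{\mathbf v\in\F_q^n}\chi(\mathbf u\cdot\mathbf v)\,f(\mathbf v).
\]
The engine of the whole argument is the summation formula
\[
\sum_{\mathbf c\in\C}\hat f(\mathbf c)=|\C|\sum_{\mathbf v\in\Cperp}f(\mathbf v),
\]
which I would obtain by inserting the definition of $\hat f$, interchanging the two (finite) sums, and evaluating the inner character sum $\sum_{\mathbf c\in\C}\chi(\mathbf c\cdot\mathbf v)$.

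The crux is precisely this inner sum. For a fixed $\mathbf v$, the map $\mathbf c\mapsto\mathbf c\cdot\mathbf v$ is an $\F_q$‑linear functional on $\C$. If $\mathbf v\in\Cperp$ this functional is identically zero, so the sum equals $|\C|$. Otherwise the functional is surjective onto $\F_q$ and takes each value the same number of times, so the sum is a nonzero multiple of $\sum_{a\in\F_q}\chi(a)=0$ and hence vanishes. Feeding this dichotomy back into the interchanged double sum leaves exactly $|\C|\sum_{\mathbf v\in\Cperp}f(\mathbf v)$, proving the displayed formula.

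With the formula in hand, I would specialize $f$ to the rank‑one tensor $f(\mathbf v)=\prod_{i=1}^{n}g(v_i)$, where $g\colon\F_q\to\mathbb{C}[x,y]$ sends $0\mapsto x$ and every nonzero element to $y$. Then $f(\mathbf v)=x^{\,n-\wt(\mathbf v)}y^{\wt(\mathbf v)}$, so $\sum_{\mathbf v\in\Cperp}f(\mathbf v)=A_{\Cperp}(x,y)$. Because $f$ factors over coordinates, so does its transform, $\hat f(\mathbf u)=\prod_{i=1}^{n}\hat g(u_i)$, and a single one‑variable computation gives $\hat g(0)=x+(q-1)y$ while $\hat g(u)=x-y$ for $u\neq 0$ (using $\sum_{a\neq 0}\chi(ua)=-1$). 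Therefore $\hat f(\mathbf c)=(x+(q-1)y)^{\,n-\wt(\mathbf c)}(x-y)^{\wt(\mathbf c)}$, and summing over $\C$ collapses to $A_{\C}\bigl(x+(q-1)y,\,x-y\bigr)$. Substituting both evaluations into the summation formula and dividing by $|\C|$ yields the claimed identity.

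The only genuinely delicate point is the vanishing of the inner character sum off $\Cperp$, which hinges on the orthogonality of additive characters; everything else is bookkeeping about products and one‑dimensional transforms. I would keep in reserve the character‑free combinatorial proof via Krawtchouk polynomials and double counting of coordinate patterns, but the Fourier argument is cleaner and works uniformly over every $\F_q$, so I favor it here.
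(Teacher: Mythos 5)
Your proof is correct and complete, but there is nothing in the paper to compare it against: the paper states the MacWilliams identity as a known classical theorem and gives no proof at all (it implicitly defers to the standard references such as MacWilliams--Sloane and Huffman--Pless cited in the Preliminaries). What you have written is precisely the standard character-theoretic (discrete Fourier / Poisson summation) argument from those references, and every delicate point is handled properly: the dichotomy for the inner sum $\sum_{\mathbf{c}\in\mathcal{C}}\chi(\mathbf{c}\cdot\mathbf{v})$ rests correctly on the fact that for $\mathbf{v}\notin\mathcal{C}^{\perp}$ the functional $\mathbf{c}\mapsto\mathbf{c}\cdot\mathbf{v}$ is a nonzero, hence surjective, linear map $\mathcal{C}\to\mathbb{F}_q$ with all fibers of size $|\mathcal{C}|/q$, so the sum is $(|\mathcal{C}|/q)\sum_{a\in\mathbb{F}_q}\chi(a)=0$; the transform of the product function factors coordinatewise; and the one-variable evaluations $\hat{g}(0)=x+(q-1)y$ and $\hat{g}(u)=x-y$ for $u\neq 0$ follow from $\sum_{a\neq 0}\chi(ua)=-1$. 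Substituting into the summation formula and dividing by $|\mathcal{C}|$ gives the identity exactly as stated, with the paper's normalization (the $\frac{1}{|\mathcal{C}|}$ on the right-hand side) coming out correctly.
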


In this paper, we are mainly interested in the nonzero weight distribution defined as follows.

\begin{defi}{\em
Let $A(\C)$ be the weight distribution of $\C$. The {\em nonzero weight distribution} of $\C$, denoted by $a(\C)$, is the subsequence of $A(\C)$ consisting of the nonzero values of $A(\C)$ in the order of appearance in $A(\C).$ For instance, the first two elements of $a(\C)$ are $a_0=1$ and $a_1=A_d$.
}
\end{defi}

\begin{defi}{\em
A sequence $a_0, a_1, \dots, a_n$ of real numbers is called {\em log-concave} if $a_i^2 \ge a_{i-1}a_{i+1}$ for all $1 \le i \le n-1$.}
\end{defi}

\begin{defi}{\em
A sequence $A=\{a_k \}_{k=0}^n$ is called {\em unimodal} if there is an index $0 \le j \le n$ such that $a_0 \le \cdots  \le a_{j-1} \le a_j \ge  a_{j+1} \ge \cdots  \ge a_n$.
}
\end{defi}

\begin{defi}{\em
A linear code $\mathcal C$ is called {\em log-concave} if its nonzero weight distribution is log-concave and {\em unimodal} if its nonzero weight distribution is unimodal.}
\end{defi}

The log-concavity property goes back to Newton as the next result shows.

\begin{thm}(\cite{Sta}) \label{thm-binom}
Let $P(x) = \sum_{i=0}^{n} \binom{n}{i}a_i x^i$ be a (real) polynomal with real zeroes. Then $a_j^2 \geqslant a_{j-1} a_{j+1}$ for $j=1, \dots, n-1$.
\end{thm}

\begin{ex}{\em
Let $\C$ be the binary $[4, 3, 2]$ code consisting of all even weight vectors.
Then its weight distribution is 1, 0, 6, 0, 1. Since $0^2 - 1 \cdot 6 <0$, this sequence is not log-concave. However, if we consider only nonzero values 1, 6, 1, then it is log-concave and unimodal. Hence the code is log-concave and unimodal. Also note that $P(x)=1 + \binom{2}{1} \cdot 6 + x^2 = 1 + 12 x + x^2$ has two real roots since $D=12^2 - 4 >0$. This also confirms by Theorem~\ref{thm-binom} that $1, 6, 1$ is log-concave.
}
\end{ex}

In order to consider non log-concave sequences, we introduce the notion of
an $m$-gap log-concave sequence as follows.

\begin{defi}{\rm
	A sequence $A_i$ of integers is {\em $m$-gap log-concave} if $A_i^2 \ge A_{i-1}A_{i+1}$ does not holds for exactly $m$ indices. A linear code $\mathcal{C}$ is {\em $m$-gap log-concave} if its nonzero weight distribution is $m$-gap log-concave.
}
\end{defi}
We note that $0$-gap log-concavity is the same as log-concavity, and that $1$-gap log-concavity means that the log-concavity holds except for one index.

Little is known about the property of the nonzero weight distribution of a linear code.
In particular, we want to know which linear codes are log-concave. As a warm-up, we consider some trivial codes, and show that they are log-concave.

\begin{prop}\label{prop-easy}
\noindent
\begin{enumerate}

\item[{(i)}] The whole space $\mathbb F_2^n$ with $n \geqslant 2$ is log-concave.

\item[{(ii)}] The binary even $[n, n-1]$ code $\mathcal E_n$ for any $n \geqslant 4$ is log-concave.

\end{enumerate}
\end{prop}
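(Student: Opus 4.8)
The plan is to compute the (nonzero) weight distributions explicitly in both cases and then reduce each claim to an elementary inequality among binomial coefficients.

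For (i), the whole space $\mathbb{F}_2^n$ contains exactly $\binom{n}{i}$ vectors of weight $i$ for each $0 \le i \le n$, so $A_i = \binom{n}{i}$ and every entry is nonzero; thus $a(\mathbb{F}_2^n)$ is the full binomial sequence $\binom{n}{0}, \dots, \binom{n}{n}$. Log-concavity then amounts to the classical inequality $\binom{n}{i}^2 \ge \binom{n}{i-1}\binom{n}{i+1}$. I would prove this directly from the two ratio identities $\binom{n}{i}/\binom{n}{i-1} = (n-i+1)/i$ and $\binom{n}{i}/\binom{n}{i+1} = (i+1)/(n-i)$, whose product equals $(n-i+1)(i+1)/\bigl(i(n-i)\bigr) = 1 + (n+1)/\bigl(i(n-i)\bigr) > 1$ for $1 \le i \le n-1$. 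Note that Newton's inequalities (Theorem~\ref{thm-binom}) do not apply cleanly here, since writing $(1+x)^n = \sum_i \binom{n}{i}\cdot 1 \cdot x^i$ yields only the trivial bound $1 \ge 1$.

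For (ii), the even code $\mathcal{E}_n$ consists of all even-weight vectors, so $A_i = \binom{n}{i}$ when $i$ is even and $A_i = 0$ when $i$ is odd. Hence the nonzero distribution $a(\mathcal{E}_n)$ is the thinned sequence $\binom{n}{0}, \binom{n}{2}, \binom{n}{4}, \dots$, and log-concavity becomes $\binom{n}{k}^2 \ge \binom{n}{k-2}\binom{n}{k+2}$ for even $k$ in the range $2 \le k \le n-2$. Setting $m = n-k$, I would clear denominators and rewrite the target as
\[
(m+1)(m+2)(k+1)(k+2) \ge m(m-1)k(k-1).
\]
In the stated range both $m \ge 2$ and $k \ge 2$, so all factors are nonnegative, and since $(m+1)(m+2) > m(m-1)$ and $(k+1)(k+2) > k(k-1)$ (each difference being positive), multiplying the two inequalities yields the claim (in fact strictly).

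The only genuine subtlety is part (ii): thinning out the odd-index terms of a log-concave sequence need not preserve log-concavity in general, so one cannot simply invoke the known log-concavity of the binomial coefficients. The argument must exploit the explicit multiplicative structure of the binomial ratios, as above. One must also be attentive to the index range --- where the nonzero subsequence begins and ends --- so that the log-concavity inequalities are tested only at interior positions, where $\binom{n}{k-2}$ and $\binom{n}{k+2}$ are themselves nonzero.
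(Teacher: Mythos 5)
Your proof is correct and takes essentially the same route as the paper: both arguments reduce part (ii) to the ratio inequality $\binom{n}{k}^2 \geqslant \binom{n}{k-2}\binom{n}{k+2}$ and verify it factor by factor (your $(m+1)(m+2) > m(m-1)$ and $(k+1)(k+2) > k(k-1)$ are exactly the paper's two displayed ratio factors), while for part (i) the paper simply cites the known log-concavity of the binomial sequence where you prove it via the same ratio computation. One side remark is off, though it does not affect your argument: thinning a \emph{strictly positive} log-concave sequence does preserve log-concavity, since $a_{k-1}^2 \geqslant a_{k-2}a_k$ and $a_{k+1}^2 \geqslant a_k a_{k+2}$ give $a_{k-1}^2 a_{k+1}^2 \geqslant a_k^2\, a_{k-2}a_{k+2}$, while $(a_{k-1}a_{k+1})^2 \leqslant a_k^4$, so dividing by $a_k^2 > 0$ yields $a_{k-2}a_{k+2} \leqslant a_k^2$; the failure of thinning only occurs for sequences with internal zeros, so part (ii) could in fact be deduced from the positivity and log-concavity of the full binomial sequence.
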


\begin{proof}

(i) The weight distribution of $\mathbb F_2^n$ is the binomial distribution
${n \choose 0},  {n \choose 1}, \dots, {n \choose i }, \dots, {n \choose n}$ where all values are positive. Since the binomial distribution is log-concave, $\mathbb F_2^n$ is log-concave.

(ii) Note that $\mathcal E_n$ consists of all even weight vectors of length $n$. So the nonzero weight distribution of $\mathcal E_n$ is the sequence  ${n \choose 0},  {n \choose 2}, \dots, {n \choose 2i }, \dots$. If $n$ is even, then as the all-ones vector is in $\mathcal E_n$,  this sequence is symmetric. Hence we may assume that $2 \leqslant 2i \leqslant n/2 -1$. On the other hand, if $n$ is odd,  we assume that $2 \leqslant 2i \leqslant n-2$.
Then for each case of $n$ we have the following computation.
\[
 {n \choose 2i}^2 \geqslant  {n \choose 2i+2}{n \choose 2i - 2}
    \Leftrightarrow
  \frac{(2i+2)(2i+1)}{(2i)(2i-1)} \cdot \frac{(n-2i+2)(n-2i+1)}{(n-2i)(n-2i-1)} >1.
\]
Hence, the claim of (ii) of Proposition~\ref{prop-easy} is true.
\end{proof}


Next, we investigate the simplex codes, the binary Hamming codes of length 7 and 15, the extended Hamming codes of these lengths, the binary Golay codes of length 23 and 24 because they all belong to famous families of linear codes with known weight distributions.




\begin{ex}{\em
The $q$-ary simplex $[n, r]$ code $\mathcal S_{n,q}$ over $\F_q$ has the nonzero weight distribution 1, $q^r -2$, 1. Hence  $\mathcal S_{n,q}$ is log-concave.
}
\end{ex}

\begin{ex} \label{ex-hamming}{\em
The binary Hamming $[7,4,3]$ code $\mathcal H_3$ has the nonzero weight distribution 1, 7, 7, 1 which is log-concave. Its extended code has the nonzero weight distribution 1, 14, 1 which is log-concave.

However, the binary Hamming $[15, 11, 3]$  code $\mathcal H_4$ has the weight distribution
$$1, 0, 0,  35, 105, 168, 280, 435, 435, 280, 168, 105, 35, 0, 0, 1.$$
Then $168^2 -105\times 280=-1176$, which implies that the nonzero weight distribution of $\mathcal H_4$ is not log-concave. Hence $\mathcal H_4$ is 2-gap log-concave.
On the other hand, its extended code $\bar{\mathcal H_4}$ has the log-concave property because its nonzero weight distribution is $1, 140, 448, 870, 448, 140, 1$.

 The weight distributions of $\mathcal{H}_5$ and $\mathcal{H}_6$ are known as sequences denoted by A010086 and A010087 in OEIS~\cite{Oeis}, respectively. The half of the weight distribution of $\mathcal H_5$ is given in Table~\ref{tab-H5} and the logarithmic distribution is in Figure~\ref{fig-H5}.

\begin{table}
\begin{center}
\begin{tabular}{|c|l|c|l|c|l|}
 \hline Weight $i$ & Frequency $A_i$ &Weight $i$ & Frequency $A_i$ & Weight $i$ & Frequency $A_i$ \\
 \hline
0 & 1 & 7 & 82615  &  12 & 4414865 \\
3 &155 & 8 & 247845 & 13 & 6440560 \\
4 &1085  & 9 & 628680 & 14 & 8280720 \\
5 & 5208 & 10 & 1383096  &  15 & 9398115 \\
6 & 22568 & 11 & 2648919 &   16 & 9398115  \\
\hline
\end{tabular}
\caption{Weight distribution of the binary Hamming code of length 31}
\label{tab-H5}
\end{center}
\end{table}

	\begin{figure}
	\centering
	\includegraphics[scale=0.8]{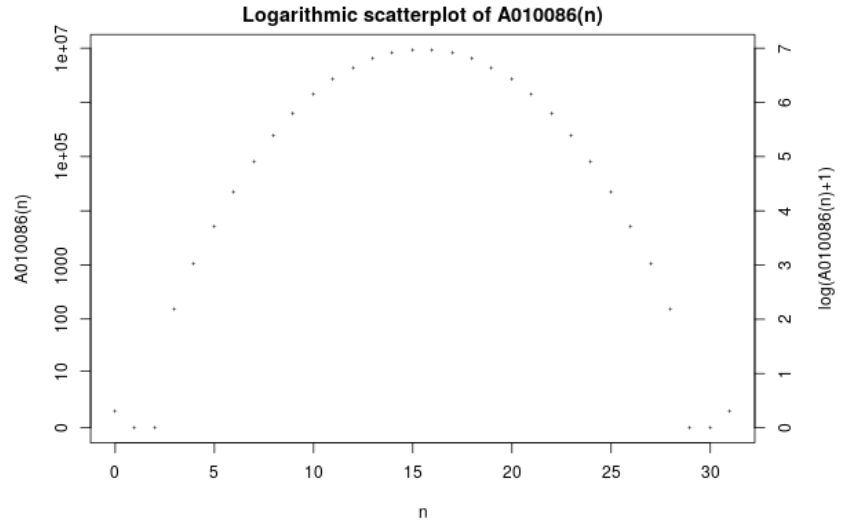}
	\caption{Logarithmic distribution of Table~\ref{tab-H5} from OEIS}
    \label{fig-H5}
	\end{figure}

  One can check easily that each binary Hamming code $\mathcal{H}_m$ for $m=5, 6$ is log- concave.

}
\end{ex}

\begin{ex}{\em
The binary Golay code of length 23 has the weight distribution

$$1, 0, 0, 0, 0, 0, 0, 253, 506, 0, 0, 1288, 1288, 0, 0, 506, 253, 0, 0, 0, 0, 0, 0, 1.$$

Since $506^2 - 253\times 1288=-69828$, the binary Golay code is not log-concave but 2-gap log-concave.

However, the extended binary Golay code of length 24 satisfies the log-concave property since its nonzero weight distribution is
$1, 759, 2576, 759, 1$.
}
\end{ex}

Based on these examples, it appears that there are not many families of linear codes with log-concave nonzero weight distribution. However, in the following sections, we show that the binary Hamming codes (except for the above example), the binary extended Hamming codes, and the second order Reed-Muller codes are all log-concave.




\section{Hamming codes}

	It is well known that the binary Hamming code $\mathcal{H}_{m}$ is the dual of the binary Simplex code $\mathcal{S}_{m}$ and the binary extended Hamming code $\overline{\mathcal{H}}_m$ is the dual of the first order Reed-Muller code.
Using the MacWillams identity, the weight enumerators of $\mathcal{H}_{m}$ and
$\overline{\mathcal{H}}_m$ are known in terms of Krawtchouk polynomials or recursively (see \cite[Example 7.3.4]{HuffmanPless},  \cite[p. 129]{MacSlo}, \cite[Remark 15.1.69]{Handbook-Finite-Field}). For our purpose, we give an explicit formula for these weight enumerators.
\begin{lem}
	Let $\mathcal{H}_m$ and $\overline{\mathcal{H}}_m$ be the binary Hamming $[n=2^m-1, n-m, 3]$ code and its extended code, respectively.
	Then
	\[ (i)~ A_{w}(\mathcal{H}_m) =
	\begin{cases}
		 \frac{1}{n+1} \left( \binom{n}{2i} + (-1)^i n \binom{(n-1)/2}{i} \right) & {\mbox{if }} w = 2i \geqslant 0, \\
		 \frac{1}{n+1} \left( \binom{n}{2i+1} - (-1)^i n \binom{(n-1)/2}{i} \right) & {\mbox{if }} w = 2i+1 \geqslant 1,
	\end{cases}
	\]
	and
	\[ (ii)~ A_{w}(\overline{\mathcal{H}}_m) =
	\begin{cases}
\frac{1}{n} \left( \binom{n}{2i} + (-1)^i (n-1) \binom{n/2}{i} \right) & {\mbox{if }} w = 2i \geqslant 0,  \\
   0 & {\mbox{if }} w = 2i+1 \geqslant 1.
   \end{cases}
   \]
\end{lem}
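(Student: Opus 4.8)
The plan is to obtain both weight distributions by passing to the dual code and invoking the MacWilliams identity, since in each case the dual has a completely explicit (two-term or three-term) weight enumerator.

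For (i), recall that $\mathcal{H}_m^{\perp}$ is the binary simplex code $\mathcal{S}_m$, an $[n,m]$ code whose $n = 2^m-1$ nonzero codewords all have weight $2^{m-1} = (n+1)/2$. Hence $A_{\mathcal{S}_m}(x,y) = x^n + n\,x^{(n-1)/2}y^{(n+1)/2}$, using $n-2^{m-1} = (n-1)/2$. Applying the MacWilliams identity with $q=2$ and $|\mathcal{S}_m| = 2^m = n+1$ gives
\[
A_{\mathcal{H}_m}(x,y) = \frac{1}{n+1}\left[(x+y)^n + n\,(x+y)^{(n-1)/2}(x-y)^{(n+1)/2}\right].
\]
The one algebraic trick I would use is the factorization $(x+y)^{(n-1)/2}(x-y)^{(n+1)/2} = (x-y)(x^2-y^2)^{(n-1)/2}$, which avoids Krawtchouk sums entirely. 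Expanding $(x^2-y^2)^{(n-1)/2}$ by the binomial theorem and multiplying by $(x-y)$, I then read off the coefficient of $x^{n-w}y^w$, splitting into $w=2i$ and $w=2i+1$. The leftover factor $(x-y)$ is precisely what produces the opposite signs on the $n\binom{(n-1)/2}{i}$ term in the two cases, while setting $j=i$ in both cases yields that binomial.

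For (ii), the dual $\overline{\mathcal{H}}_m^{\perp}$ is the first-order Reed--Muller code $R(1,m)$, of length $N=2^m$ and size $2^{m+1}$, with weights only $0$, $N/2$, $N$ and $A_{N/2}(R(1,m)) = 2^{m+1}-2$. Thus $A_{R(1,m)}(x,y) = x^N + (2^{m+1}-2)\,x^{N/2}y^{N/2} + y^N$, and the MacWilliams identity gives
\[
A_{\overline{\mathcal{H}}_m}(x,y) = \frac{1}{2^{m+1}}\left[(x+y)^N + (x-y)^N + (2^{m+1}-2)(x^2-y^2)^{N/2}\right].
\]
Every term here is even in $y$, so all odd-weight frequencies vanish, which gives the second line of (ii). For $w=2i$ the coefficient of $x^{N-2i}y^{2i}$ is $\frac{1}{2^{m+1}}\big[2\binom{N}{2i}+(2^{m+1}-2)(-1)^i\binom{N/2}{i}\big]$, which collapses to $\frac{1}{N}\big[\binom{N}{2i}+(N-1)(-1)^i\binom{N/2}{i}\big]$ and matches the stated formula once the symbol $n$ in part (ii) is read as the extended length $N=2^m$ (this is the natural reading, since $N/2=2^{m-1}$ is an integer whereas $(2^m-1)/2$ is not).

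I do not anticipate a genuine obstacle: both halves reduce to extracting a single binomial coefficient after the dual enumerators are substituted. The only places that need care are the clean factorization with $a=(n-1)/2$, $b=(n+1)/2$ that makes the even/odd split transparent, the sign bookkeeping during coefficient extraction, and the shift in the meaning of $n$ between parts (i) and (ii).
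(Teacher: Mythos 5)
Your proposal is correct and follows essentially the same route as the paper's own proof: both parts pass to the duals (the simplex code $\mathcal{S}_m$ and the first-order Reed--Muller code $\mathcal{R}(1,m)$), apply the MacWilliams identity, and in part (i) use exactly the factorization $(x+y)^{(n-1)/2}(x-y)^{(n+1)/2}=(x-y)(x^2-y^2)^{(n-1)/2}$ before reading off coefficients. Your remark that the symbol $n$ means $2^m-1$ in (i) but $2^m$ in (ii) is also the reading the paper intends.
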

\begin{proof}
\begin{itemize}
		\item[{(i)}] Since the weight enumerator of $\mathcal S_m$ is $x^n + nx^{(n-1)/2} y^{(n+1)/2}$, the weight enumerator of $\mathcal{H}_m$ is, by the MacWilliams identity,
		\[ W_{\mathcal{H}_m}(x,y) = \frac{1}{n+1} \left( (x+y)^n + n(x+y)^{(n-1)/2} (x-y)^{(n+1)/2} \right), \]
		where $n = 2^m - 1$. By the binomial expansion, we have
		\begin{align*}
			(x+y)^{(n-1)/2} (x-y)^{(n+1)/2} & = (x-y)(x^2 - y^2)^{(n-1)/2} \\
& = (x-y) \sum_{i=0}^{(n-1)/2} \binom{(n-1)/2}{i} (-1)^i x^{n-1-2i} y^{2i} \\
			& = \sum_{i=0}^{(n-1)/2} \binom{(n-1)/2}{i} (-1)^i x^{n-2i} y^{2i} \\
			& ~~~ - \sum_{i=0}^{(n-1)/2} \binom{(n-1)/2}{i} (-1)^i x^{n-1-2i} y^{2i+1}.
		\end{align*}
		Hence, \[ A_{2i}(\mathcal{H}_m) = \frac{1}{n+1} \left( \binom{n}{2i} + (-1)^i n \binom{(n-1)/2}{i} \right) \]
		and \[ A_{2i+1}(\mathcal{H}_m) = \frac{1}{n+1} \left( \binom{n}{2i+1} - (-1)^i n \binom{(n-1)/2}{i} \right). \]
		
		\item[{(ii)}] Since the weight enumerator of the first order Reed-Muller code $\mathcal{R}(1,m)$ is
		\[ x^n + y^n + (2n-2) x^{n/2} y^{n/2}, \]
		where $n = 2^m$, the weight enumerator of $\overline{\mathcal{H}}_m$ is
		\[ W_{\overline{\mathcal{H}}_m}(x,y) = \frac{1}{2n} \left( (x+y)^n + (x-y)^n + (2n-2) (x+y)^{n/2} (x-y)^{n/2} \right). \]
		That is to say $A_{2i+1}(\overline{\mathcal{H}}_m) = 0$ and
		\[ A_{2i}(\overline{\mathcal{H}}_m) = \frac{1}{n} \left( \binom{n}{2i} + (-1)^i (n-1) \binom{n/2}{i} \right). \]
	\end{itemize}
	This completes the proof.
\end{proof}

The following lemma lists some well known properties of binomial coefficients which are used in the proofs of the next theorems.

\begin{lem} \label{lem-nchoosek}
	 We have the following properties for the binomial coefficients:
	\begin{enumerate}
		\item[(i)] If $1 \leqslant k \leqslant n-1$, then we have the Pascal's Triangle
		\[ \binom{n}{k} = \binom{n-1}{k} + \binom{n-1}{k-1}. \]
		
		\item[(ii)] If $1 \leqslant k \leqslant n-1$, then
		\[ \binom{n}{k} = \frac{n-k+1}{k} \binom{n}{k-1} = \frac{n}{n-k} \binom{n-1}{k} \quad \textnormal{and} \quad k\binom{n}{k} = n\binom{n-1}{k-1}. \]
		
		\item[(iii)] For given positive integers $m$ and $n$, we have the Vandermonde identity
		\[ \sum_{r=0}^{k} \binom{m}{r} \binom{n}{k-r} = \binom{m+n}{k}. \]
		In particular, for any odd $n$ and positive integer $i$,
		\[ \binom{n}{2i} > \binom{(n-1)/2}{i} \binom{(n+1)/2}{i}. \]
	\end{enumerate}
\end{lem}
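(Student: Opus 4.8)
The plan is to prove both the Vandermonde identity and its specialization separately, since the first is standard and the second is the genuinely new inequality we need.

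For the Vandermonde identity itself, I would give the classical combinatorial double-counting argument. Consider a set of $m+n$ objects partitioned into a block of $m$ objects and a block of $n$ objects. The number of ways to choose $k$ objects from the whole set is $\binom{m+n}{k}$. On the other hand, any such selection picks some number $r$ of objects from the first block (with $0 \leqslant r \leqslant k$) and the remaining $k-r$ from the second block, giving $\sum_{r=0}^{k} \binom{m}{r} \binom{n}{k-r}$ ways. Equating the two counts yields the identity. (Alternatively, one compares the coefficient of $x^k$ on both sides of $(1+x)^m (1+x)^n = (1+x)^{m+n}$.)

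For the specialization, let $n$ be odd and write $a = (n-1)/2$ and $b = (n+1)/2$, so that $a + b = n$ and $b = a+1$. Applying the Vandermonde identity with $m$ and $n$ both replaced by these halves gives
\[
\binom{n}{2i} = \binom{a+b}{2i} = \sum_{r=0}^{2i} \binom{a}{r} \binom{b}{2i-r}.
\]
The term $r = i$ in this sum is exactly $\binom{a}{i}\binom{b}{i} = \binom{(n-1)/2}{i}\binom{(n+1)/2}{i}$, which is the right-hand side of the desired inequality. Since every summand is a nonnegative integer and the full sum contains the single term $r=i$ together with other terms that are strictly positive for appropriate ranges of $i$, the inequality $\binom{n}{2i} > \binom{(n-1)/2}{i}\binom{(n+1)/2}{i}$ follows, provided at least one additional summand is strictly positive.

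The main obstacle, and the point requiring care, is justifying that the inequality is \emph{strict} rather than merely an equality: one must exhibit at least one index $r \neq i$ with $\binom{a}{r}\binom{b}{2i-r} > 0$. This needs the positivity of $i$ (so that, for instance, the term $r = i-1$ contributes $\binom{a}{i-1}\binom{b}{i+1}$, which is positive whenever $i \geqslant 1$ and $i+1 \leqslant b$) and a short check that $2i$ lies in the admissible range $0 \leqslant 2i \leqslant n$ so that $\binom{n}{2i}$ is itself nonzero. I would dispatch this by fixing the neighboring term $r = i-1$ and verifying it is strictly positive under the stated hypothesis $i \geqslant 1$, which forces the strict inequality.
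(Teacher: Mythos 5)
Your proposal is correct, and there is in fact nothing in the paper to compare it against: the authors state Lemma~\ref{lem-nchoosek} as a list of well-known facts and give no proof at all, so your double-counting proof of the Vandermonde identity and the term-extraction argument for the strict inequality supply exactly the routine verification they omitted. One point deserves sharpening, though. Your final sentence suggests that $i \geqslant 1$ alone forces strictness, but as you yourself note earlier, the positivity of the $r = i-1$ term $\binom{a}{i-1}\binom{b}{i+1}$ requires $i+1 \leqslant b = (n+1)/2$, i.e.\ $2i \leqslant n-1$; this condition is not cosmetic. If $i > (n-1)/2$, then both $\binom{n}{2i}$ and $\binom{(n-1)/2}{i}\binom{(n+1)/2}{i}$ vanish and the claimed strict inequality $0 > 0$ fails, so the lemma as literally stated ("any odd $n$ and positive integer $i$") is false without the implicit restriction $2i < n$. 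This is a defect of the paper's statement rather than of your argument — in every application in the paper (the Hamming-code estimates) one has $2i \leqslant n-5$, so the restriction holds — but a complete write-up should make the hypothesis $i \leqslant (n-1)/2$ explicit and cite it where you invoke positivity of the $r=i-1$ term.
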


In Example~\ref{ex-hamming} we showed that if $m=3, 5, 6$ then $\mathcal{H}_m$ is log-concave. Furthermore, we show that $\mathcal{H}_m$ is log-concave for any $m \geqslant 7$.
\begin{thm}
	If $m=3$ or $m \geqslant 5$, the binary Hamming code $\mathcal{H}_m$ is log-concave.
\end{thm}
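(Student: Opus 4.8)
The plan is to work directly from the explicit formula in the preceding Lemma. First I would record the symmetry $A_w(\mathcal{H}_m) = A_{n-w}(\mathcal{H}_m)$, which holds because $\mathcal{H}_m$ contains the all-ones codeword; together with the bound $\binom{n}{w} > n\binom{(n-1)/2}{\lfloor w/2\rfloor}$ (valid for $3 \le w \le n-3$ and coming from the Vandermonde estimate of Lemma~\ref{lem-nchoosek}(iii)), this shows that the only vanishing weights among $A_0,\dots,A_n$ are $w \in \{1,2,n-2,n-1\}$, so the nonzero weight distribution is $1, A_3, A_4, \dots, A_{n-3}, A_n$. By symmetry it then suffices to verify $A_w^2 \ge A_{w-1}A_{w+1}$ for $4 \le w \le (n-1)/2$, together with the single boundary inequality $A_3^2 \ge A_4$ coming from the leading term $a_0 = 1$. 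The boundary case is immediate, since the formula collapses to $A_3 = n(n-1)/6$ and $A_4 = n(n-1)(n-3)/24$, whence $A_3^2/A_4 = 2n(n-1)/\big(3(n-3)\big) > 1$.

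For the generic inequalities I would write $(n+1)A_w = \binom{n}{w} + \delta_w$, where the signed correction is $\delta_w = (-1)^{\lceil w/2 \rceil}\, n\binom{(n-1)/2}{\lfloor w/2\rfloor}$. Expanding $(n+1)^2\big(A_w^2 - A_{w-1}A_{w+1}\big)$ splits it into three pieces: the pure binomial part, which by a one-line computation equals the strictly positive $\binom{n}{w}^2 \frac{n+1}{(w+1)(n-w+1)}$; a pure-correction part $\delta_w^2 - \delta_{w-1}\delta_{w+1}$, which (using that $\delta_{w-1}$ and $\delta_{w+1}$ always carry opposite signs) equals a manifestly positive quantity of the shape $B(B+B')$ with $B, B' > 0$; and a mixed cross term $2\delta_w\binom{n}{w} - \delta_{w-1}\binom{n}{w+1} - \delta_{w+1}\binom{n}{w-1}$. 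Since the first two pieces are positive, the entire difficulty is concentrated in the sign and size of the cross term.

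The sign of the cross term is governed by $w \bmod 4$: for $w \equiv 0$ or $3 \pmod 4$ it is nonnegative and there is nothing to prove, whereas the dangerous residues are $w \equiv 1, 2 \pmod 4$, which are exactly the weights at which $\delta_w < 0$ (so $A_w$ is depressed) while a neighbour is elevated. A direct estimate shows the even dangerous case $w \equiv 2 \pmod 4$ contributes a cross term bounded by $2\,n\binom{(n-1)/2}{w/2}\binom{n}{w}$, which the binomial part easily dominates; the binding family is therefore $w \equiv 1 \pmod 4$, where the cross term has size $\approx \frac{n-w}{w+1}\,n\binom{(n-1)/2}{(w-1)/2}\binom{n}{w}$. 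To control it I would lower-bound the ratio $R_w = \binom{n}{w}\big/\big(n\binom{(n-1)/2}{\lfloor w/2\rfloor}\big)$ through the single Vandermonde term $\binom{n}{w} > \binom{(n-1)/2}{\lfloor w/2\rfloor}\binom{(n+1)/2}{\lceil w/2\rceil}$ of Lemma~\ref{lem-nchoosek}(iii), and reduce the binding family to the sufficient inequality $R_w\,\frac{n+1}{(w+1)(n-w+1)} + R_w^{-1} \ge 2 + \frac{n-w}{w+1}$, which holds with room to spare once $n$ is large.

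The main obstacle is the smallest dangerous weight $w = 5$, which is exactly where $\mathcal{H}_4$ fails ($168^2 < 105 \cdot 280$ in Example~\ref{ex-hamming}), so any argument must extract the correct threshold there. Specializing the estimate above to $w = 5$, the dominant positive contribution $\binom{n}{5}^2\frac{n+1}{6(n-4)}$ overtakes the dominant negative cross contribution $n\binom{(n-1)/2}{2}\binom{n}{6}$ once $n$ grows, and a careful accounting of the remaining lower-order terms shows the difference is positive for all $n = 2^m - 1 \ge 127$, that is, $m \ge 7$. The larger dangerous weights $w \ge 6$ carry larger margins and the residues $w \equiv 0, 3 \pmod 4$ are free, so this settles $m \ge 7$; the two remaining cases $m = 5, 6$, where the margin at $w = 5$ is too thin for the crude Vandermonde bound to detect, are already verified numerically in Example~\ref{ex-hamming}, completing the stated range $m = 3$ or $m \ge 5$.
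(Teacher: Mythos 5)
Your proposal is correct and is essentially the paper's own argument: the same explicit weight formula, the same decomposition of $(n+1)^2\bigl(A_w^2-A_{w-1}A_{w+1}\bigr)$ into a strictly positive binomial part $\binom{n}{w}^2\frac{n+1}{(w+1)(n-w+1)}$, a positive pure-correction part, and a sign-varying cross term, the same parity analysis deciding when the cross term is dangerous, the same Vandermonde estimate from Lemma~\ref{lem-nchoosek}(iii), the same threshold $m\geqslant 7$, and the same fallback to the numerical verification of Example~\ref{ex-hamming} for $m=5,6$. The only cosmetic difference is that you work over the lower half $4\leqslant w\leqslant (n-1)/2$ with both parities of $w$ (binding case $w=5$), whereas the paper works with even weights $w=2i$ over the whole range split by the parity of $i$, so its binding case is the mirror image $w=n-5$ (this is exactly where its key inequality $\frac{1}{n-4}\binom{(n+1)/2}{3}>2n$ is tight); the two reductions are equivalent under the symmetry $A_w=A_{n-w}$.
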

\begin{proof}
Let  $n=2^m-1$.	We may assume that $m \geqslant 7$. Letting $B_i = (n+1)A_i(\mathcal{H}_m)$ for all $i \leqslant n$, we have $B_i = 0$ if and only if $i = 1,2,n-1,n-2$ (\cite[Ch.6, p.157, E2]{MacSlo}).
 Since the all-ones vector is in $\mathcal{H}_m$, the sequence $B_i$ is symmetric, hence, $B_{2i} = B_{n-2i}$.
 If $2 \leqslant i \leqslant \lfloor n/2 \rfloor$, then since $n$ is odd,
it is enough to show that $B_{2i}^2 - B_{2i-1} B_{2i+1} >0$ in order to show the sequence of $A_i$s is log-concave.

First, we claim that if $m \geqslant 4$, then $A_3^2(\mathcal{H}_m) \geqslant A_4(\mathcal{H}_m) A_0(\mathcal{H}_m)$, where
\[ A_3(\mathcal{H}_m) = \frac{\binom{n}{3} + \binom{n}{2}}{n+1}, \ A_4(\mathcal{H}_m) = \frac{\binom{n}{4}+n\binom{(n-1)/2}{2}}{n+1}. \]
That is, we need to show
\[ \left( \binom{n}{3} + \binom{n}{2} \right)^2 \geqslant (n+1)\left( \binom{n}{4} + n \binom{(n-1)/2}{2}\right). \]
Let $n'' = (n-1)/2$. If $n \geqslant 5$, we have $(n-2)(n-3) \geqslant 3(n-3) = 6(n''-1)$ and $n-1= 2n''$.
Thus,
\[ \binom{n}{4} = \frac{n(n-1)(n-2)(n-3)}{24} \geqslant \frac{nn''(n''-1)}{2} = n \binom{(n-1)/2}{2}. \]
Also, if $n \geqslant 5$, we have $n(n-1) \geqslant 3(n+1)$ and
\[ \binom{n}{3} = \frac{n(n-1)(n-2)}{6} \geqslant \frac{3(n+1)(n-2)}{6} > \frac{(n+1)(n-3)}{2}. \]
Combining these two inequalities and using Lemma \ref{lem-nchoosek} (ii), we have
\begin{align*}
	\left( \binom{n}{3} + \binom{n}{2} \right)^2 & > \binom{n}{3}^2 = \frac{4}{n-3} \binom{n}{3} \binom{n}{4} > 2(n+1)\binom{n}{4} \\
	& > (n+1)\left( \binom{n}{4} + n \binom{(n-1)/2}{2}\right),
\end{align*}
which implies that $A_3^2(\mathcal{H}_m) \geqslant A_4(\mathcal{H}_m) A_0(\mathcal{H}_m)$ if $m \geqslant 4$.

Next, as for $2 \leqslant i \leqslant \lfloor n/2 \rfloor -2$, i.e., $4 \leqslant w = 2i \leqslant n-4$, we have
	\begin{align*}
		g(i) & = B_{2i}^2 - B_{2i-1} B_{2i+1} = \binom{n}{2i}^2 + (-1)^i 2n \binom{(n-1)/2}{i} \binom{n}{2i} + n^2  \binom{(n-1)/2}{i}^2 \\
		& - \left( \binom{n}{2i-1} + (-1)^i n \binom{(n-1)/2}{i-1} \right) \left( \binom{n}{2i+1} - (-1)^i n \binom{(n-1)/2}{i} \right) \\
		& = \binom{n}{2i}^2 - \binom{n}{2i-1} \binom{n}{2i+1} + n^2 \binom{(n-1)/2}{i}^2 + n^2 \binom{(n-1)/2}{i-1} \binom{(n-1)/2}{i} \\
		& +(-1)^i n \left( 2\binom{(n-1)/2}{i} \binom{n}{2i} - \binom{(n-1)/2}{i-1} \binom{n}{2i+1} + \binom{(n-1)/2}{i} \binom{n}{2i-1} \right).
	\end{align*}
	Since $\binom{n}{2i}^2 - \binom{n}{2i-1} \binom{n}{2i+1} >0$, we have by Lemma \ref{lem-nchoosek} (ii)
	\begin{align*}
		&2\binom{(n-1)/2}{i} \binom{n}{2i} - \binom{(n-1)/2}{i-1} \binom{n}{2i+1} \\ & = \left( \frac{2((n-1)/2 - i +1)}{i} - \frac{n-2i}{2i+1} \right) \binom{(n-1)/2}{i-1} \binom{n}{2i} \\
		& = \left( \frac{n - 2i + 1}{i} - \frac{n-2i}{2i+1} \right) \binom{(n-1)/2}{i-1} \binom{n}{2i} > 0.
	\end{align*}
	Hence, if $i$ is even, $g(i) > 0$.

Finally let us suppose that $i$ is an odd integer $\geqslant 3$ because if $i=1$ then we have $B_2 =(n+1)A_2=0$.
Let
\[ h(i) = \binom{n}{2i}^2 - \binom{n}{2i-1} \binom{n}{2i+1} - n \left( 2\binom{(n-1)/2}{i} \binom{n}{2i} + \binom{(n-1)/2}{i} \binom{n}{2i-1}\right). \]
Compared with each term of $g(i)$, it is clear that $g(i) > 0$ if $h(i) > 0$.
In order to estimate $h(i) > 0$, let $n' = (n+1)/2$, and we need the following inequality
\begin{equation} \label{eq-main1}
	\frac{1}{n-4} \binom{(n+1)/2}{3} - 2n > 0
\end{equation}
which holds if and only if
\begin{equation} \label{eq-main2}
	\binom{n'}{3} = \frac{n'(n'-1)(n'-2)}{6} > 2n(n-4) = 2(2n'-1)(2n'-5).
\end{equation}
Suppose that $n' > 48$ (that is to say, $m \geqslant 7$). Then it is easy to see that
\[ (n'-1)(n'-2) \geqslant 48(n'-2) = 24(2n'-5) + 24 > 24(2n'-5), \]
which implies
\[
\frac{n'(n'-1)(n'-2)}{6} > \frac{n'}{6} \cdot 24 \cdot (2n'-5) = 4n' (2n' -5) >  2(2n'-1)(2n'-5).
\]
Therefore, Equation~(\ref{eq-main2}) is true, hence Equation~(\ref{eq-main1}) is true as wanted.
Since $i \geqslant 3$ is odd, $A_{n-1} = A_{n-2} = 0$, and $n - 3 = 2^m - 4$ is doubly even, hence $2i \leqslant n - 5$.
Thus,
	\begin{align*}
		h(i) & = \binom{n}{2i}^2 - \binom{n}{2i-1} \binom{n}{2i+1} - n \left( 2\binom{(n-1)/2}{i} \binom{n}{2i} + \binom{(n-1)/2}{i} \binom{n}{2i-1}\right) \\
		& = \left( 1 - \frac{2i(n-2i)}{(2i+1)(n-2i+1)} \right) \binom{n}{2i}^2 - n \binom{(n-1)/2}{i} \left( \binom{n+1}{2i} + \binom{n}{2i} \right)  \\
		& = \frac{n+1}{(2i+1)(n-2i+1)} \binom{n}{2i}^2 - n \binom{(n-1)/2}{i} \binom{n}{2i} \left( 1 + \frac{n+1}{n+1-2i} \right) \\
		& > \frac{n+1}{(2i+1)(n-2i+1)} \binom{n}{2i} - n \binom{(n-1)/2}{i} \left( 1 + \frac{n+1}{n+1-2i} \right) \\
		& = \frac{n+1}{n+1-2i} \left( \frac{1}{2i+1} \binom{n}{2i} - n \binom{(n-1)/2}{i} \right) - n \binom{(n-1)/2}{i} \\
		& > \frac{1}{2i+1} \binom{n}{2i} - n \binom{(n-1)/2}{i} - n \binom{(n-1)/2}{i} \\
		& > \binom{(n-1)/2}{i} \left( \frac{1}{2i+1} \binom{(n+1)/2}{i} - 2n \right)
		> \frac{1}{2i+1} \binom{(n+1)/2}{i} - 2n \\
		& \geqslant \frac{1}{n-4} \binom{(n+1)/2}{3} - 2n > 0 {\mbox{~by Equation}}~(\ref{eq-main1}) ,
	\end{align*}
where we used Lemma \ref{lem-nchoosek} (i) in the second line of $h(i)$, Lemma \ref{lem-nchoosek} (ii) in the third line of $h(i)$, and Lemma \ref{lem-nchoosek} (iii) in the 7-th line of $h(i)$.

As a summary, if $m \geqslant 7$, $g(i) > 0$ and $B_{2i}^2 - B_{2i-1} B_{2i+1} > 0$ for all possible $i$.
Since the weight distribution of $\mathcal{H}_m$ is symmetric, $A_w^2 - A_{w-1}A_{w+1} > 0$ for all $4 \leqslant w \leqslant n - 4$.
	
Therefore, if $m \geqslant 7$, $\mathcal{H}_m$ is log-concave.
\end{proof}

In Example~\ref{ex-hamming}, we mentioned that the binary extended Hamming code $\mathcal{H}_m$ for $m=3$ or $4$ is log-concave. Interestingly, it is also true for any $m \geqslant 5$.

\begin{thm}
	The binary extended Hamming code $\overline{\mathcal{H}}_m$ ($m \geqslant 3$) is log- concave.
\end{thm}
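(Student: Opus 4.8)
The plan is to work directly from the explicit formula
\[
A_{2i}(\overline{\mathcal{H}}_m) = \frac{1}{n}\left(\binom{n}{2i} + (-1)^i (n-1)\binom{n/2}{i}\right), \qquad n = 2^m,
\]
derived above, together with the vanishing of all odd-weight frequencies. First I would record three structural facts: $A_2 = A_{n-2} = 0$ (a short computation from the formula, reflecting minimum distance $4$), the symmetry $A_w = A_{n-w}$ (since the all-ones vector lies in $\overline{\mathcal{H}}_m$), and---via the Vandermonde bound of Lemma~\ref{lem-nchoosek}(iii)---the domination $\binom{n}{2i} > (n-1)\binom{n/2}{i}$ for $2 \le i \le n/2 - 2$, which guarantees $A_{2i} > 0$ there and hence that the nonzero weight distribution is exactly $1, A_4, A_6, \dots, A_{n-4}, 1$. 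Since $m = 3, 4$ are settled in Example~\ref{ex-hamming}, I would assume $m \ge 5$. Writing $B_{2i} = n A_{2i}$ and using symmetry to restrict to $2i \le n/2$, log-concavity reduces to the boundary inequality $A_4^2 \ge A_0 A_6 = A_6$ (the jump across the zero value $A_2$) together with the bulk inequalities $g(i) := B_{2i}^2 - B_{2i-2}B_{2i+2} > 0$ for every $3 \le i \le \lfloor n/4 \rfloor$.

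Next I would expand $g(i)$ using the formula, obtaining
\begin{align*}
g(i) &= \left(\binom{n}{2i}^2 - \binom{n}{2i-2}\binom{n}{2i+2}\right) + (n-1)^2\left(\binom{n/2}{i}^2 - \binom{n/2}{i-1}\binom{n/2}{i+1}\right) \\
&\quad + (-1)^i (n-1)\left(2\binom{n/2}{i}\binom{n}{2i} + \binom{n}{2i-2}\binom{n/2}{i+1} + \binom{n/2}{i-1}\binom{n}{2i+2}\right).
\end{align*}
The first two parenthesized quantities are strictly positive because the rows $\binom{n}{\bullet}$ and $\binom{n/2}{\bullet}$ are log-concave. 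Consequently, when $i$ is even the cross term is also positive and $g(i) > 0$ is immediate; this mirrors the even/odd split used in the proof for $\mathcal{H}_m$.

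The real work is the case of odd $i$, where the cross term is negative. Here I would discard the (positive) middle quantity and reduce to showing
\[
h(i) := \binom{n}{2i}^2 - \binom{n}{2i-2}\binom{n}{2i+2} - (n-1)\left(2\binom{n/2}{i}\binom{n}{2i} + \binom{n}{2i-2}\binom{n/2}{i+1} + \binom{n/2}{i-1}\binom{n}{2i+2}\right) > 0,
\]
since $g(i) \ge h(i)$ for odd $i$. Using the ratio identities of Lemma~\ref{lem-nchoosek}(ii) to factor $\binom{n}{2i}$ out of each product, and the Vandermonde estimate of Lemma~\ref{lem-nchoosek}(iii) to compare $\binom{n}{2i}$ with $\binom{n/2}{i}$, the inequality $h(i) > 0$ follows once one checks that the leading term $\binom{n}{2i}^2$ dominates the cross terms, which are of scale $(n-1)\binom{n/2}{i}\binom{n}{2i}$; the decisive point is that $\binom{n}{2i}$ is far larger than $(n-1)\binom{n/2}{i}$ throughout the range.

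I expect the main obstacle to be the small odd indices---chiefly $i = 3$ (the inequality $A_6^2 \ge A_4 A_8$) and the boundary inequality $A_4^2 \ge A_6$---where the gap $\binom{n}{2i}^2 - \binom{n}{2i-2}\binom{n}{2i+2}$ is smallest relative to the cross term, so the clean asymptotic domination is least comfortable. For these I would reduce to an explicit polynomial inequality in $n$ and verify it for $n = 2^m \ge 32$; the larger indices are then handled uniformly, since the ratio $\binom{n}{2i}/\big((n-1)\binom{n/2}{i}\big)$ grows rapidly as $i$ increases toward the center $n/4$.
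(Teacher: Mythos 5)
Your proposal is correct and takes essentially the same route as the paper's proof: the same explicit formula and symmetry/positivity facts, the same even/odd split in $i$ (even case immediate from the sign of the cross term), the same reduction of the odd case to $h(i)>0$ after discarding the positive $(n-1)^2$-term, and the same endgame of Vandermonde domination plus separate treatment of the small indices ($i=3$ and the boundary $A_4^2 \geqslant A_6$) for $n = 2^m \geqslant 32$. The estimates you defer are exactly the ones the paper carries out, and they do go through; the only imprecision is that the dominant cross term is $(n-1)\binom{n/2}{i-1}\binom{n}{2i+2}$ rather than $(n-1)\binom{n/2}{i}\binom{n}{2i}$, which the paper handles by bounding all three cross terms by that largest product before applying the Vandermonde estimate.
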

\begin{proof}
	It is well known~\cite{MacSlo} that for $m \geqslant 1$, the weight distribution of the first order Reed-Muller code $\mathcal{R}(1,m)$ is \[ A_0 = 1, A_{n/2} = 2n-2, A_n = 1, \ \textnormal{where} \ n = 2^m. \]
	Then by the MacWilliams identity, the weight distribution $A'_i$ of $\overline{\mathcal{H}}_m$ satisfies
	\[ n A'_{2i} = \binom{n}{2i} + (-1)^i (n-1) \binom{n/2}{i} = B_i. \]
	It is clear that $B_i = B_{n/2-i}$.
	Moreover, if $m \geqslant 4$ and $2 \leqslant i \leqslant n/4$,
	\begin{align*}
		B_i & \geqslant  \binom{n}{2i} - (n-1) \binom{n/2}{i} > \binom{n/2}{i} \binom{n/2}{i} -(n-1) \binom{n/2}{i} \\
		& \geqslant \binom{n/2}{i} \left(\binom{n/2}{2}-(n-1)\right) > \binom{n/2}{i} > 0.
	\end{align*}
	
Hence $B_i$'s are all positive for $0 \leqslant i \leqslant n/2$.

	
In what follows, we consider $i$ to be even or odd since the computation depends on the parity of $i$.

	Suppose that $3 \leqslant i \leqslant n/4-1$ is even. Then we have
	\[ \frac{B_i^2}{B_{i-1} B_{i+1}} \geqslant \frac{\binom{n}{2i}^2}{\binom{n}{2i-2} \binom{n}{2i+2}} \geqslant 1. \]
	We also have
	\[ B_2^2 \geqslant B_0 B_3. \]

Now suppose that $i$ is odd. Then it is easy to check that
	\[ B_i^2 = \binom{n}{2i} \binom{n}{2i} -2(n-1)\binom{n/2}{i} \binom{n}{2i} + (n-1)^2 \binom{n/2}{i} \binom{n/2}{i} \]
	can be regarded as a polynomial on $n$, with the degree $4i$, and the leading coefficient $\left(1/((2i)!) \right)^2$.
	Similarly,
	\begin{align*}
	B_{i-1} B_{i+1} &= \binom{n}{2i-2} \binom{n}{2i+2} + (n-1)\left( \binom{n/2}{i-1} \binom{n}{2i+2} + \binom{n/2}{i+1} \binom{n}{2i-2} \right) \\& + (n-1)^2 \binom{n/2}{i-1} \binom{n/2}{i+1}
	\end{align*}
	can be regarded as a polynomial on $n$ with the degree $4i$, and the leading coefficient \[ \frac{1}{(2i-2)! (2i+2)!}. \]
	Since
	\[ \frac{(2i)! (2i)!}{(2i-2)! (2i+2)!} = \frac{(2i)(2i-1)}{(2i+1)(2i+2)} < 1, \]
	we have $B_{i-1} B_{i+1} < B_i^2$, if $n$ is large enough.
	That is to say, $\overline{\mathcal{H}}_m$ is also log-concave.
	
	In fact, $B_i^2 - B_{i-1}B_{i+1} > 0$ if
	\begin{align*}
		& \binom{n}{2i}^2 - \binom{n}{2i-2} \binom{n}{2i+2} \\
		&-  (n-1) \left(2 \binom{n/2}{i} \binom{n}{2i}
		+ \binom{n/2}{i-1} \binom{n}{2i+2} + \binom{n/2}{i+1} \binom{n}{2i-2} \right) \\
		> &\binom{n}{2i}^2 - \binom{n}{2i-2} \binom{n}{2i+2} - 4(n-1) \binom{n/2}{i+1} \binom{n}{2i+2} \\
		= &\binom{n}{2i+2} \left( \frac{(2i+1)(2i+2)}{(n-2i)(n-2i-1)} \binom{n}{2i} - \binom{n}{2i-2} - 4(n-1) \binom{n/2}{i+1} \right) > 0.
	\end{align*}
	Let $n' = n/2$. We have
	\begin{align*}
		& \frac{(2i+1)(2i+2)}{(n-2i)(n-2i-1)} \binom{n}{2i} - \binom{n}{2i-2} - 4(n-1) \binom{n/2}{i+1} \\
		= & \frac{(2i+1)(2i+2)(n-2i+1)(n-2i+2)}{2i(2i-1)(n-2i)(n-2i-1)} \binom{n}{2i-2} - \binom{n}{2i-2} - 4(n-1) \binom{n/2}{i+1} \\
		> & \left( \frac{(2i+1)(2i+2)}{2i(2i-1)} - 1\right) \binom{n}{2i-2} - 4(n-1) \binom{n/2}{i+1} \\
		> & \frac{8i+2}{4i^2-2i} \binom{n}{2i-2} - 4(n-1) \binom{n/2}{i+1} > \frac{8i+2}{4i^2-2i} \binom{n/2}{i-1}^2 - 4(n-1) \binom{n/2}{i+1} \\
		> & \binom{n/2}{i-1} \left( \frac{8i+2}{4i^2-2i} \binom{n/2}{i-1} - 4(n-1) \frac{(n'-i+1)(n'-i)}{i(i+1)} \right).
	\end{align*}
	It is easy to see that if $i \geqslant 5$ and $n' > 16$,
	\[ \frac{3 \binom{n'}{i-1}}{(n-1)(n'-i+1)(n'-i)} \geqslant \frac{3 \binom{n'}{4}}{(n-1)(n'-4)(n'-5)} > \frac{n'(n'-1)}{8(2n'-1) } > 1, \]
	which is equivalent to $n'^2-17n'+8 > 0$.
	Besides, if $n' = 16$, then $3\binom{16}{4}-31\times 11 \times 12 = 1368$.
	Hence, if $i \geqslant 5$ and $n \geqslant 32$, we have
	\begin{align*}
		 & \frac{i(i+1)(8i+2)}{4i^2-2i} \binom{n/2}{i-1} - 4(n-1)(n'-i+1)(n'-1) \\
		 & > 2(i+1) \binom{n/2}{i-1} - 4(n-1)(n'-i+1)(n'-1) \\
		 & \geqslant 4 \left( 3 \binom{n'}{i-1} > (n-1)(n'-i+1)(n'-i) \right) > 0.
	\end{align*}
	In a word, if $i \geqslant 5$ and $m \geqslant 5$, then $B_i^2 - B_{i-1}B_{i+1} > 0$.
	
	As for $i = 3$, considering that
	\[ \frac{\binom{n/2}{3} \binom{n}{6}}{\binom{n/2}{4} \binom{n}{4}} = \frac{4(n-5)(n-6)}{30(n'-4)}, \ \textnormal{and} \ \frac{\binom{n'}{2} \binom{n}{8}}{\binom{n/2}{3} \binom{n}{6}} = \frac{3(n-7)(n-8)}{56(n-3)}, \]
	we have that if $n \geqslant 32$,
	\[ \binom{n/2}{4} \binom{n}{4} < \binom{n/2}{3} \binom{n}{6} < \binom{n'}{2} \binom{n}{8}. \]
	Similar to the discussions above, we have
	\begin{align*}
		B_3^2-B_2B_4 & > \binom{n}{6}^2 - \binom{n}{4} \binom{n}{8} - 4(n-1) \binom{n'}{2} \binom{n}{8} \\
		& = \binom{n}{8} \left( \frac{56}{(n-6)(n-7)} \binom{n}{6} - \binom{n}{4} - \frac{n(n-1)(n-4)}{2}\right) \\
		& = \binom{n}{8} \left( \frac{56(n-5)(n-4)}{30(n-7)(n-6)} \binom{n}{4} - \binom{n}{4} - \frac{n(n-1)(n-4)}{2}\right) \\
		& > \binom{n}{8} \left( \frac{4}{5} \binom{n}{4} - \frac{n(n-1)(n-4)}{2}\right) > 0,
	\end{align*}
	since
	\[ \frac{4}{5} \binom{n}{4} - \frac{n(n-1)(n-4)}{2} >0  \Leftrightarrow \frac{8\binom{n}{4}}{5n(n-1)(n-4)} = \frac{(n-2)(n-3)}{15(n-4)} > 1~~
\textnormal{as} \ n \geqslant 32. \]
	Therefore, if $m \geqslant 5$, then $\overline{\mathcal{H}}_m$ is log-concave.
\end{proof}

\begin{prop}
	Let $\mathcal{H}_{m,q}$ be the $q$-ary Hamming code. Then
	\[ q^m A_w(\mathcal{H}_{m,q}) = (q-1)^w \binom{n}{w} + n(q-1) \sum_{i+j=w} \binom{(n-1)/q}{i} \binom{((q-1)n+1)/q}{j} (q-1)^i (-1)^j.  \]
\end{prop}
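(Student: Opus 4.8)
The plan is to mimic part (i) of the earlier Lemma, replacing the binary simplex code by its $q$-ary analogue and applying the MacWilliams identity. First I would recall that the $q$-ary Hamming code $\mathcal{H}_{m,q}$ is the dual of the $q$-ary simplex code $\mathcal{S}_{m,q}$, an $[n,m]$ code with $n = (q^m-1)/(q-1)$ and $|\mathcal{S}_{m,q}| = q^m$. The crucial input is that every nonzero codeword of $\mathcal{S}_{m,q}$ has the same weight $q^{m-1}$, since its generator matrix lists one representative from each of the $n$ projective points of $\mathbb{F}_q^m$, and there are $q^m - 1$ such codewords. Hence its weight enumerator is the two-term expression
\[ W_{\mathcal{S}_{m,q}}(x,y) = x^n + (q^m-1)\, x^{\,n - q^{m-1}} y^{\,q^{m-1}}. \]

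Next I would apply the MacWilliams identity $W_{\mathcal{C}^\perp}(x,y) = |\mathcal{C}|^{-1} W_{\mathcal{C}}(x+(q-1)y,\, x-y)$ with $\mathcal{C} = \mathcal{S}_{m,q}$, obtaining
\[ q^m\, W_{\mathcal{H}_{m,q}}(x,y) = (x+(q-1)y)^n + (q^m-1)\,(x+(q-1)y)^{\,n-q^{m-1}} (x-y)^{\,q^{m-1}}. \]
The coefficient of $x^{n-w}y^w$ on the left is $q^m A_w(\mathcal{H}_{m,q})$. On the right, the first summand contributes $\binom{n}{w}(q-1)^w$ directly by the binomial theorem, which is exactly the first term of the claimed formula. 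For the second summand I would expand both factors,
\[ (x+(q-1)y)^{\,n-q^{m-1}} = \sum_i \binom{n-q^{m-1}}{i}(q-1)^i x^{\,n-q^{m-1}-i}y^i, \qquad (x-y)^{\,q^{m-1}} = \sum_j \binom{q^{m-1}}{j}(-1)^j x^{\,q^{m-1}-j}y^j, \]
and collect the terms with $i+j = w$, whose combined $x$-exponent is automatically $(n-q^{m-1})+q^{m-1}-w = n-w$.

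The only remaining work is bookkeeping: rewriting the simplex parameters in the form appearing in the statement. I would verify the three identities
\[ n - q^{m-1} = \frac{q^{m-1}-1}{q-1} = \frac{n-1}{q}, \qquad q^{m-1} = \frac{(q-1)n+1}{q}, \qquad q^m - 1 = n(q-1), \]
each of which is immediate from $n = (q^m-1)/(q-1)$. Substituting these into the collected sum turns $(q^m-1)\sum_{i+j=w}\binom{n-q^{m-1}}{i}\binom{q^{m-1}}{j}(q-1)^i(-1)^j$ into precisely $n(q-1)\sum_{i+j=w}\binom{(n-1)/q}{i}\binom{((q-1)n+1)/q}{j}(q-1)^i(-1)^j$, completing the identification.

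I do not expect a genuine obstacle here: once the uniform-weight structure of the simplex code is invoked, the argument is a mechanical coefficient extraction parallel to the binary Lemma, and the sole subtlety is checking that the simplex exponents $q^{m-1}$ and $n-q^{m-1}$ coincide with the binomial upper indices $((q-1)n+1)/q$ and $(n-1)/q$ as written in the statement, together with the identity $q^m-1 = n(q-1)$ for the scalar multiplier.
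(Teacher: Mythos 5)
Your proposal is correct and follows essentially the same route as the paper's own proof: apply the MacWilliams identity to the two-term weight enumerator of the $q$-ary simplex code, extract the coefficient of $x^{n-w}y^w$, and rewrite the exponents via $n-q^{m-1}=(n-1)/q$, $q^{m-1}=((q-1)n+1)/q$, and $q^m-1=n(q-1)$. Your write-up is in fact slightly more explicit than the paper's, since you verify these parameter identities rather than asserting the final rewriting directly.
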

\begin{proof}
	The weight enumerator of $\mathcal{S}_{m,q}$ is
	\[ W_{\mathcal{S}_{m,q}}(x,y) = x^n + (q^m-1) x^{n-q^{m-1}} y^{q^{m-1}}, \]
	where $n = (q^m-1)/(q-1)$.
	Then the weight enumerator of $\mathcal{H}_{m,q}$ is
	\[ W_{\mathcal{H}_{m,q}}(x,y) = \frac{1}{q^m} \left( (x+(q-1)y)^n + (q^m-1)(x+(q-1)y)^{n-q^{m-1}} (x-y)^{q^{m-1}} \right). \]
	Hence,
	\begin{align*}
		q^m A_w(\mathcal{H}_{m,q}) & = (q-1)^w \binom{n}{w} + (q^m-1) \sum_{i+j=w} \binom{n-q^{m-1}}{i} \binom{q^{m-1}}{j} (q-1)^i (-1)^j \\
		& = (q-1)^w \binom{n}{w} + n(q-1) \sum_{i+j=w} \binom{(n-1)/q}{i} \binom{((q-1)n+1)/q}{j} (q-1)^i (-1)^j.
	\end{align*}
	This completes the proof.
\end{proof}

\begin{rem}
	It is known that $A_w(\mathcal{H}_{m,q}) > 0$ for all $w \geqslant 3$, except for the case $q = 3$ and $m = 2$ (see \cite[Theorem 38]{Hamming-weight-number}).
\end{rem}

We have checked by Magma that if $q \leqslant 8$, the $q$-ary Hamming code $\mathcal{H}_{2,q}$ is not log-concave.

\begin{thm} \label{thm-MDS-Hamming}
	If $q \geqslant 9$,  the $q$-ary Hamming code $\mathcal{H}_{2,q}$ is log-concave.
\end{thm}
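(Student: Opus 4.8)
The plan is to use the fact that $\mathcal{H}_{2,q}$ is a $[q+1,q-1,3]$ MDS code and to reduce log-concavity of its nonzero weight distribution to a single scalar sequence. First I would specialize the preceding Proposition to $m=2$, so that $n=q+1$, $(n-1)/q=1$ and $((q-1)n+1)/q=q$; then $\binom{1}{i}$ kills all but the terms $i=0,1$, and after using $\binom{q+1}{w}=\tfrac{q+1}{w}\binom{q}{w-1}$ the formula collapses to the closed form
\[
A_w(\mathcal{H}_{2,q}) = \frac{q-1}{q^2}\binom{q+1}{w}\,T_w, \qquad T_w := (q-1)^{w-1} + (-1)^w\bigl(1-(w-1)q\bigr),
\]
valid for $3\le w\le q+1$. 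By the Remark, every $A_w$ with $w\ge 3$ is positive once $q\ge 9>3$, so the nonzero weight distribution is $1,A_3,A_4,\dots,A_{q+1}$ and, because $A_1=A_2=0$, log-concavity amounts to the boundary inequality $A_3^2\ge A_0A_4=A_4$ together with $A_w^2\ge A_{w-1}A_{w+1}$ for $4\le w\le q$.

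Next I would strip off the binomial factor. Since $\binom{q+1}{w}^2 = R_w\binom{q+1}{w-1}\binom{q+1}{w+1}$ with $R_w = 1 + \tfrac{q+2}{w(q+1-w)} > 1$, and all $T_w>0$, the inequality $A_w^2\ge A_{w-1}A_{w+1}$ is equivalent to $R_wT_w^2\ge T_{w-1}T_{w+1}$. The crux is the exact identity
\[
T_w^2 - T_{w-1}T_{w+1} = q^2\Bigl[\,1 + (-1)^{w+1}(q-1)^{w-2}\bigl(1+q(w-2)\bigr)\Bigr],
\]
which I would verify by substituting the closed form for $T_w$: the top powers $(q-1)^{2w-2}$ cancel, the middle powers assemble into the bracketed term, and the surviving low-order cross terms contribute exactly $q^2$. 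For odd $w$ the bracket is manifestly positive, so $R_wT_w^2 - T_{w-1}T_{w+1} = (R_w-1)T_w^2 + (T_w^2-T_{w-1}T_{w+1})>0$ with no further work. For even $w$ the bracket is large and negative, so the slack $(R_w-1)T_w^2 = \tfrac{q+2}{w(q+1-w)}T_w^2$ must absorb it.

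The decisive case is $w=4$. Here $T_3=q^2$, $T_4=q^2(q-3)$, $T_5=q^2(q^2-4q+6)$, and a direct computation gives
\[
R_4T_4^2 - T_3T_5 = \frac{q^4}{4}\,(q^2-9q+6),
\]
which is nonnegative exactly when $q\ge \tfrac{9+\sqrt{57}}{2}\approx 8.27$, i.e. $q\ge 9$; this is precisely what pins down the hypothesis and the constant in the statement. For even $w\ge 6$ I expect comfortable slack: $T_w^2$ has order $(q-1)^{2w-2}$ while the negative contribution $q^2\bigl[(q-1)^{w-2}(1+q(w-2))-1\bigr]$ has only order $q^3w(q-1)^{w-2}$, so their ratio grows like $(q-1)^w/q^3$. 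After bounding $T_w\ge (q-1)^{w-1}-(w-1)q\ge \tfrac12(q-1)^{w-1}$ this reduces to the elementary inequality $(q+2)(q-1)^w \ge 4w(w-1)(q+1-w)q^3$, which I would dispatch by splitting the range of $w$ (small $w$, the middle $w\approx(q+1)/2$ where $R_w-1$ is smallest but $(q-1)^w$ is astronomically large, and $w$ near $q$). Finally the boundary inequality $A_3^2\ge A_4$ is easy, since $A_3^2/A_4 = \tfrac{2q(q-1)^2(q+1)}{3(q-2)(q-3)}$ has order $q^2$.

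The main obstacle is the even-$w$ analysis: unlike the binary Hamming case there is no symmetry to halve the work, and the auxiliary sequence $T_w$ is itself not log-concave, failing at every even index. The whole argument therefore hinges on quantifying the binomial slack $R_w-1$ precisely enough to dominate the negative term in the identity above, with $w=4$ being exactly critical at $q=9$.
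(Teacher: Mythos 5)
Your proposal is correct, and it takes a genuinely different route from the paper. The paper treats $\mathcal{H}_{2,q}$ as a $[q+1,q-1,3]_q$ MDS code and runs it through the general MDS machinery of Section 5 (the functions $g_k(s,w,q)$, Lemma \ref{lemma: MDS-g}, and the chain inequality \eqref{eq-MDS-chain} from Theorem \ref{thm: MDS-log}): it verifies $q=9,11$ by Magma, assumes $q\geqslant 13$, handles odd $w$ via $g_k$ with even $s$, and for the remaining cases splits according to $q\equiv 1,3 \pmod 4$ and $q$ even, checking the critical values $s\approx (q-7)/2$ by hand. You instead work entirely with the closed form $A_w=\tfrac{q-1}{q^2}\binom{q+1}{w}T_w$, $T_w=(q-1)^{w-1}+(-1)^w(1-(w-1)q)$, and your exact identity
\begin{equation*}
T_w^2-T_{w-1}T_{w+1}=q^2\Bigl[\,1+(-1)^{w+1}(q-1)^{w-2}\bigl(1+q(w-2)\bigr)\Bigr]
\end{equation*}
is correct (I verified it by expanding; the cross terms do collapse to $q^2$), as are your values $T_3=q^2$, $T_4=q^2(q-3)$, $T_5=q^2(q^2-4q+6)$ and the computation $R_4T_4^2-T_3T_5=\tfrac{q^4}{4}(q^2-9q+6)$. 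This buys you three things the paper does not have: the proof is fully self-contained (no Magma checks for $q=9,11$), there is no case analysis on $q \bmod 4$, and the threshold $q\geqslant 9$ is explained intrinsically as $q\geqslant \lceil (9+\sqrt{57})/2\rceil$ coming from the exactly-critical pair $(w,q)=(4,9)$. What the paper's route buys is brevity given machinery it needs anyway for Theorem \ref{thm: MDS-log}, plus the monotonicity chain that spares it your even-$w\geqslant 6$ slack estimates. That part of your argument is only sketched, but it does close: after the reduction to $(q+2)(q-1)^w\geqslant 4w(w-1)(q+1-w)q^3$, stepping $w\mapsto w+2$ multiplies the left side by $(q-1)^2\geqslant 64$ and the right side by less than $2$, so $w=6$ is the worst case, where $11\cdot 8^6=2883584$ comfortably exceeds $4\cdot 6\cdot 5\cdot 4\cdot 729=349920$ at $q=9$ and the gap only widens with $q$.
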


\begin{proof}
	It is known that $\mathcal{H}_{2,q}$ is an $[n = q+1, k = q-1, 3]_q$ MDS code.
	We have checked by Magma that $\mathcal{H}_{2,9}$ and $\mathcal{H}_{2,11}$ are log- concave. So we assume that $q \geqslant 13$.
	For each $4 \leqslant w \leqslant n-1$, according to Lemma \ref{lemma: MDS-g}, $g_k(s,n,q) > 1$ if $s = w-3$ is even.
	That is to say, if $w$ is odd, then $A_w^2 > A_{w-1} A_{w+1}$.
	
	As for odd $s$, consider
	\[ g_k(1,4,q) = \frac{(q-3)^2}{q^2-4q+6} \geqslant \frac{s+3}{s+4}, \]
	which is equivalent to
	\[ q^2 - (2s+12) q + 3s+18 \geqslant 0. \]
	Noting that its discriminant is
	\[ \Delta = (2s+12)^2 - 4(3s+12) = 4(s+6)^2 - 12(s+6) = 4(s+3)(s+6) \]
	and
	\[ 4(s+4)^2 = 4(s^2 + 8s + 16) < \Delta = 4(s^2 + 9s + 18) < 4(s^2 + 10s + 25) = 4(s+5)^2, \]
	then we have
	\[ s+4 < \frac{\sqrt{\Delta}}{2} = \sqrt{(s+6)(s+3)} < s+5. \]
	Since $q \geqslant 9$ is an integer, then $q \geqslant s+6 + \sqrt{(s+6)(s+3)}$, i.e., $q \geqslant 2s+11$.
	
	If $1 \leqslant s \leqslant (q-11)/2$, then
	\[ \frac{A_{w}^2}{A_{w-1}A_{w+1}} = \frac{4+s}{3+s} \cdot \frac{k-s}{k-1-s} \cdot g_k(s,w,q) > \frac{4+s}{3+s} \cdot g_k(1,4,q) > 1. \]
	According to the proof of Theorem \ref{thm: MDS-log}, if $s \geqslant (q-7)/2$, we have
	\begin{equation} \label{eq-MDS-chain}
		\frac{A^2_{w}}{A_{w-1} A_{w+1}} < \frac{A^2_{w+2}}{A_{w+1} A_{w+3}} < \frac{A^2_{w+4}}{A_{w+3} A_{w+5}} < \dots.
	\end{equation}
	
	If $q \equiv 1 \pmod{4}$ is odd, then $(q-9)/2$ is even and $(q-7)/2$ is odd.
	Thus we only need to check the case when $s = (q-7)/2$ and $w = s+3$,
	\begin{align*}
		\frac{A_{w}^2}{A_{w-1}A_{w+1}} & = \frac{4+s}{3+s} \cdot \frac{k-s}{k-1-s} \cdot g_k(s,w,q)
		> \frac{q+1}{q-1} \cdot \frac{q+5}{q+3} \cdot \frac{(q-3)^2}{q^2-4q+6} \\
		& = \frac{q^4 - 22 q^2 + 24 q +45}{q^4 - 2q^3 - 5q^2 + 24q - 18} > 1
	\end{align*}
	if $2q^3 + 5q^2 - 24q + 18 > 22q^2 -24q - 45$, i.e., $2q^3 - 17q^2 + 63 > 0$.
	In fact, if $q \geqslant 9$, then the last inequality holds, hence $\mathcal{H}_{2,q}$ is log-concave.
	
	If $q \equiv 3 \pmod{4}$, then $(q-9)/2$ is odd and $(q-7)/2$ is even.
	As for $s = (q-9)/2$ and $w = s+3$, we have
	\begin{align*}
		\frac{A_{w}^2}{A_{w-1}A_{w+1}} & = \frac{4+s}{3+s} \cdot \frac{k-s}{k-1-s} \cdot g_k(s,w,q)
		> \frac{q-1}{q-3} \cdot \frac{q+7}{q+5} \cdot \frac{(q-3)^2}{q^2-4q+6} \\
		& = \frac{q^4 - 34q^2 + 96q - 63}{q^4 - 2q^3 - 17q^2 + 72q - 90} > 1
	\end{align*}
	if $q^4 - 34q^2 + 96q - 63 > q^4 - 2q^3 - 17q^2 + 72q - 90$, i.e., $2q^3 - 17q^2 + 24q + 27 > 0$.
	Also, if $q \geqslant 9$, then the above inequalities hold.
	Similarly, when $s = (q-9)/2 + 2 = (q-5)/2$ and $w = s+3$, we have
	\begin{align*}
		\frac{A_{w}^2}{A_{w-1}A_{w+1}} & = \frac{4+s}{3+s} \cdot \frac{k-s}{k-1-s} \cdot g_k(s,w,q)
		> \frac{q+3}{q+1} \cdot \frac{q+3}{q+1} \cdot \frac{(q-3)^2}{q^2-4q+6} \\
		& = \frac{q^4 - 18q^2 + 81}{q^4 - 2q^3 - q^2 + 8q + 6} > 1,
	\end{align*}
	if $2q^3 - 17q^2 - 8q + 75 > 0$, which holds when $q \geqslant 9$.
	Thus, by the inequality \eqref{eq-MDS-chain}, if $q \equiv 3 \pmod{4}$, then $\mathcal{H}_{2,q}$ is log-concave.

	If $q = 2^m$ is even, then $(q-10)/2$ and $(q-6)/2$ are odd.
	If $s = (q-10)/2$, then
	\begin{align*}
		\frac{A_{w}^2}{A_{w-1}A_{w+1}} & = \frac{4+s}{3+s} \cdot \frac{k-s}{k-1-s} \cdot g_k(s,w,q)
		> \frac{q-2}{q-4} \cdot \frac{q+8}{q+6} \cdot \frac{(q-3)^2}{q^2-4q+6} \\
		& = \frac{q^4 - 43q^2 + 150q - 144}{q^4 - 2q^3 - 26q^2 + 108q - 144} > 1,
	\end{align*}
	if $2q^3 - 17q^2 + 42q > 0$, which holds when $q \geqslant 9$.
	As for $s = (q-6)/2$, we have
	\begin{align*}
		\frac{A_{w}^2}{A_{w-1}A_{w+1}} & = \frac{4+s}{3+s} \cdot \frac{k-s}{k-1-s} \cdot g_k(s,w,q) \cdot
		> \frac{q+2}{q} \cdot \frac{q+4}{q+2} \cdot \frac{(q-3)^2}{q^2-4q+6} \\
		& = \frac{q^4 - 19q^2 + 6q + 72}{q^4 - 2q^3 - 2q^2 + 12q} > 1,
	\end{align*}
	if $2q^3 - 17q^2 - 6q + 72 > 0$, which holds when $q \geqslant 9$.
	Using the inequality~\eqref{eq-MDS-chain} again, for even $q \geqslant 16$, $\mathcal{H}_{2,q}$ is log-concave.

	Therefore, for all prime power $q \geqslant 9$, $\mathcal{H}_{2,q}$ is log-concave.
\end{proof}

\section{The second order Reed-Muller codes}

In this section, we show that the first and second order binary Reed-Muller codes are log-concave. We then show that the homogeneous and projective second order $q$-ary Reed Muller codes are $1$-gap log-concave.

\begin{prop}
The Reed-Muller codes $\mathcal R(r, m)$ are log-concave for $1 \le r \le m \le 6$.
\end{prop}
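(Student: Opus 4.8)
The statement ranges over the finite set of pairs $(r,m)$ with $1 \leqslant r \leqslant m \leqslant 6$, so the plan is to dispatch the boundary orders by structural results already available in this paper and to settle the handful of genuinely intermediate orders by explicit weight distributions. First I would peel off the extreme orders. For $r = m$ the code is the whole space $\mathbb F_2^{2^m}$, which is log-concave by Proposition~\ref{prop-easy}(i); for $r = m-1$ it is the even-weight code $\mathcal E_{2^m}$, log-concave by Proposition~\ref{prop-easy}(ii); and for $r = 1$ the nonzero weight distribution is the three-term symmetric sequence $1,\, 2^{m+1}-2,\, 1$, whose middle entry dominates, so log-concavity is immediate. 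Finally, since $\overline{\mathcal H}_m$ is the dual of the first-order Reed-Muller code, the standard duality $\mathcal R(r,m)^\perp = \mathcal R(m-1-r,m)$ identifies the order $r = m-2$ with the extended Hamming code $\overline{\mathcal H}_m$, which is log-concave for all $m \geqslant 3$ by the preceding theorem.

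I would then tabulate which pairs remain. For $m \leqslant 4$ every admissible $r$ lies in $\{1,\, m-2,\, m-1,\, m\}$, so those rows are already covered; for $m = 5$ the only leftover is $r = 2$, and for $m = 6$ the leftovers are $r = 2$ and $r = 3$. For the second-order cases $\mathcal R(2,5)$ and $\mathcal R(2,6)$ I would invoke the classical closed-form weight distribution of $\mathcal R(2,m)$ coming from the classification of binary quadratic forms by rank: the nonzero weights sit at $2^{m-1}$ and $2^{m-1} \pm 2^{m-1-h}$ for $1 \leqslant h \leqslant \lfloor m/2 \rfloor$, with multiplicities equal to the number of quadratic forms of each rank (see \cite{MacSlo}). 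Substituting $m = 5$ and $m = 6$ yields explicit symmetric integer sequences, and one verifies $a_i^2 \geqslant a_{i-1} a_{i+1}$ by direct arithmetic.

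For the single genuinely intermediate case $\mathcal R(3,6)$, whose weight distribution admits no convenient closed form, I would take the known weight enumerator from standard tables or from a direct machine computation (Magma) and check the finitely many log-concavity inequalities numerically. The main obstacle here is not conceptual but one of bookkeeping and availability of data: the intermediate-order codes, above all $\mathcal R(3,6)$ of length $64$, have weight distributions not expressible by a simple formula, so the argument ultimately rests on a finite but nontrivial numerical verification rather than on an inequality uniform in $m$. I note that the subsequent theorem establishing log-concavity of $\mathcal R(2,m)$ for all $m \geqslant 2$ will subsume the second-order rows; within this proposition, however, it is cleanest to absorb them into the same finite computation so that the proof does not depend on a later result.
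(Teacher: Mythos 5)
Your proposal is correct, but it takes a genuinely different route from the paper. The paper's entire proof is one sentence: all pairs $(r,m)$ with $1 \leqslant r \leqslant m \leqslant 6$ are verified by Magma, using the known weight distributions of these codes. You instead reduce the problem structurally: $r=m$ and $r=m-1$ are handled by Proposition~\ref{prop-easy}, $r=1$ by the three-term distribution $1,\,2^{m+1}-2,\,1$, and $r=m-2$ by the identification $\mathcal R(m-2,m)=\mathcal R(1,m)^{\perp}=\overline{\mathcal H}_m$ together with the extended Hamming theorem of the preceding section (legitimate, since that theorem does precede this proposition in the paper). This leaves only $\mathcal R(2,5)$, $\mathcal R(2,6)$ and $\mathcal R(3,6)$, the first two settled by the classical rank-based weight distribution of second-order Reed--Muller codes and the last by a finite machine check. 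What your approach buys is a proof that is almost entirely structural and shrinks the unavoidable computation to a single code, $\mathcal R(3,6)$; what the paper's approach buys is uniformity and brevity --- one computation covers all twenty-one cases with no case analysis and no appeal to the duality $\mathcal R(r,m)^{\perp}=\mathcal R(m-1-r,m)$, which the paper invokes only implicitly (``the extended Hamming code is the dual of the first order Reed-Muller code''). One small imprecision on your side: the nonzero weights of $\mathcal R(2,m)$ also include $2^m$ (the all-ones word), not just $2^{m-1}$ and $2^{m-1}\pm 2^{m-1-h}$; this does not affect your verification, since the multiplicity-one endpoints appear in the tabulated distribution you would check.
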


\begin{proof}
We have checked this by Magma since the weight distribution of $R(r, m)$ for these small values of $r$ and $m$ are known.
\end{proof}

\begin{prop}
The first order Reed-Muller code $\mathcal R(1, m)$ is log-concave for any $m \ge 1$.
\end{prop}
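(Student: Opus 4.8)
The plan is to exploit the fact that $\mathcal{R}(1,m)$ has an exceptionally short weight distribution, so that log-concavity collapses to a single numerical inequality. First I would recall (exactly as was used in the proof of the extended Hamming code theorem above) that the weight distribution of $\mathcal{R}(1,m)$ is $A_0 = 1$, $A_{n/2} = 2n-2$, and $A_n = 1$, where $n = 2^m$, while every other $A_i$ vanishes. Consequently the nonzero weight distribution $a(\mathcal{R}(1,m))$ consists of exactly the three terms $a_0 = 1$, $a_1 = 2n-2$, $a_2 = 1$.

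A three-term sequence $a_0, a_1, a_2$ is log-concave precisely when the single inequality $a_1^2 \geqslant a_0 a_2$ holds, since here the index in the definition of log-concavity ranges only over $i = 1$. Thus the entire claim reduces to verifying $(2n-2)^2 \geqslant 1$. As $n = 2^m \geqslant 2$ for every $m \geqslant 1$, we have $2n-2 \geqslant 2$, whence $(2n-2)^2 \geqslant 4 > 1$, and the inequality holds with room to spare.

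The only point deserving a brief remark is the boundary case $m = 1$, where $\mathcal{R}(1,1) = \mathbb{F}_2^2$ and the three distinguished weights $0$, $n/2 = 1$, $n = 2$ are consecutive; there the nonzero weight distribution is $1, 2, 1$ and $2^2 = 4 \geqslant 1$ confirms log-concavity directly, in agreement with Proposition~\ref{prop-easy}(i). I expect essentially no obstacle in this argument: in contrast to the general Hamming and extended Hamming codes, whose proofs required delicate estimates of ratios of binomial coefficients, the weight distribution of $\mathcal{R}(1,m)$ is supported on only three weights, so the result follows from one elementary inequality.
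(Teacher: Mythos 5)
Your proposal is correct and follows essentially the same route as the paper: both identify the nonzero weight distribution of $\mathcal R(1,m)$ as the three-term sequence $1,\, 2^{m+1}-2,\, 1$ and conclude log-concavity from the single inequality at the middle index. Your write-up merely makes explicit the inequality $(2n-2)^2 \geqslant 1$ and the boundary case $m=1$, which the paper leaves implicit.
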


\begin{proof}
The code $\mathcal R(1, m)$ has three nonzero numbers $A_0 =1,  A_{2^{m-1}}= 2^{1+m} -2, A_{2^m} =1$. Hence $R(1, m)$ is log-concave.
\end{proof}

The weight distribution of $\mathcal R(2, m)$ is given by S. Li~\cite{Li}. We manipulated the formula to prove the following

\begin{lem}(\cite{Li}, \cite[Ch.8. Theorem 8]{MacSlo})
	The weight distribution of the second order Reed-Muller code $\mathcal R(2, m)$ is given below.
\begin{center}
\begin{tabular}{|c|c|}
 \hline Weight $w$ & Frquency $A_w$\\ \hline
	$0$ & $1$ \\[2ex] $2^{m-1}$ & $2\left(2^m-1+\sum_{\ell=1}^{\lfloor {m-1\over 2}\rfloor}(2^{\ell^2+\ell}){\prod_{i=m-2\ell}^m(2^i-1)\over \prod_{i=1}^\ell(2^{2i}-1)}\right)$\\[2ex]
	$2^{m-1}\pm 2^{m-j-1}, 1\le j\le \lfloor {m\over 2}\rfloor$ & $(2^{j^2+j}){\prod_{i=m-2j+1}^m(2^i-1)\over \prod_{i=1}^j(2^{2i}-1)}$\\[2ex]
	$2^m$ & $1$\\ \hline
\end{tabular}
\end{center}
\end{lem}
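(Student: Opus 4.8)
The plan is to realize each codeword of $\mathcal{R}(2,m)$ as a Boolean function $f = Q + \ell$ of algebraic degree at most $2$, where $Q$ is a homogeneous quadratic form and $\ell \in \mathcal{R}(1,m)$ is affine, and to read the weight off the exponential sum $\wt(f) = 2^{m-1} - \tfrac12\sum_{x\in\mathbb{F}_2^m}(-1)^{f(x)}$. The organizing invariant is the rank of the alternating (symplectic) bilinear form $B_Q(x,y)=Q(x+y)+Q(x)+Q(y)$ attached to the quadratic part: over $\mathbb{F}_2$ this rank is always even, say $2j$ with $0\le j\le\lfloor m/2\rfloor$, and since $B_Q$ does not change when $\ell$ is altered, the rank is constant on each coset $Q+\mathcal{R}(1,m)$. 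Thus the cosets are indexed by alternating forms on $\mathbb{F}_2^m$, and the whole computation splits according to $j$.

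First I would put $Q$ into canonical form by a linear change of coordinates, writing $\mathbb{F}_2^m = V \oplus W$ with $\dim V = 2j$ carrying the nondegenerate part and $W$ the radical of $B_Q$, on which $Q$ restricts to a linear function. The sum then factors as a product of a sum over $V$ and a sum over $W$. For a nondegenerate quadratic form on $\mathbb{F}_2^{2j}$ one completes the square (using nondegeneracy to absorb the linear term), so the $V$-sum equals $\pm 2^{j}$ and is never zero; the $W$-sum equals $2^{m-2j}$ when the affine part vanishes on $W$ and $0$ otherwise. Hence $\sum_x(-1)^{f(x)}\in\{0,\pm 2^{m-j}\}$, which yields exactly the three weight values $2^{m-1}$ and $2^{m-1}\pm 2^{m-j-1}$ displayed in the table.

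Next I would count. Within one coset of rank $2j$, a nonzero sum forces the linear part to vanish on $W$ (one admissible value out of $2^{m-2j}$), leaves the $2^{2j}$ linear parts on $V$ free, and the constant free; translating by the constant flips the overall sign, so the two extreme weights each occur $2^{2j}$ times per coset, while the remaining $2^{m+1}-2^{2j+1}$ codewords have weight $2^{m-1}$. It then remains to count cosets, i.e.\ alternating forms of rank $2j$: choosing the radical (a codimension-$2j$ subspace) contributes $\binom{m}{2j}_2$, and a nondegenerate alternating form on the $2j$-dimensional quotient is counted by $|\mathrm{GL}(2j,2)|/|\mathrm{Sp}(2j,2)| = 2^{j^2-j}\prod_{i=1}^{j}(2^{2i-1}-1)$. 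Multiplying these and simplifying with $\prod_{i=1}^{2j}(2^i-1)=\prod_{i=1}^{j}(2^{2i-1}-1)\prod_{i=1}^{j}(2^{2i}-1)$ collapses the count of rank-$2j$ cosets to $2^{j^2-j}\prod_{i=m-2j+1}^{m}(2^i-1)\big/\prod_{i=1}^{j}(2^{2i}-1)$; multiplying by the per-coset multiplicity $2^{2j}$ reproduces precisely the frequency $2^{j^2+j}\prod_{i=m-2j+1}^{m}(2^i-1)\big/\prod_{i=1}^{j}(2^{2i}-1)$ in the table.

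The step I expect to be the main obstacle is the middle weight $2^{m-1}$, whose frequency is not attached to a single rank but aggregates the weight-$2^{m-1}$ contributions from every coset together with the affine codewords of $\mathcal{R}(1,m)$; reducing the resulting double count to the bracketed closed form (the sum over $\ell$ in the table) requires careful tracking of the per-coset multiplicities and a Vandermonde- or telescoping-type simplification of the products. As a shortcut, since the weight distribution itself is classical, I could instead take the formula of \cite{Li} and \cite[Ch.8, Theorem~8]{MacSlo} as given and verify by direct algebraic manipulation that the parametrization by $j$ with weights $2^{m-1}\pm 2^{m-j-1}$ matches it term by term; this is the route indicated by the phrase that the formula was \emph{manipulated}, trading the full enumerative derivation above for a purely algebraic check.
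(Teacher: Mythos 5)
Your proposal is correct, but it takes a genuinely different route from the paper: the paper gives no proof of this lemma at all --- it is quoted from \cite{Li} and \cite[Ch.~15]{MacSlo}, with the formula only ``manipulated'' into the tabulated $j$-parametrization. You instead supply the full classical derivation (essentially the Dickson-theory argument underlying the cited sources): identify codewords with Boolean functions $Q+\ell+c$ of degree at most $2$, stratify the cosets of $\mathcal{R}(1,m)$ by the even rank $2j$ of the alternating form $B_Q$, split $\mathbb{F}_2^m$ into the radical $W$ and a nondegenerate part $V$ to evaluate the exponential sum, and count rank-$2j$ cosets as (number of codimension-$2j$ radicals) times $|\mathrm{GL}(2j,2)|/|\mathrm{Sp}(2j,2)|$. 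All of these steps are sound: the per-coset multiplicities ($2^{2j}$ for each extreme weight, $2^{m+1}-2^{2j+1}$ for the middle weight) and the coset count $N_j=2^{j^2-j}\prod_{i=m-2j+1}^m(2^i-1)\big/\prod_{i=1}^j(2^{2i}-1)$ are exactly right, and multiplying gives the tabulated frequencies. Two remarks. First, the step you flag as the main obstacle --- the middle weight --- is in fact immediate from what you already have: $A_{2^{m-1}}=2\sum_j N_j\,(2^m-2^{2j})$, and $N_j\cdot 2^{2j}(2^{m-2j}-1)=2^{j^2+j}\prod_{i=m-2j}^m(2^i-1)\big/\prod_{i=1}^j(2^{2i}-1)$, which is term by term the bracketed sum in the table; the extra factor $2^{m-2j}-1$ simply extends the product down to $i=m-2j$, and it vanishes at $j=m/2$ when $m$ is even, which is exactly why the table's sum stops at $\lfloor(m-1)/2\rfloor$. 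No Vandermonde or telescoping identity is needed, so your proposed fallback of merely re-verifying the cited formula is unnecessary. Second, a small wording fix: a nonzero $W$-sum forces $(Q+\ell)|_W\equiv 0$, i.e.\ $\ell|_W=Q|_W$, not that the linear part itself vanishes on $W$; your count of $2^{2j}$ admissible $\ell$ per coset is nevertheless correct, since the restriction map on linear functionals is surjective with fibers of size $2^{2j}$. As for what each approach buys: the paper's citation keeps its exposition short, since the lemma is only an ingredient for the log-concavity theorem that follows, while your argument is self-contained and exhibits the rank stratification that explains where the $j$-parametrization of weights and frequencies comes from.
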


\begin{thm}
The second order Reed-Muller code $\mathcal R(2, m)$ is log-concave for $m \ge 2$.
\end{thm}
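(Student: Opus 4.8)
The plan is to exploit the palindromic shape of the weight distribution together with the explicit product formulas of the preceding Lemma. Write $J=\lfloor m/2\rfloor$ and, for $0\le j\le J$, set
\[ N_j=2^{j^2+j}\,\frac{\prod_{i=m-2j+1}^{m}(2^i-1)}{\prod_{i=1}^{j}(2^{2i}-1)},\qquad M=2\Bigl(2^m-1+\sum_{\ell=1}^{\lfloor(m-1)/2\rfloor}N_\ell\,(2^{m-2\ell}-1)\Bigr), \]
so that $N_0=1=A_0$ (empty products), and $N_j$ is the common frequency of the two weights $2^{m-1}\mp 2^{m-j-1}$. Since the all-ones word lies in $\mathcal R(2,m)$, the weight distribution is symmetric, so the nonzero weight distribution read in increasing order of weight is the palindrome
\[ 1=N_0,\ N_1,\ \dots,\ N_{J-1},\ N_J,\ M,\ N_J,\ N_{J-1},\ \dots,\ N_1,\ N_0=1. \]
Thus log-concavity reduces to three families: (a) $N_j^2\ge N_{j-1}N_{j+1}$ for $1\le j\le J-1$; (b) the transition inequality $N_J^2\ge N_{J-1}M$; (c) the peak inequality $M^2\ge N_J^2$. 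The range $2\le m\le 6$ is already settled by the Magma computation of the preceding Proposition, so I would assume $m\ge 7$ throughout.

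For (a) I would compute the consecutive ratio
\[ r_j:=\frac{N_{j+1}}{N_j}=\frac{2^{2j+2}\,(2^{m-2j-1}-1)(2^{m-2j}-1)}{2^{2j+2}-1}, \]
which turns (a) into the statement that $\{r_j\}$ is decreasing, i.e.\ $r_{j-1}\ge r_j$. Forming $r_{j-1}/r_j$ and cancelling powers of $2$ produces $\tfrac14$ times three factors each of the form $(2^{a+2}-1)/(2^a-1)$ with $a\ge 1$; since $(2^{a+2}-1)-4(2^a-1)=3>0$, every such factor exceeds $4$, so $r_{j-1}/r_j>\tfrac14\cdot 4^3=16>1$. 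This clears the whole lower half with a large margin, including the boundary case $N_1^2\ge N_0N_2$.

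The peak (c) is also easy: extracting a single summand of $M$ gives $M\ge 2N_J$ when $m$ is odd (the $\ell=J$ term equals $N_J$), and $M\ge 2\cdot 3N_{J-1}=6N_{J-1}>4N_{J-1}\ge N_J$ when $m$ is even (the top summand is $3N_{J-1}$ and $r_{J-1}=3\cdot 2^m/(2^m-1)<4$). The real work is the transition (b), equivalently $M\le r_{J-1}N_J$. Writing $t_\ell:=N_\ell(2^{m-2\ell}-1)$, when $m$ is odd one finds $r_{J-1}=21\cdot 2^{m-1}/(2^{m-1}-1)\approx 21$, while the sum defining $M$ is dominated by its top two terms $t_J=N_J$ and $t_{J-1}=7N_{J-1}=N_J/3$, giving $M\approx\tfrac83 N_J$, far below $r_{J-1}N_J$. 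The genuinely tight case is $m$ even, where $r_{J-1}=3\cdot 2^m/(2^m-1)$ sits only just above $3$ while $M\approx 2.1\,N_J$. Here I would prove $M<3N_J\le r_{J-1}N_J$ by splitting $M$ as $2t_{J-1}$ plus a remainder: the top part gives $2t_{J-1}=6N_{J-1}=2(1-2^{-m})N_J<2N_J$, and the remainder $2(2^m-1)+2\sum_{\ell=1}^{J-2}t_\ell$ is pushed below $N_J$ by the rapid decay $t_{J-2}/t_{J-1}=5/r_{J-2}=5/105\approx 1/21$ (and faster below), so the tail is a fast geometric series of total at most $2t_{J-2}\approx N_J/10$, while $2^m$ is negligible against the super-exponential $N_J$. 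I expect this upper bound on $M$ for even $m$ to be the main obstacle, since it is the only inequality whose safety margin is a bounded constant ($M/N_J\approx 2.1$ versus $r_{J-1}\to 3$) rather than one growing with $m$, so it must be separated uniformly in $m$ with genuine (though modest) care.
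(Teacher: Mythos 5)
Your proposal is correct and follows essentially the same route as the paper's proof: the same reduction by symmetry to the increasing half of the palindrome, the same explicit ratio $N_{j+1}/N_j=\tfrac{2^{2j+2}}{2^{2j+2}-1}(2^{m-2j-1}-1)(2^{m-2j}-1)$, and the same geometric-decay bound on the summands $t_\ell=N_\ell(2^{m-2\ell}-1)$ of the middle frequency $A_{2^{m-1}}$ to settle the transition inequality, split into the cases $m$ odd and $m$ even with the identical critical ratios $21\cdot\tfrac{2^{m-1}}{2^{m-1}-1}$ and $3\cdot\tfrac{2^m}{2^m-1}$. Your ratio-monotonicity treatment of the interior terms and the one-summand extraction for the peak are mild streamlinings of the paper's Cases 1--2 and of its unimodality argument, but the decomposition and the key estimates coincide.
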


\begin{proof}
We first show that the nonzero weight distribution of $\mathcal R(2, m)$ is unimodal with largest term $A_{2^{m-1}}$.
For $1\le j\le \lfloor{m\over 2}\rfloor-1$, we have
\begin{equation*}
	A_{2^{m-1}-2^{m-(j+1)-1}}/A_{2^{m-1}-2^{m-j-1}}={2^{2j+2}\over 2^{2j+2}-1}(2^{m-2j-1}-1)(2^{m-2j}-1)\ge 1.
\end{equation*}
To show that $A_{2^{m-1}}$ is the largest term, we consider the cases when $m$ is even and  odd separately.
If $m$ is odd, then
	\[A_{2^{m-1}- 2^{m-\left\lfloor{m\over 2}\right\rfloor-1}}\le 2^{{m-1\over 2}^2+{m-1\over 2}}{\prod_{i=2}^m(2^i-1)\over \prod_{i=1}^{m-1\over 2}(2^{2i}-1)}\le A_{2^m-1}.\]
On the other hand, if $m$ is even, then
\begin{align*}
	A_{2^{m-1}- 2^{m-\left\lfloor{m\over 2}\right\rfloor-1}}&\le 2^{{m\over 2}^2+{m\over 2}}{\prod_{i=1}^m(2^i-1)\over \prod_{i=1}^{m\over 2}(2^{2i}-1)}\\&=2^{{m\over 2}^2-{m\over 2}}{\prod_{i=1}^{m-1}(2^i-1)\over \prod_{i=1}^{{m\over 2}-1}(2^{2i}-1)}2^m\\&\le 2\left(2^{{m\over 2}^2-{m\over 2}}{\prod_{i=1}^{m}(2^i-1)\over \prod_{i=1}^{{m\over 2}-1}(2^{2i}-1)}\right)\le A_{2^m-1}.
\end{align*}
Since the weight distribution of $\mathcal R(2,m)$ is symmetric with respect to $A_{2^{m-1}}$, it is sufficient to show that the sequence $A_{2^{m-1}- 2^{m-j-1}}$ is log-concave for $1\le j\le \lfloor {m\over 2}\rfloor$. We consider three cases.

\medskip

Case 1) Let $j=1$. Then
\begin{align*}
A_{2^{m-1}-2^{m-3}}&={64\over 21}(2^{m-3}-1)(2^{m-2}-1)(2^{m-1}-1)(2^{m}-1)\\&={16\over 7}(2^{m-3}-1)(2^{m-2}-1)A_{2^{m-1}-A^{m-2}}\\&\le (A_{2^{m-1}-A^{m-2}})^2.
\end{align*}

Case 2) \noindent Suppose $2\le j\le \lfloor{m\over 2}\rfloor-1$. Then we have
\begin{align*}
	&A_{2^{m-1}- 2^{m-(j+1)-1}}A_{2^{m-1}- 2^{m-(j-1)-1}}\\&=(A_{2^{m-1}- 2^{m-j-1}})^2\left({2^{2j+2}-4\over 2^{2j+2}-1}\right){\prod_{i=m-2j-1}^{m-2j}(2^i-1)\over \prod_{i=m-2j+1}^{m-2j+2}(2^{2i}-1)}\\&\le (A_{2^{m-1}- 2^{m-j-1}})^2.
\end{align*}

Case 3) Suppose $j=\lfloor {m\over 2}\rfloor$. We consider two subcases as follows.

Firstly, we assume that $m$ is odd. We want to show that
\[A_{2^{m-1}}A_{2^{m-1}-2^{m-{m-1\over 2}}}\le A_{2^{m-1}-2^{m-{m-1\over 2}-1}}^2,\]
that is,
\[A_{2^{m-1}}\le A_{2^{m-1}-2^{m-{m-1\over 2}-1}}^2/A_{2^{m-1}-2^{m-{m-1\over 2}}}.\]
Since
\begin{align*}
	&A_{2^{m-1}-2^{m-{m-1\over 2}-1}}/A_{2^{m-1}-2^{m-{m-1\over 2}}}\\&=2^{({m-1\over 2})^2+({m-1\over 2})}{\prod_{i=2}^m(2^i-1)\over \prod_{i=1}^{m-1\over 2}(2^{2i}-1)}\bigg/2^{({m-3\over 2})^2+({m-3\over 2})}{\prod_{i=4}^m(2^i-1)\over \prod_{i=1}^{m-3\over 2}(2^{2i}-1)}\\&=2^{2({m-1\over 2})}\cdot {(2^2-1)(2^3-1)\over 2^{2({m-1\over 2})}-1}\\&=21\cdot {2^{m-1}\over 2^{m-1}-1},
\end{align*}
we only need to show that
\begin{equation*}
A_{2^{m-1}}\le 21\cdot {2^{m-1}\over 2^{m-1}-1}\cdot A_{2^{m-1}-2^{m-{m-1\over 2}-1}}.
\end{equation*}
Let $B_\ell=(2^{m-2\ell}-1)\cdot A_{2^{m-1}-2^{m-\ell-1}}$ for $1\le \ell \le {m-1\over 2}$. Then
\[A_{2^{m-1}}=2\left(2^m-1+\sum_{\ell=1}^{{m-1\over 2}}B_\ell\right).\]
For $1\le \ell\le {m-1\over 2}-1$,
\begin{align*}
B_{\ell+1}/B_\ell	&=\left({2^{m-2\ell-2}-1\over 2^{m-2\ell}-1}\right)\cdot A_{2^{m-1}-2^{m-(\ell+1)-1}}/A_{2^{m-1}-2^{m-\ell-1}}\\ &={2^{2\ell+2}\over 2^{2\ell+2}-1}(2^{m-2\ell-2}-1)(2^{m-2\ell-1}-1).
\end{align*}
So
\begin{align*}
B_\ell&={2^{2\ell+2}-1\over 2^{2\ell+2}}\cdot{1\over (2^{m-2\ell-2}-1)(2^{m-2\ell-1}-1)}\cdot B_{\ell+1}\\&\le {1\over (2^{m-2\ell-2}-1)(2^{m-2\ell-1}-1)}\cdot B_{\ell+1}\le {1\over 3}\cdot B_{\ell+1}.
\end{align*}
 \begin{align*}
 A_{2^{m-1}}&=2\left(2^m-1+\sum_{\ell=1}^{m-1\over 2}	B_\ell\right)\le 2\left(2^m-1+\sum_{\ell=1}^{m-1\over 2}	\left({1\over 3}\right)^{\ell-1}\cdot B_{m-1\over 2}\right)\\&\le 2\left(2^m-1+{3\over 2}\cdot B_{m-1\over 2}\right)=3\cdot A_{2^{m-1}-2^{m-{m-1\over 2}-1}}+2^{m+1}-2\\&\le 21\cdot {2^{m-1}\over 2^{m-1}-1}\cdot A_{2^{m-1}-2^{m-{m-1\over 2}-1}}.
 \end{align*}
The last inequality comes from that
\[A_{2^{m-1}-2^{m-{m-1\over 2}-1}}=(2^{{m-1\over 2}^2+{m-1\over 2}}){\prod_{i=2}^m(2^i-1)\over \prod_{i=1}^{m-1\over 2}(2^{2i}-1)}>2(2^m-1).\]
Secondly, we assume that $m$ is even. Then we have
\begin{equation*}
A_{2^{m-1}-2^{m-{m\over 2}-1}}/A_{2^{m-1}-2^{m-{m\over 2}}}=3\cdot {2^{m}\over 2^{m}-1}.
\end{equation*}
So we need to show that
\begin{equation*}
A_{2^{m-1}}\le 3\cdot {2^{m}\over 2^{m}-1}\cdot A_{2^{m-1}-2^{m-{m\over 2}-1}}.
\end{equation*}
Let $B_\ell=(2^{m-2\ell}-1)\cdot A_{2^{m-1}-2^{m-\ell-1}}$ for $1\le \ell \le {m\over 2}$. Then
\begin{align*}
 A_{2^{m-1}}&=2\left(2^m-1+\sum_{\ell=1}^{{m\over 2}-1}	B_\ell\right)\le 2\left(2^m-1+\sum_{\ell=1}^{{m\over 2}-1}	\left({1\over 3}\right)^{\ell-1}\cdot B_{{m\over 2}-1}\right)\\&\le 2\left(2^m-1+{3\over 2}\cdot B_{{m\over 2}-1}\right)\le 2\left(2^m-1+{1\over 2}\cdot B_{{m\over 2}}\right)\\&= A_{2^{m-1}-2^{m-{m\over 2}-1}}+2^{m+1}-2\le 3\cdot {2^{m-1}\over 2^{m-1}-1}\cdot A_{2^{m-1}-2^{m-{m-1\over 2}-1}}.
\end{align*}
This completes the proof.
\end{proof}

\begin{lem}(\cite{Li})
	The weight distribution of the second order homogeneous Reed-Muller code $HRM_q(2, m)$ over $\mathbb F_q$ is given as follows.
\begin{center}
\begin{tabular}{|c|c|}
 \hline Weight $w$ & Frequency $A_w$\\ \hline
	$0$ & $1$ \\[2ex] $q^{m}-q^{m-1}$ & $q^m-1+\sum_{\ell=1}^{\lfloor {m-1\over 2}\rfloor}(q^{\ell^2+\ell}){\prod_{i=m-2\ell}^m(q^i-1)\over \prod_{i=1}^\ell(q^{2i}-1)}$
	 \\[2ex]
	 \begin{tabular}{@{}c@{}}$q^{m}-q^{m-1}-\tau q^{m-j-1}$\\ ($1\le j\le \lfloor {m\over 2}\rfloor,\tau=\pm 1$)\end{tabular}
  	& ${q^{j^2}(q^j+\tau)\over 2}{\prod_{i=m-2j+1}^m(q^i-1)\over \prod_{i=1}^j(q^{2i}-1)}$\\\hline
\end{tabular}
\end{center}
and the weight distribution of the second order projective Reed-Muller code $PRM_q(2, m)$ over $\mathbb F_q$ is as follows.
\begin{center}
\begin{tabular}{|c|c|}
 \hline Weight $w$ & Frequency $A_w$\\ \hline
	$0$ & $1$ \\[2ex] $q^{m}$ & $q^{m+1}-1+\sum_{\ell=1}^{\lfloor {m\over 2}\rfloor}(q^{\ell^2+\ell}){\prod_{i=m-2\ell+1}^m(q^i-1)\over \prod_{i=1}^\ell(q^{2i}-1)}$\\[2ex]
	\begin{tabular}{@{}c@{}}$q^{m}-\tau q^{m-j}$\\ ($1\le j\le \lfloor {m+1\over 2}\rfloor, \tau=\pm 1$) \end{tabular} & ${q^{j^2}(q^j+\tau)\over 2}{\prod_{i=m-2j+2}^m(q^i-1)\over \prod_{i=1}^j(q^{2i}-1)}$\\ \hline
\end{tabular}
\end{center}
\end{lem}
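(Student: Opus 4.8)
The plan is to read off both weight distributions from the classification of quadratic forms over $\mathbb{F}_q$, since the nonzero codewords of $HRM_q(2,m)$ and $PRM_q(2,m)$ are precisely the evaluation vectors of the homogeneous degree-two polynomials, that is, of the quadratic forms $Q$ (in $m$ variables for the homogeneous code, evaluated on affine points, and in $m+1$ variables for the projective code, evaluated on representatives of $\mathbb{P}^m(\mathbb{F}_q)$). First I would make this dictionary precise and check that the evaluation map is injective on quadratic forms, so that the multiplicity of each weight is literally the number of forms producing it. The key structural input is that, after a linear change of coordinates fixing the code, a quadratic form $Q$ is determined by just two invariants: its rank $r$ and, when $r$ is even, its type $\tau\in\{+1,-1\}$ (hyperbolic versus elliptic). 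Everything below is organized by these invariants.

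Next I would compute the weight attached to each pair $(r,\tau)$. The weight of a codeword equals the length minus the number of points of the evaluation set on which $Q$ vanishes, and this vanishing count is the classical number of points on an affine or projective quadric of rank $r$: it takes the balanced value when $r$ is odd and deviates by a $\pm$ term controlled by $\tau$ when $r$ is even. For the projective code this is the point count of a quadric in $\mathbb{P}^m(\mathbb{F}_q)$, and substituting it into (length) minus (zeros) gives the generic weight $q^m$ together with $q^m-\tau q^{m-j}$ for rank $r=2j$; the factor $q-1$ that appears in the affine deviation cancels upon passing to projective points, which is exactly why the projective deviation term carries no such factor. The homogeneous table is obtained by the same computation with the affine (suitably normalized) count, and matching the two families of weights to the left columns of the tables is then a careful substitution.

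Then I would supply the multiplicities. The frequency of a given weight is the number of quadratic forms of the corresponding rank (and type), and this is the classical enumeration of quadratic forms of fixed rank over $\mathbb{F}_q$ --- equivalently of symmetric (respectively alternating, in characteristic two) matrices of fixed rank. These counts are the Gaussian-type products $q^{\ell^2+\ell}\prod_{i}(q^i-1)/\prod_i(q^{2i}-1)$ appearing in the tables; the odd-rank forms, all of which carry the generic weight, are collected into the single large frequency, while for each even rank $r=2j$ the two types are separated by the factor $(q^j+\tau)/2$, which is precisely the proportion of rank-$2j$ forms of each type. Assembling the rank-by-rank counts reproduces both tables.

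The main obstacle is this last enumeration rather than the weight computation. One must count forms of each fixed rank \emph{and} type exactly, get the $(q^j+\tau)/2$ split right, and reconcile the characteristic-two case, where quadratic forms are no longer interchangeable with symmetric bilinear forms and the point counts must be handled through the associated alternating form. A secondary technical point is verifying injectivity of the evaluation map, so that distinct forms are not conflated, together with pinning down the precise deviation term in the homogeneous weights. Once these are settled, both tables follow by summing the contributions over all admissible ranks $r$ and types $\tau$ and identifying the resulting products with those stated. Since the needed rank enumerations and quadric point counts are exactly those established by Li, the cleanest route is to cite them and perform the substitution, but the quadratic-form argument above shows the result is structurally forced.
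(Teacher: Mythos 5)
The paper does not actually prove this lemma: it is quoted verbatim from Li's paper \cite{Li}, so there is no internal argument to compare yours against. Your outline is the standard quadratic-forms derivation that underlies Li's result, and it is structurally correct: codewords correspond to quadratic forms (in $m$ variables evaluated on affine points for $HRM_q(2,m)$, in $m+1$ variables evaluated on projective points for $PRM_q(2,m)$), the weight of a codeword depends only on the rank $r$ and, for even $r=2j$, the type $\tau$ of the form, the weights come from the classical affine/projective quadric point counts, and the frequencies come from the enumeration of forms of fixed rank and type, with the $(q^j+\tau)/2$ split between hyperbolic and elliptic forms and the odd-rank counts pooled into the generic weight. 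You correctly flag the two real technical burdens (the exact rank-and-type enumeration, and the characteristic-two case, where one must work with the quadratic form and its associated alternating form rather than a symmetric bilinear form), though your write-up defers both to the classical literature rather than executing them, so as it stands it is a plan plus a citation --- which is, in effect, what the paper itself does. One genuinely useful by-product of your route: the affine computation gives the homogeneous weights as $q^m-q^{m-1}-\tau(q-1)q^{m-j-1}$, i.e., with a factor $(q-1)$ in the deviation term; this is the form the paper itself uses later (it sets $A^q_{\tau,j}=A_{q^m-q^{m-1}-\tau q^{m-j-1}(q-1)}$ in the proof of its theorem on $HRM_q(2,m)$), so the table as printed in the lemma appears to drop that factor (the two agree only when $q=2$). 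Your quadric point-count argument detects exactly this discrepancy, which a bare citation cannot.
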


\begin{lem}(\cite{Li}) \label{lem-RM}
There is a one-to-one correspondence between codewords of the second order projective Reed-Muller code $PRM_q(2, m)$	and codewords of the second order homogeneous Reed-Muller code $HRM_q(2, m+1)$.
\end{lem}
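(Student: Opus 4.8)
The plan is to prove the correspondence at the level of the defining polynomials rather than by comparing weights. Both codes are linear evaluation codes built from the same $\F_q$-vector space $\mathcal Q$ of homogeneous quadratic forms in the $m+1$ variables $x_0,\dots,x_m$: for $HRM_q(2,m+1)$ this is the definition, while for $PRM_q(2,m)$ it holds because $\mathbb P^m(\F_q)$ has $m+1$ homogeneous coordinates, so its order-$2$ graded piece is again carried by $\mathcal Q$. I would first write each code as the image of $\mathcal Q$ under an evaluation map: $\mathrm{ev}_P\colon\mathcal Q\to PRM_q(2,m)$, evaluating a form at a fixed set of representatives of the points of $\mathbb P^m(\F_q)$, and $\mathrm{ev}_H\colon\mathcal Q\to HRM_q(2,m+1)$, evaluating the same form on the homogeneous point set used to define $HRM_q(2,m+1)$ in $\F_q^{m+1}$.

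I would then define the correspondence $\Phi$ on codewords by $\Phi(\mathrm{ev}_P(Q))=\mathrm{ev}_H(Q)$, i.e.\ by pairing the two codewords produced by one and the same quadratic form $Q$. The whole statement reduces to checking that $\Phi$ is well defined and bijective, and this is equivalent to the single identity $\ker\mathrm{ev}_P=\ker\mathrm{ev}_H$: if the two evaluation maps kill exactly the same forms, then $\mathcal Q/\ker\mathrm{ev}_P$ and $\mathcal Q/\ker\mathrm{ev}_H$ are canonically identified and $\Phi$ is a linear bijection of the two codeword sets. The point that makes the two kernels comparable is the homogeneity of the forms: since $Q(\lambda v)=\lambda^2 Q(v)$, a form vanishes at a nonzero vector $v$ if and only if it vanishes on the entire line $\F_q v$. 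Consequently, as soon as each of the two evaluation point sets meets every line through the origin (equivalently, contains a representative of every point of $\mathbb P^m(\F_q)$), both kernels are forced to equal the subspace of forms vanishing on all nonzero vectors of $\F_q^{m+1}$, and hence coincide.

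The step I expect to require the most care is precisely this comparison of vanishing loci: I would have to read off from the definitions of $PRM_q(2,m)$ and $HRM_q(2,m+1)$ that each evaluation set does meet every line through the origin, and then pin down the common kernel explicitly. For $q\geqslant 3$ the degree $2$ does not exceed $q-1$, the threshold beyond which the field relations $x_i^q=x_i$ force identifications among forms, so I expect this common kernel to be trivial and both codes to be isomorphic to $\mathcal Q$ of dimension $\binom{m+2}{2}$; for small $q$ one must instead describe the relations cutting out $\ker\mathrm{ev}_P$ and verify they are the same relations cutting out $\ker\mathrm{ev}_H$. A convenient cross-check is the classification of quadratic forms over $\F_q$ by rank and by type $\tau=\pm 1$, which is exactly the invariant indexing the weight tables of the previous lemma; the shared parametrization by these rank--type classes is what ultimately matches the two codes codeword-for-codeword, even though their actual weights differ because the two point sets have different sizes. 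Once $\ker\mathrm{ev}_P=\ker\mathrm{ev}_H$ is established, $\Phi$ is the desired one-to-one correspondence for every $m$ in the stated range.
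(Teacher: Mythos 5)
The paper itself gives no proof of this lemma --- it is simply quoted from Li's paper \cite{Li} --- so there is no internal argument to compare yours against; judged on its own, your proof is correct, and it is the natural (and essentially Li's) argument. The heart of it is exactly right: both codes are images of the same space $\mathcal{Q}$ of quadratic forms in $m+1$ variables under evaluation maps, and homogeneity ($Q(\lambda v)=\lambda^2 Q(v)$) forces a form vanishing at one representative of a projective point to vanish on the whole punctured line through it, so $\ker \mathrm{ev}_P$ and $\ker \mathrm{ev}_H$ both equal the space of forms vanishing on every nonzero vector of $\mathbb{F}_q^{m+1}$ (vanishing at the origin being automatic for a homogeneous form). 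That kernel identity alone makes $\Phi$ well defined, injective, and surjective for every $q$; your later hedge about "pinning down the common kernel explicitly" for small $q$ is unnecessary, since the bijection never requires knowing what the common kernel is --- only that the two kernels agree. (In fact the common kernel is trivial for all $q$, including $q=2$, but that only matters for computing the codes' cardinality, not for the correspondence.)

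One point worth making explicit in your write-up: a bare "one-to-one correspondence between codewords" is vacuous --- any two equinumerous codes admit one --- and what the surrounding theorem actually uses is that this \emph{canonical} correspondence is weight-compatible. Your own line-counting observation delivers this for free: the nonzero vectors where $Q$ does not vanish split into punctured lines, each contributing $q-1$ affine nonzeros per projective nonzero, so $\mathrm{wt}(\mathrm{ev}_H(Q)) = (q-1)\cdot \mathrm{wt}(\mathrm{ev}_P(Q))$. Stating this scaling explicitly (rather than only noting that "the actual weights differ") is what ties the weight distribution of $PRM_q(2,m)$ to that of $HRM_q(2,m+1)$ term by term, which is precisely how the paper applies the lemma to transfer log-concavity.
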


\begin{thm}
	The second order homogeneous Reed-Muller code $HRM_q(2, m)$ and the second order projective Reed-Muller code $PRM_q(2, m)$ are $1$-gap log-concave  or log-concave for $m\ge 2$.
\end{thm}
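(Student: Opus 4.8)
The plan is to argue directly from the explicit weight distributions in the lemma preceding the statement, treating $HRM_q(2,m)$ in full and then transferring the ``tail'' part of the argument to $PRM_q(2,m)$ through Lemma~\ref{lem-RM}. First I would fix the ordering of the nonzero weights of $HRM_q(2,m)$. Writing $j_0=\lfloor m/2\rfloor$, abbreviating by $a_j^{\tau}$ the frequency at weight $q^m-q^{m-1}-\tau q^{m-j-1}$ and by $C$ the central frequency at $q^m-q^{m-1}$, the nonzero weight distribution in increasing order of weight is
\[
1,\; a_1^{+},\,a_2^{+},\dots,a_{j_0}^{+},\; C,\; a_{j_0}^{-},\dots,a_2^{-},a_1^{-},
\]
because $q^{m-j-1}$ decreases as $j$ grows while the $\tau=+1$ (resp. $\tau=-1$) weights lie below (resp. above) the center. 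The essential difference from the binary $\mathcal R(2,m)$ treated just above is that $a_j^{+}/a_j^{-}=(q^{j}+1)/(q^{j}-1)\neq 1$, so the distribution is genuinely asymmetric, and it is this asymmetry that opens the door to a single gap.

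Next I would dispose of the routine log-concavity conditions. Within each tail the ratio $a_{j+1}^{\tau}/a_{j}^{\tau}=q^{2j+1}\,\frac{q^{j+1}+\tau}{q^{j}+\tau}\cdot\frac{(q^{m-2j-1}-1)(q^{m-2j}-1)}{q^{2j+2}-1}$ is explicit, and every triple lying entirely inside one tail satisfies $(a_j^{\tau})^2\ge a_{j-1}^{\tau}a_{j+1}^{\tau}$ by the same ratio-monotonicity estimate used in Cases~1 and 2 of the $\mathcal R(2,m)$ proof; the sign $\tau$ merely rescales all three terms of such a triple by a common constant and hence does not affect the inequality. The step $(a_1^{\pm})^2\ge 1\cdot a_2^{\pm}$ out of the isolated value $a_0=1$ is immediate since each $a_j^{\tau}$ is a large polynomial in $q$, and the central condition $C^2\ge a_{j_0}^{+}a_{j_0}^{-}$ holds because $C$ is the global maximum. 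Thus only the two conditions straddling the peak can be in doubt:
\[
(\mathrm I):\ (a_{j_0}^{+})^2\ge a_{j_0-1}^{+}\,C,
\qquad
(\mathrm{III}):\ (a_{j_0}^{-})^2\ge C\,a_{j_0-1}^{-}.
\]

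The heart of the proof, and the step I expect to be the main obstacle, is to show that $(\mathrm{III})$ always holds while only $(\mathrm I)$ may fail. Comparing the two left-hand sides gives $\frac{(a_{j_0}^{+})^2/a_{j_0-1}^{+}}{(a_{j_0}^{-})^2/a_{j_0-1}^{-}}=\left(\frac{q^{j_0}+1}{q^{j_0}-1}\right)^{2}\frac{q^{j_0-1}-1}{q^{j_0-1}+1}<1$ for $q\ge 3$, so the left-hand side of $(\mathrm{III})$ dominates that of $(\mathrm I)$ while the right-hand sides coincide; consequently at most one of the two can fail, and it must be $(\mathrm I)$ once $(\mathrm{III})$ is established. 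To prove $(\mathrm{III})$ I would bound $C$ from above exactly as in Case~3 of the $\mathcal R(2,m)$ argument: writing $C=q^m-1+\sum_{\ell}(\cdots)$ and comparing consecutive summands shows they grow at least geometrically (ratio of order $q^{2}$), so $C$ is at most a bounded multiple of its top summand, which is comparable to the adjacent frequencies $a_{j_0}^{\pm}$ up to an explicit factor. Substituting this bound reduces $(\mathrm{III})$ to a one-variable polynomial inequality in $q$, valid for all $q$ beyond some threshold, with the finitely many small $q$ checked by Magma. The delicacy here is that $(\mathrm I)$ and $(\mathrm{III})$ are genuinely tight, so the bound on $C$ must be sharp enough to separate a true failure of $(\mathrm I)$ from the unconditional validity of $(\mathrm{III})$.

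Finally I would transfer the conclusion to $PRM_q(2,m)$. By Lemma~\ref{lem-RM} its codewords are in bijection with those of $HRM_q(2,m+1)$, and comparing the two weight tables shows that every noncentral frequency of $HRM_q(2,m+1)$ equals $(q^{m+1}-1)$ times the corresponding frequency of $PRM_q(2,m)$; hence the tail triples, being scale-invariant, obey the identical inequalities already verified, and only the two central frequencies differ and must be inserted by hand. Re-running the center estimate of the previous paragraph with the $PRM$ central value then yields the same dichotomy. The boundary cases $m=2,3$, where $j_0=1$ and the left neighbor of $a_1^{+}$ is the isolated value $a_0=1$ rather than $a_0^{+}$, together with the small values of $q$, are disposed of by direct computation, completing the proof that both families are log-concave or $1$-gap log-concave for $m\ge 2$.
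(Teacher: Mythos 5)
You have the dichotomy exactly backwards, and this is a fatal error rather than a presentational one. Your claim that the central condition $C^2\ge a_{j_0}^{+}a_{j_0}^{-}$ ``holds because $C$ is the global maximum'' is false: the central frequency need not be the maximum. Concretely, for $HRM_3(2,2)$ the nonzero weight distribution is $1,\,12,\,8,\,6$ (here $a_1^{+}=q(q^2-1)/2=12$, $C=q^2-1=8$, $a_1^{-}=q(q-1)^2/2=6$), and $8^2=64<72=12\cdot 6$; for $HRM_3(2,4)$ it is $1,\,1560,\,21060,\,18800,\,16848,\,780$, and $18800^2=353440000<354818880=21060\cdot 16848$, while both of your straddling conditions $(\mathrm I)$ and $(\mathrm{III})$ hold with enormous slack ($21060^2\ge 1560\cdot 18800$ and $16848^2\ge 18800\cdot 780$). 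In other words, for even $m$ and $q\ge 3$ one has $a_{j_0}^{+}>C$, and the central condition is precisely the unique condition that can fail: it is the source of the ``$1$-gap'' in the statement. This is why the paper says the sequence is unimodal \emph{except} at $A_{q^m-q^{m-1}}$, and why its follow-up propositions recover genuine log-concavity exactly when the relevant $m$ is odd, i.e.\ when $C$ really is the maximum. The paper's proof establishes both of your conditions $(\mathrm I)$ and $(\mathrm{III})$ unconditionally --- its $j=\lfloor m/2\rfloor$ case is carried out for both $\tau=\pm 1$ simultaneously, bounding $C$ from above via the geometric-type growth of the summands $B_\ell$ --- and simply leaves the central condition unaddressed, which yields ``at most one gap.'' Your accounting instead proves (by a false argument) the condition that can fail, and leaves unproved the condition $(\mathrm I)$ that never fails; once the false ``global maximum'' premise is deleted, your argument permits two simultaneous failures (at the center and at $(\mathrm I)$), so it does not establish the theorem.

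Two smaller points. First, your justification of the within-tail inequalities --- that the sign $\tau$ ``rescales all three terms of such a triple by a common constant'' --- is incorrect: the factor $q^{j}+\tau$ depends on $j$, so $a_j^{+}/a_j^{-}=(q^{j}+1)/(q^{j}-1)$ varies along the tail. The inequalities are still true, but one must carry the factor $(q^{j+1}+\tau)(q^{j-1}+\tau)/(q^{j}+\tau)^2$ through the estimate, as the paper does. Second, in the transfer to $PRM_q(2,m)$, besides the two central frequencies, the end condition $(b_1^{+})^2\ge 1\cdot b_2^{+}$ is also not invariant under the rescaling by $q^{m+1}-1$ (it involves the unscaled entry $a_0=1$), so it too must be rechecked, not only the center.
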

\begin{proof}
	Since there is a one-to-one corresponds between $HRM_q(2, m+1)$ and $PRM_q(2, m)$ by Lemma \ref{lem-RM}, we just consider the log-concavity for $HRM_q(2, m)$.\\
Let $A^q_{\tau, j}=A_{q^m-q^{m-1}-\tau q^{m-j-1}(q-1)}$. We first show that $A^q_{1, j}$ and $A^q_{-1, j}$ are both increasing sequences. For $2\le j\le \lfloor {m\over 2}\rfloor-1$, and $\tau\in\{1, -1\}$,
\begin{align} \label{eq-HRM}
\begin{split}
A^q_{\tau, j+1}/A^q_{\tau, j}&={q^{(j+1)^2}(q^{j+1}+\tau)\over q^{j^2}(q^{j}+\tau)}\cdot{\prod_{i=m-2j-1}^m(q^i-1)\prod_{i=1}^{j}(q^{2i}-1)\over \prod_{i=m-2j+1}^m(q^i-1)\prod_{i=1}^{j-1}(q^{2i}-1)}\\&=\left({q^{2j+1}\over q^{2j+2}-1}\right)\left({q^{j+1}+\tau\over q^{j}+\tau}\right)(q^{m-2j-1}-1)(q^{m-2j}-1)>1.
\end{split}
\end{align}
So the sequence of weight distribution of $HRM_q(2, m)$ is unimodal except $A_{q^m-q^{m-1}}$ which is the midpoint of the sequence.

Next, we show that $A^q_{\tau, j}$ is log-concave for $2\le j\le \lfloor {m\over 2}\rfloor-1$. Since
\[A^q_{\tau, j+1}={q^{(j+1)^2}(q^{j+1}+\tau)\over q^{j^2}(q^j+\tau)}\cdot{(q^{m-2j-1}-1)(q^{m-2j}-1)\over (q^{2j+2)}-1)}A^q_{\tau, j}\]
and
\[A^q_{\tau, j-1}={q^{(j-1)^2}(q^{j-1}+\tau)\over q^{j^2}(q^j+\tau)}\cdot{(q^{2j+2)}-1)\over (q^{m-2j+1}-1)(q^{m-2j+2}-1)}A^q_{\tau, j},\]
we have that
\begin{align*}
&A^q_{\tau, j+1}\cdot A^q_{\tau, j-1}\\&=q^2\cdot{(q^{j+1}+\tau)(q^{j-1}+\tau)\over (q^j+\tau)^2}\cdot{(q^{2j}-1)\over (q^{2j+2}-1)}\cdot{(q^{m-2j-1}-1)\over (q^{m-2j+1}-1)}\cdot{(q^{m-2j}-1)\over (q^{m-2j+2}-1)}\cdot (A^q_{\tau, j})^2\\
&\le {1\over q^4}\cdot{(q^{j+1}+\tau)(q^{j-1}+\tau)\over (q^j+\tau)^2}\cdot (A^q_{\tau, j})^2\le (A^q_{\tau, j})^2.
\end{align*}
So, $A^q_{\tau, j}$ is log-concave for $2\le j\le \lfloor {m\over 2}\rfloor-1$. For $j=1$,
\begin{align*}
A^q_{\tau, 2}&={q^4(q^2+\tau)\over 2}\cdot{(q^{m-3}-1)(q^{m-2}-1)(q^{m-1}-1)(q^{m}-1)\over (q^{2}-1)(q^{4}-1)}\\&\le 2q^2\cdot {q^2-1\over q^4-1}\cdot {(q^{m-3}-1)(q^{m-2}-1)\over (q^{m-1}-1)(q^m-1)}\cdot (A^q_{\tau, 1})^2\le (A^q_{\tau, 1})^2.
\end{align*}

Now we only need to consider $j=\lfloor{m\over 2}\rfloor$ case. We need to show that
\[A^q_{\tau, \lfloor{m\over 2}\rfloor}\cdot {A^q_{\tau, \lfloor{m\over 2}\rfloor}\over A^q_{\tau, \lfloor{m\over 2}\rfloor-1}}\ge A_{q^m-q^{m-1}}.\]
We consider two cases of $m$.

First, we assume that $m$ is odd. From Equation~\eqref{eq-HRM}, we have that
\[{A^q_{\tau, {m-1\over 2}}/A^q_{\tau, {m-1\over 2}-1}}={(q^{m-1})\over (q^{m-1}-1)}\cdot{(q^{(m-1)/2}+\tau)\over (q^{(m-1)/2}+ q\tau)}(q^2-1)(q^3-1).\]
Let $B_{\ell}=q^{\ell^2+\ell}{\prod_{i=m-2\ell}^m(q^i-1)\over \prod_{i=1}^\ell(q^{2i}-1)}$ for $1\le \ell\le \lfloor{m-1\over 2}\rfloor$.
Then
\[A_{q^m-q^{m-1}}=q^m-1+\sum_{\ell=1}^{m-1\over 2}B_\ell\]
and
\[B_\ell={2q^\ell\over q^\ell+\tau}(q^{m-2\ell}-1)A^q_{\tau, \ell}.\]
Since $B_{\ell+1}/B_\ell=q^{2\ell+2}{(q^{m-2\ell-2}-1)(q^{m-2\ell-1}-1)\over (q^{2\ell+2}-1)}$, we have that
\begin{align} \label{eq-HRM-2}
\begin{split}
B_{\ell+1}&={q^{2\ell+2}\over q^{2\ell+2}-1}(q^{m-2\ell-2}-1)(q^{m-2\ell-1}-1)B_\ell\\ &\ge 	(q^{m-2\ell-2}-1)(q^{m-2\ell-1}-1)B_\ell\\
&\ge (q-1)(q^2-1)B_\ell.
\end{split}
\end{align}
Then,
\begin{align*}
A_{q^m-q^{m-1}}&=q^m-1+\sum_{\ell=1}^{m-1\over 2}B_\ell	\le q^m-1+\sum_{\ell=1}^{m-1\over 2}\left({1\over (q-1)(q^2-1)}\right)^{\ell-1}B_{m-1\over 2}\\
&\le q^m-1+{(q-1)(q^2-1)\over (q-1)(q^2-1)-1}B_{m-1\over 2}\\
&=q^m-1+2{(q-1)(q^2-1)\over (q-1)(q^2-1)-1}\cdot{2(q-1)q^{(m-1)/2}\over q^{(m-1)/2}+\tau}A^q_{\tau, {m-1\over 2}}\\
&\le {(q^{m-1})\over (q^{m-1}-1)}{(q^{(m-1)/2}+\tau)\over (q^{(m-1)/2}+ q\tau)}(q^2-1)(q^3-1)A^q_{\tau, {m-1\over 2}}\\
&=\left({A^q_{\tau, {m-1\over 2}}/A^q_{\tau, {m-1\over 2}-1}}\right)A^q_{\tau, {m-1\over 2}}.
\end{align*}
The last inequality comes from the fact that when $q=2$, ${(q-1)(q^2-1)\over (q-1)(q^2-1)-1}\cdot{2q^{(m-1)/2}\over q^{(m-1)/2}+\tau}$ has the largest value whereas ${(q^{m-1})\over (q^{m-1}-1)}{(q^{(m-1)/2}+\tau)\over (q^{(m-1)/2}+ q\tau)}(q+1)(q^3-1)$ has the smallest value, and even in such a case, we have
\[ 3{2^{(m-1)/2}\over 2^{(m-1)/2}+\tau}<21{(2^{m-1})\over (2^{m-1}-1)}{(2^{(m-1)/2}+\tau)\over (2^{(m-1)/2}+ 2\tau)},
\]
and
\[ A^q_{\tau, {m-1\over 2}} = {q^{{m-1\over 2}^2}(q^{m-1\over 2}+\tau)\over 2}{\prod_{i=2}^m(q^i-1)\over \prod_{i=1}^{m-1\over 2}(q^{2i}-1)} > q^m-1.\]

Now we assume that $m$ is even. Note that
\begin{equation*}
	{A^q_{\tau, {m\over 2}}/A^q_{\tau, {m\over 2}-1}}={q^{m}\over q^{m}-1}\cdot{(q^{m/2}+\tau)\over (q^{m/2}+q\tau)}(q-1)(q^2-1).
\end{equation*}
Since
\[A_{q^m-q^{m-1}}=q^m-1+\sum_{\ell=1}^{{m\over 2}-1}B_\ell,\]
and
\[B_{\ell+1}\ge (q^{m-2\ell-2}-1)(q^{m-2\ell-1}-1)B_\ell\ge (q^2-1)(q^3-1)B_\ell\]
by Equation \eqref{eq-HRM-2}, we have that
\begin{align*}
A_{q^m-q^{m-1}}&=q^m-1+\sum_{\ell=1}^{{m\over 2}-1}B_\ell	\le q^m-1+\sum_{\ell=1}^{{m\over 2}-1}\left({1\over (q^2-1)(q^3-1)}\right)^{\ell-1}B_{{m\over 2}-1}\\
&\le q^m-1+{(q^2-1)(q^3-1)\over (q^2-1)(q^3-1)-1}B_{{m\over 2}-1}\\
&\le q^m-1+2{(q^2-1)(q^3-1)\over (q^2-1)(q^3-1)-1}\cdot{q^{{m\over2}-1}\over q^{{m\over2}-1}+\tau}(q^2-1)A^q_{\tau, {m\over 2}-1}\\
&=q^m-1+2{(q^2-1)(q^3-1)\over (q^2-1)(q^3-1)-1}\cdot{q^{{m\over2}}\over q^{{m\over2}}+\tau}\cdot{q^m-1\over q^m}\cdot{1\over q-1}A^q_{\tau, {m\over 2}}.
\end{align*}
Our next calculation depends on $q$.
First, suppose that $q\ge 3$. Then we have the following. Since
\[{(q^2-1)(q^3-1)\over (q^2-1)(q^3-1)-1}, {q^{{m\over2}}\over q^{{m\over2}}+\tau}<2,
\]
\[
{q^m-1\over q^m},  {2\over q-1}\le 1,\quad\mbox{ and }\quad
q^m-1<A^q_{\tau, {m\over 2}},
\]
we have that
\begin{align*}
A_{q^m-q^{m-1}} &\le 5A^q_{\tau, {m\over 2}}\\&\le 16\cdot{q^{m}\over q^{m}-1}\cdot{(q^{m/2}+\tau)\over (q^{m/2}+q\tau)}A^q_{\tau, {m\over 2}}\\&\le {q^{m}\over q^{m}-1}\cdot{(q^{m/2}+\tau)\over (q^{m/2}+q\tau)}(q-1)(q^2-1){A^q_{\tau, {m\over 2}}}.
\end{align*}
Next we suppose $q=2$. Then we have the following.
\begin{align*}
A_{2^m-2^{m-1}}&\le 2^m-1+{21\over 10}\cdot{2^{{m\over2}}\over 2^{{m\over2}}+\tau}\cdot{2^m-1\over 2^m}A^2_{\tau, {m\over 2}}\\&\le 3\cdot{2^{m}\over 2^{m}-1}\cdot{(2^{m/2}+\tau)\over (2^{m/2}+2\tau)}A^2_{\tau, {m\over 2}}.
\end{align*}
So the weight distribution of $HRM_q(2, m)$ is log-concave except for $A_{q^m-q^{m-1}}$.

Since there is a one-to-one corresponds between $HRM_q(2, m+1)$ and $PRM_q(2, m)$ by Lemma \ref{lem-RM}, the weight distribution of $PRM_q(2, m)$ is also $1$-gap log-concave or log-concave.
\end{proof}

\begin{prop}
	The second order homogeneous Reed-Muller code $HRM_q(2, m)$ is log-concave for $m\ge 2$ if $m$ is odd.
\end{prop}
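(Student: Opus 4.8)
The plan is to build directly on the preceding theorem, whose proof establishes that the nonzero weight distribution of $HRM_q(2,m)$ satisfies every log-concavity inequality with the single possible exception of the one centered at the largest term $A_{q^m-q^{m-1}}$. Hence, to upgrade ``$1$-gap log-concave or log-concave'' to ``log-concave'', I only need to verify this one remaining inequality. Since $m$ is odd, put $J=\lfloor m/2\rfloor=(m-1)/2$. The frequency $A^q_{\tau,J}$ sits at distance $q^{m-J-1}(q-1)$ from the center, and this distance is smallest precisely when $j=J$; therefore the two terms immediately flanking $A_{q^m-q^{m-1}}$ in the ordered nonzero weight distribution are $A^q_{1,J}$ (just below) and $A^q_{-1,J}$ (just above). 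So it suffices to prove
\[
\left(A_{q^m-q^{m-1}}\right)^2 \ge A^q_{1,J}\,A^q_{-1,J}.
\]

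First I would rewrite the right-hand side using the quantity $B_\ell=q^{\ell^2+\ell}\frac{\prod_{i=m-2\ell}^m(q^i-1)}{\prod_{i=1}^\ell(q^{2i}-1)}$ from the proof of the theorem. Because $m$ is odd and $J=(m-1)/2$, one has $m-2J=1$, so the identity $B_\ell=\tfrac{2q^\ell}{q^\ell+\tau}(q^{m-2\ell}-1)A^q_{\tau,\ell}$ specializes at $\ell=J$ to $B_J=\tfrac{2q^J(q-1)}{q^J+\tau}A^q_{\tau,J}$ for each $\tau\in\{1,-1\}$. Solving for $A^q_{\tau,J}$ and multiplying the two cases together gives the clean expression
\[
A^q_{1,J}\,A^q_{-1,J}=\frac{q^{2J}-1}{4q^{2J}(q-1)^2}\,B_J^2.
\]

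Next I would bound the central term from below. Since $A_{q^m-q^{m-1}}=q^m-1+\sum_{\ell=1}^{J}B_\ell$ is a sum of positive terms one of which is $B_J$, we get $A_{q^m-q^{m-1}}\ge B_J$ and hence $\left(A_{q^m-q^{m-1}}\right)^2\ge B_J^2$. Comparing with the displayed identity, the whole claim reduces to the elementary inequality $4q^{2J}(q-1)^2\ge q^{2J}-1$, which holds for every prime power $q\ge 2$ because $4(q-1)^2\ge 4>1$. This chain yields $\left(A_{q^m-q^{m-1}}\right)^2\ge B_J^2\ge A^q_{1,J}A^q_{-1,J}$, which is exactly the missing inequality.

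I do not anticipate a serious obstacle: the heart of the matter is that the central frequency dominates the single summand $B_J$, while each flanking frequency $A^q_{\pm1,J}$ is smaller than $B_J$ by a factor of about $2(q-1)$, leaving generous slack. The one delicate point is the role of the parity hypothesis, which is what forces $m-2J=1$ and thereby makes the relation $B_J=\tfrac{2q^J}{q^J+\tau}(q^{m-2J}-1)A^q_{\tau,J}$ usable; for even $m$ one instead has $m-2J=0$, the factor $q^{m-2J}-1$ vanishes, the identity degenerates, and the slack is lost, consistent with the fact that the even case is in general only $1$-gap log-concave.
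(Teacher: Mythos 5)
Your proposal is correct and follows essentially the same route as the paper: both reduce the problem to the single central inequality left open by the preceding theorem, exploit the odd-$m$ specialization $B_J=\tfrac{2q^J(q-1)}{q^J+\tau}A^q_{\tau,J}$ (with $J=(m-1)/2$, so $m-2J=1$) to compare the flanking frequencies with $B_J$, and then use $A_{q^m-q^{m-1}}=q^m-1+\sum_{\ell=1}^{J}B_\ell\ge B_J$. The only cosmetic difference is that the paper phrases the conclusion as unimodality (the center dominates each flank, hence dominates their geometric mean), whereas you bound the product $A^q_{1,J}A^q_{-1,J}$ directly; your version also cleanly pinpoints why the even case degenerates.
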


\begin{proof}
We define $B_\ell=q^{\ell^2+\ell}{\prod_{i=m-2\ell}^m(q^i-1)\over \prod_{i=1}^j(q^{2i}-1)}$ as in the proof of Theorem 9. Since
	\[A^q_{\tau, {m-1\over 2}}\le 2\cdot {q^{m-1\over 2}\over q^{m-1\over 2}+\tau}(q-1)B_{{m-1\over 2}}<A_{q^m-q^{m-1}},\]
the weight distribution of $HRM_q(2, m)$ is unimodal. So we have the desired result.
\end{proof}

\begin{prop}
	The second order projective Reed-Muller code $PRM_q(2, m)$ is log-concave for $m\ge 2$ if $m$ is even.
\end{prop}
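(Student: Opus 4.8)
The plan is to deduce this result directly from the preceding Proposition on homogeneous second order Reed-Muller codes, together with the correspondence of Lemma~\ref{lem-RM}. The crucial point is that Lemma~\ref{lem-RM} furnishes a weight-preserving one-to-one correspondence between the codewords of $PRM_q(2,m)$ and those of $HRM_q(2,m+1)$, so that these two codes share the same nonzero weight distribution. In particular, $PRM_q(2,m)$ is log-concave precisely when $HRM_q(2,m+1)$ is.

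First I would observe that if $m$ is even then $m+1$ is odd, and moreover $m+1 \geqslant 3$ since $m \geqslant 2$. The preceding Proposition asserts exactly that $HRM_q(2,m')$ is log-concave whenever $m'$ is an odd integer with $m' \geqslant 2$; applying it with $m' = m+1$ shows that $HRM_q(2,m+1)$ is log-concave. Transporting this conclusion back across the correspondence of Lemma~\ref{lem-RM} then gives that $PRM_q(2,m)$ is log-concave for every even $m \geqslant 2$, as claimed.

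No genuine analytic obstacle remains at this stage. All of the delicate estimates---establishing true unimodality of the homogeneous sequence in the odd case and, in particular, bounding the central frequency $A_{q^m-q^{m-1}}$ so that no log-concavity gap appears at the midpoint---were already carried out in the proof of the earlier Proposition. The only detail I would make explicit is that the correspondence in Lemma~\ref{lem-RM} preserves Hamming weight, so that the two codes have identical weight distributions; this is precisely the mechanism already used at the end of the proof of the earlier Theorem to transfer the ($1$-gap) log-concavity of $HRM_q(2,m+1)$ to $PRM_q(2,m)$.
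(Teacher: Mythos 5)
Your proposal is correct and takes essentially the same route as the paper: the paper's own proof is a one-line deduction from the correspondence of Lemma~\ref{lem-RM} between $PRM_q(2,m)$ and $HRM_q(2,m+1)$ (with $m+1$ odd when $m$ is even), combined with the preceding Proposition for homogeneous codes of odd parameter. The only minor imprecision is calling the correspondence ``weight-preserving''---the two codes have different lengths and the weights rescale rather than coincide---but the correspondence does match frequencies, so the nonzero weight distributions agree as sequences, which is all the argument requires.
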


\begin{proof}
It follows from the fact that there is a one-to-one correspondence between codewords of $PRM_q(2, m)$ and codewords of $HRM_q(2, m+1)$.
\end{proof}


\medskip


\section{MDS codes}

It is known that the weight enumerator of an MDS code is unique (see, e.g., \cite[12, Ch.11, \S 3, Theorem 6]{MacSlo}), which is given by $A_0 = 1$, $A_w = 0$ for $0 < w < d$, and
\begin{equation} \label{eq:MDS-weight}
	A_w = \binom{n}{w} (q-1) \sum_{j=0}^{w-d} (-1)^j \binom{w-1}{j} q^{w-d-j}.
\end{equation}
According to \cite{MDS-weight-number}, many $[n,k,d]_q$ MDS codes have $k$ nonzero weights, with the exceptions as follows:
\begin{enumerate}

	\item[{(i)}] If $k=2$ and $n = q+1$, then such MDS code is a Simplex code, which contains only words of weight zero or $q$.
	
	\item[{(ii)}] If $n = q+2$, $q = 2^m$, and $k = 3$, then $A_{2^m} = (2^{2m}-1)(2^{m-1}+1)$ and $A_n = 2^{m-1}(2^m-1)^2$.
\end{enumerate}

As for the MDS codes with $2$ nonzero weights, we have
\begin{enumerate}

	\item[{(i)}] If $k = 2$ and $n \leqslant q$, then $A_{n-1} = (q-1)n$, $A_n = (q-1)(q-n+1)$, and
	\[ \frac{A_{n-1}^2}{A_0 A_n} = \frac{(q-1)n^2}{q-n+1} > \frac{(q-1)n^2}{q-1} > 1. \]
	In fact,
	\[ \frac{A_d^2}{A_0 A_{d+1}} = \frac{(q-1)\binom{n}{k-1}^2}{\binom{n}{k-2} (q-n+k-1)} > 1 \]
	for all MDS codes.
	\item[{(ii)}] If $n = q+2$, $q = 2^m$, and $k = 3$, then then it is a two-weight code, and
	\[ A_{2^m} = (2^{2m}-1)(2^{m-1}+1) = (2^m-1)(2^m+1) (2^{m-1}+1) > 2^{m-1}(2^m-1)^2 = A_n. \]
\end{enumerate}

In this subsection, we will consider the log-concave MDS codes.
From \eqref{eq:MDS-weight}, we have
\[
	\frac{A_{w}^2}{A_{w-1} A_{w+1}} = \frac{w+1}{w} \cdot \frac{n-w+1}{n-w} \cdot  \frac{\left(\sum_{j=0}^{s} (-1)^j \binom{w-1}{j} q^{s-j} \right)^2}{\left(\sum_{j=0}^{s-1} (-1)^j \binom{w-2}{j} q^{s-1-j} \right) \left(\sum_{j=0}^{s+1} (-1)^j \binom{w}{j} q^{s+1-j} \right)}.
\]
Let $s =  w - d = w - n + k -1$, $k \geqslant 3$, $q \geqslant 2$ be positive integers.
Then we define
\[ f_k(s,w,q) = \sum_{j=0}^{s} (-1)^j \binom{w-1}{j} q^{-j}, \]
and
\begin{equation} \label{eq: ratio-MDS}
	G_k(s,n,q) = \frac{A_{w}^2}{A_{w-1} A_{w+1}} = \frac{w+1}{w} \cdot \frac{n-w+1}{n-w} \cdot  \frac{f_k^2(s,w,q)}{f_k(s-1,w-1,q) f_k(s+1,w+1,q)}.
\end{equation}

\begin{lem}
	About $f_k(s,w,q)$, we have
	\begin{enumerate}
		\renewcommand{\labelenumi}{(\theenumi)}
		\item[{(i)}] $f_k(s,w,q) > 0$ if $0 \leqslant s \leqslant k-1$, $w \leqslant q+1$, with the exception $f_2(1,q+1,q) = 0$;
		\item[{(ii)}] $f_k(s+1, w+1, q) = \left(1-\frac{1}{q}\right) f_k(s,w,q) + (-1)^{s+1} \binom{w-1}{s+1} q^{-s-1}$;
		\item[{(iii)}] If $s$ is even, then $f_k(s+1,w+1,q) < \left(1-\frac{1}{q}\right) f_k(s,w,q)$;
		\item[{(iv)}] If $s$ is odd, then $f_k(s+1,w+1,q) > \left(1-\frac{1}{q}\right) f_k(s,w,q)$;
	\end{enumerate}
\end{lem}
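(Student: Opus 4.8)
The four assertions split into one genuinely substantive claim, the positivity (i), and three routine ones, (ii)--(iv), which all follow from a single application of Pascal's rule. I would therefore prove (ii) first, read off (iii) and (iv) from it immediately, and then settle (i) by an alternating-series argument resting on the monotonicity of the magnitudes of the summands.

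\textbf{Steps (ii)--(iv).} For (ii) I would expand $f_k(s+1,w+1,q)=\sum_{j=0}^{s+1}(-1)^j\binom{w}{j}q^{-j}$ and substitute Pascal's identity $\binom{w}{j}=\binom{w-1}{j}+\binom{w-1}{j-1}$ from Lemma~\ref{lem-nchoosek}(i). The first resulting piece is $\sum_{j=0}^{s+1}(-1)^j\binom{w-1}{j}q^{-j}=f_k(s,w,q)+(-1)^{s+1}\binom{w-1}{s+1}q^{-s-1}$, while reindexing $j\mapsto j+1$ in the second piece produces exactly $-\frac1q f_k(s,w,q)$, the spurious $j=-1$ term vanishing since $\binom{w-1}{-1}=0$. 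Collecting the two yields the stated recursion. Then (iii) and (iv) are immediate consequences: the correction term $(-1)^{s+1}\binom{w-1}{s+1}q^{-s-1}$ is negative when $s$ is even and positive when $s$ is odd, and it is strictly nonzero because $\binom{w-1}{s+1}>0$ in the operative range of indices $s+1\leqslant w-1$.

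\textbf{Step (i), the crux.} The key observation is that the magnitudes $t_j:=\binom{w-1}{j}q^{-j}$ of the terms of $f_k(s,w,q)=\sum_{j=0}^{s}(-1)^j t_j$ are non-increasing as soon as $w\leqslant q+1$: indeed $t_{j+1}/t_j=\frac{w-1-j}{(j+1)q}$, and since $w-1-j\leqslant q-j\leqslant (j+1)q$ this ratio is at most $1$, with equality forced only at $j=0$ together with $w=q+1$. I would then group the alternating sum into consecutive pairs $t_{2i}-t_{2i+1}\geqslant 0$, with a nonnegative leftover $t_s$ appended when $s$ is even. For $w\leqslant q$ the leading pair $t_0-t_1=1-\frac{w-1}{q}$ is already strictly positive, so $f_k(s,w,q)>0$ for every admissible $s$. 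For $w=q+1$ the leading pair vanishes, and one separates three subcases by the size of $s$: the value $s=0$ gives $f_k=t_0=1>0$; the value $s=1$ gives $f_k=t_0-t_1=0$, which is precisely the exceptional value $f_2(1,q+1,q)=0$ of the statement; and $s\geqslant 2$ always leaves a strictly positive later pair $t_2-t_3$ or leftover $t_2$, forcing $f_k>0$.

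\textbf{Main obstacle.} The only real work lies in pinning down the single exception in (i). Everything reduces to the monotonicity ratio $t_{j+1}/t_j\leqslant 1$ under the hypothesis $w\leqslant q+1$; the delicate point is that this inequality is tight at exactly one place, $(j,w)=(0,q+1)$, and carefully tracking that equality through the pairing is what localizes the vanishing of $f$ to the single index pair $(s,w)=(1,q+1)$ while keeping all remaining values in the admissible range strictly positive.
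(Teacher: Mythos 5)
Your proposal is correct and takes essentially the same approach as the paper: (ii)--(iv) via Pascal's identity $\binom{w}{j}=\binom{w-1}{j}+\binom{w-1}{j-1}$ with a reindexing that produces the $-\frac{1}{q}f_k(s,w,q)$ piece, and (i) by grouping the alternating sum into consecutive pairs, which are nonnegative precisely because $w\leqslant q+1$, with the unique tight case $(j,w)=(0,q+1)$ yielding the exception $f_2(1,q+1,q)=0$. If anything, your explicit tracking of where the ratio $t_{j+1}/t_j\leqslant 1$ is an equality localizes the exceptional case a bit more cleanly than the paper's direct check of $s=0,1$ followed by pairing at even $j\geqslant 2$.
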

\begin{proof}
	It is clear that $f_k(0,w,q) = 1$, and $f_k(1,w,q) = 1 - (w-1)/q$.
	Obviously, $f_k(1,w,q) > 0$ if $w < q+1$ and $f_k(1,w,q) = 0$ if $w = q+1 \geqslant n$.
	
	Let $2 \leqslant j \leqslant s$ be even. Then
	\[ \frac{q \binom{w-1}{j}}{\binom{w-1}{j+1}} = q \frac{j+1}{w-j-1} > 1 \]
	if and only if $(q+1)j > w-1-q$, which holds when $w \leqslant n \leqslant q+1$.
	That is to say,  \[ (-1)^j \binom{w-1}{j} q^{-j} + (-1)^{j+1} \binom{w-1}{j+1} q^{-j-1} > 0. \]
	Thus, $f_k(s,w,q) > 0$ when $s \geqslant 2$.
	It is not difficult to check that
	\begin{align*}
		f_k(s+1, w+1, q) & = 1 + \sum_{j=1}^{s+1} (-1)^j \left(\binom{w-1}{j} + \binom{w-1}{j-1} \right) q^{-j}
		\\
		& = 1 + \sum_{j=1}^{s+1} (-1)^j \binom{w-1}{j} q^{-j} + \sum_{j=1}^{s+1} (-1)^j \binom{w-1}{j-1} q^{-j} \\
		& = f_k(s+1,w,q) +  \sum_{i=0}^{s} (-1)^{i+1} \binom{w-1}{i} q^{-i-1} \\
		& = f_k(s+1,w,q) - \frac{1}{q} f_k(s,w,q) \\
		& = \left(1-\frac{1}{q}\right) f_k(s,w,q) + (-1)^{s+1} \binom{w-1}{s+1} q^{-s-1}.
	\end{align*}
	Moreover, it is easy to obtain $(3)$ and $(4)$ from the above formula.
\end{proof}

Next, consider the following function: \[ g_k(s,w,q) = \frac{f_k^2(s,w,q)}{f_k(s-1,w-1,q) f_k(s+1,w+1,q)}.  \]

\begin{lem} \label{lemma: MDS-g}
	About $g_k(s,w,q)$, we have
	\begin{enumerate} \renewcommand{\labelenumi}{(\theenumi)}
		\item[{(i)}] If $s$ is even, then $g_k(s,w,q) > 1$;
		\item[{(ii)}] If $s$ is odd, then $g_k(s,w+1,q) / g_k(s,w,q) < 1$;
		\item[{(iii)}] If $s$ is odd, then $g_k(s+2,w+2,q) / g_k(s,w,q) > 1$.
	\end{enumerate}
\end{lem}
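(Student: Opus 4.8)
The plan is to handle all three parts through one algebraic engine, writing $f(s,w) := f_k(s,w,q)$ for brevity. First I would record two instances of the recurrence in item (ii) of the preceding lemma: applied at $(s-1,w-1)$ it gives $f(s,w) = \left(1-\tfrac1q\right)f(s-1,w-1) + (-1)^s\binom{w-2}{s}q^{-s}$, and applied at $(s,w)$ it gives $f(s+1,w+1) = \left(1-\tfrac1q\right)f(s,w) + (-1)^{s+1}\binom{w-1}{s+1}q^{-s-1}$. Writing $b=f(s,w)$, $a=f(s-1,w-1)$, $c=f(s+1,w+1)$, these read $b=\left(1-\tfrac1q\right)a+\alpha$ and $c=\left(1-\tfrac1q\right)b+\beta$ with $\alpha=(-1)^s\binom{w-2}{s}q^{-s}$ and $\beta=(-1)^{s+1}\binom{w-1}{s+1}q^{-s-1}$, so a short computation ($b^2-ac=\alpha b-a\beta$) yields the identity
\[
f(s,w)^2 - f(s-1,w-1)f(s+1,w+1) = (-1)^s\left[\binom{w-2}{s}q^{-s}f(s,w) + \binom{w-1}{s+1}q^{-s-1}f(s-1,w-1)\right].
\]
Since every $f$-value on the right is positive in the relevant range by item (i) of the preceding lemma, the bracket is positive, so the left side has sign exactly $(-1)^s$. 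For even $s$ this immediately gives $f(s,w)^2 > f(s-1,w-1)f(s+1,w+1)$, i.e. $g_k(s,w,q) > 1$, proving (i). The same identity shows $g_k(s,w,q) < 1$ for odd $s$, which is precisely why the monotonicity statements (ii) and (iii) are required.

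For (ii) and (iii) I would first derive the complementary horizontal recurrence by Pascal's rule,
\[
f(s,w+1) = f(s,w) - \tfrac1q f(s-1,w),
\]
which expresses the shift in $w$ at fixed $s$ through a lower-$s$ value. Using this I would write the ratio in (ii) as a product of three elementary factors,
\[
\frac{g_k(s,w+1,q)}{g_k(s,w,q)} = \left(\frac{f(s,w+1)}{f(s,w)}\right)^{2} \cdot \frac{f(s-1,w-1)}{f(s-1,w)} \cdot \frac{f(s+1,w+1)}{f(s+1,w+2)},
\]
and replace each factor by an expression of the form $1 - \tfrac1q(\text{ratio of consecutive-index } f\text{'s})$ via the horizontal recurrence. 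Clearing denominators reduces the desired inequality $<1$ (for odd $s$) to a rational inequality in $q$ and the positive quantities $f(s-1,\cdot)$, $f(s,\cdot)$, $f(s+1,\cdot)$, into which I would feed the positivity and ordering of the $f$-values from the preceding lemma together with the sign information already obtained (that $g_k(s,\cdot,q)<1$ for odd $s$). For (iii) I would iterate the diagonal identity of the first paragraph twice, relating $g_k(s+2,w+2,q)$ and $g_k(s,w,q)$ through the factor $f(s+1,w+1)$ they share, and again reduce the claim to an explicit rational inequality in $q$.

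The main obstacle I anticipate is the sign bookkeeping for odd $s$ in (ii) and (iii): because the leading binomial contributions cancel, the surviving expressions are differences of comparable positive quantities, so I must track the alternating signs precisely and then confirm that the residual rational inequalities in $q$ hold on the whole MDS range $3 \le k$, $w \le q+1$. A prerequisite deserving care is establishing the strict positivity and the monotone ordering of every $f(s,w)$ entering these ratios, using item (i) of the preceding lemma while respecting its one exceptional vanishing value $f_2(1,q+1,q)=0$.
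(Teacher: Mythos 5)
Your part (i) is complete and is in fact a cleaner route than the paper's. The paper proves the even case by chaining the two one-sided diagonal bounds (items (iii) and (iv) of the preceding lemma), whereas your exact identity
\[
f(s,w)^2-f(s-1,w-1)\,f(s+1,w+1)=(-1)^s\left[\binom{w-2}{s}q^{-s}f(s,w)+\binom{w-1}{s+1}q^{-s-1}f(s-1,w-1)\right]
\]
(writing $f(s,w)=f_k(s,w,q)$ as you do) pins down the sign of $g_k(s,w,q)-1$ for both parities at once, given only positivity of the $f$-values.

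Parts (ii) and (iii), however, are proof plans rather than proofs, and what you defer to ``rational inequalities in $q$'' is precisely the mathematical content. For (ii), your three-factor decomposition is exactly the paper's, but the argument succeeds only because each factor can be compared with a power of $1-\tfrac1q$,
\[
\left(\frac{f(s,w+1)}{f(s,w)}\right)^{2}<\Bigl(1-\frac1q\Bigr)^{2},\qquad
\frac{f(s-1,w-1)}{f(s-1,w)}<\Bigl(1-\frac1q\Bigr)^{-1},\qquad
\frac{f(s+1,w+1)}{f(s+1,w+2)}<\Bigl(1-\frac1q\Bigr)^{-1},
\]
and these three bounds multiply to exactly $1$; nothing lossier can work. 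They follow from the horizontal recurrence combined with the vertical orderings $f(s,w)<f(s-1,w)$ for odd $s$ and $f(s,w)>f(s-1,w)$ for even $s$ (immediate from the definition of $f$, not items of the preceding lemma, whose bounds are diagonal); the ``sign information $g_k<1$ for odd $s$'' you propose to feed in plays no role here. For (iii), iterating your diagonal identity does not suffice: for odd $s$ both $g_k(s,w,q)$ and $g_k(s+2,w+2,q)$ are deficits below $1$, and comparing them requires quantitative two-step growth control of $f$. The paper obtains this from
\[
f(s+2,w+2)=\Bigl(1-\frac1q\Bigr)^{2}f(s,w)+(-1)^{s+1}\frac{q-1}{q}\binom{w-1}{s+1}q^{-s-1}+(-1)^{s+2}\binom{w}{s+2}q^{-s-2}
\]
together with the key observation $(q-1)\binom{w-1}{s+1}\big/\binom{w}{s+2}=(q-1)(s+2)/w>1$ --- this is exactly where the range restriction $w\leqslant n-3$, $n\leqslant q+2$ enters --- so the correction term is positive for odd $s$ and negative for even $s$, giving $f(s+2,w+2)/f(s,w)>(1-\tfrac1q)^{2}$ for odd $s$ and the reverse for even $s$; the odd-case bound applied to $f(s+2,w+2)^2/f(s,w)^2$ and the even-case bound applied twice to $f(s-1,w-1)/f(s+3,w+3)$ then combine to give (iii). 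That binomial comparison is the missing idea in your sketch; without it the reduction you describe cannot be closed.
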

\begin{proof}
\begin{enumerate}
		\item[{(i)}] If $s$ is even, then
		\begin{align*}
			\frac{f_k^2(s,w,q)}{f_k(s-1,w-1,q) f_k(s+1,w+1,q)} & > \frac{f_k(s,w,q)}{\left(1-\frac{1}{q}\right)f_k(s-1,w-1,q)} > 1.
		\end{align*}
		
		\item[{(ii)}] If $s$ is odd,
		then
		\begin{align*}
			\frac{f_k(s,w+1,q)}{f_k(s,w,q)} & = 1 - \frac{f_k(s-1,w,q)}{q f_k(s,w,q)} < 1 - \frac{1}{q}, \\
			\frac{f_k(s-1,w,q)}{f_k(s-1,w-1,q)} & = 1 - \frac{f_k(s-2,w-1,q)}{q f_k(s-1,w-1,q)} > 1 - \frac{1}{q}.
		\end{align*}
		Thus,
		\begin{align*}
			\frac{g_k(s,w+1,q)}{g_k(s,w,q)} & = \frac{f_k^2(s,w+1,q)}{f_k^2(s,w,q)} \cdot  \frac{f_k(s-1,w-1,q)}{f_k(s-1,w,q)} \cdot \frac{f_k(s+1,w+1,q)}{f_k(s+1,w+2,q)} < 1.
		\end{align*}
		
		\item[{(iii)}] 	It is easy to check that
		\begin{align*}
		&f_k(s+2,w+2,q) \\ &= \frac{(q-1)^2}{q^2} f_k(s,w,q)  + (-1)^{s+1} \frac{q-1}{q}\binom{w-1}{s+1} q^{-s-1} + (-1)^{s+2} \binom{w}{s+2} q^{-s-2},
		\end{align*}
		where $w = n-k+1+s$.
		According to the definition of $g_k(s+2,w+2,q)$, we have $w \leqslant n-3$.
		Since $n \leqslant q+2$, we have
		\[ (q-1) \binom{w-1}{s+1} / \binom{w}{s+2} = \frac{(q-1)(s+2)}{w} \geqslant \frac{(n-3)(s+2)}{n-3} > 1. \]
		Thus, when $s$ is odd,
		\begin{align*}
			\frac{f_k(s+2,w+2,q)}{f_k(s,w,q)} > \left(1 - \frac{1}{q}\right)^2, \textnormal{and} \
			\frac{f_k(s-1,w-1,q)}{f_k(s+1,w+1,q)} > \frac{1}{\left(1 - \frac{1}{q}\right)^2}.
		\end{align*}
		Hence, when $s$ is odd,
		\begin{align*}
			\frac{g_k(s+2,w+2,q)}{g_k(s,w,q)} & = \frac{f_k^2(s+2,w+2,q)}{f_k^2(s,w,q)} \cdot \frac{f_k(s-1,w-1,q)}{f_k(s+1,w+1,q)} \cdot  \frac{f_k(s+1,w+1,q)}{f_k(s+3,w+3,q)} > 1.
		\end{align*}
	\end{enumerate}
	We conclude the proof.
\end{proof}

\begin{thm}
	If $g_k(s,k+2,q)>1$ for all possible odd $s$, then there exists an $n_0(k,q)$ such that for all $n < n_0(k,q)$, the $[n,k,d]_q$ MDS code (if it exists) is log-concave property.
\end{thm}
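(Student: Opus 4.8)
The plan is to verify the defining log-concavity inequalities directly for the \emph{full} nonzero weight distribution of the code. Since $k \geqslant 3$ and the MDS codes in question satisfy $w \leqslant n \leqslant q+1$, part (i) of the lemma preceding Lemma~\ref{lemma: MDS-g} gives $f_k(s,w,q) > 0$ for every $0 \leqslant s \leqslant k-1$, so every $A_w$ with $d \leqslant w \leqslant n$ is strictly positive; the nonzero weight distribution is therefore exactly $A_0=1, A_d, A_{d+1}, \dots, A_n$ with no internal gaps. Consequently the only triples to be checked are the boundary triple $(A_0, A_d, A_{d+1})$ and the internal triples at weights $w = d+1, \dots, n-1$, that is, those indexed by $s = w-d \in \{1, \dots, k-2\}$. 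The boundary triple is already settled in the discussion preceding the theorem, where $A_d^2/(A_0 A_{d+1}) > 1$ is shown for \emph{every} MDS code, so it remains to prove $G_k(s,n,q) \geqslant 1$ for $1 \leqslant s \leqslant k-2$, with $G_k$ the ratio in \eqref{eq: ratio-MDS}.

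I would next dispose of the even indices, which require no hypothesis. Writing $w = n-k+1+s$, the two rational prefactors $\frac{w+1}{w}$ and $\frac{n-w+1}{n-w}$ of $G_k(s,n,q)$ are both strictly greater than $1$ as long as $0 < w < n$, i.e.\ throughout $1 \leqslant s \leqslant k-2$ (note $n-w = k-1-s \geqslant 1$). Hence for even $s$, Lemma~\ref{lemma: MDS-g}(i), which asserts $g_k(s,w,q) > 1$, yields $G_k(s,n,q) > 1$ immediately, with no restriction on $n$.

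The heart of the argument is the odd indices, and this is where the hypothesis and the choice of $n_0(k,q)$ enter. For fixed odd $s$, Lemma~\ref{lemma: MDS-g}(ii) shows that $w \mapsto g_k(s,w,q)$ is strictly decreasing; combined with the hypothesis $g_k(s,k+2,q) > 1$, this gives $g_k(s,w,q) \geqslant g_k(s,k+2,q) > 1$ for every $w \leqslant k+2$. Through $w = n-k+1+s$, the inequality $w \leqslant k+2$ is equivalent to $n \leqslant 2k+1-s$, and on that range the prefactors are $> 1$, so $G_k(s,n,q) > 1$. Taking the most restrictive odd index in $\{1,\dots,k-2\}$ — namely the largest such $s$, call it $s^\ast$ — I would set $n_0(k,q) = 2k+2-s^\ast$, so that $n < n_0(k,q)$ forces $n \leqslant 2k+1-s$ simultaneously for all admissible odd $s$. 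Together with the even-$s$ case and the boundary triple, this proves that the nonzero weight distribution is log-concave for all $n < n_0(k,q)$.

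The main obstacle is precisely the odd-$s$ analysis, because the two relevant monotonicities point in opposite directions: for fixed $n$ the weight $w = n-k+1+s$ grows with $s$, so Lemma~\ref{lemma: MDS-g}(ii) makes larger odd $s$ look \emph{harder}, whereas Lemma~\ref{lemma: MDS-g}(iii), giving $g_k(s+2,w+2,q) > g_k(s,w,q)$ along the diagonal of fixed $n$, makes them \emph{easier}. The safe route above sidesteps this by bounding each odd $s$ in isolation, at the cost of a modest threshold $n_0(k,q)$ of size roughly $k+4$. A sharper threshold (near $2k+1$) should follow by invoking Lemma~\ref{lemma: MDS-g}(iii) to collapse all odd $s$ to the single case $s=1$, which holds for $n \leqslant 2k$ once $g_k(1,k+2,q) > 1$; the price there is the range restriction $w \leqslant n-3$ needed to chain Lemma~\ref{lemma: MDS-g}(iii), which would relegate $k=3,4$ to direct computation. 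Since the statement asserts only the \emph{existence} of $n_0(k,q)$, either route suffices.
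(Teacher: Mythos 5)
Your proposal is correct and follows essentially the same route as the paper: even $s$ is dispatched by Lemma~\ref{lemma: MDS-g}(i) together with the prefactors exceeding $1$, and odd $s$ by propagating the hypothesis at $w = k+2$ to all smaller lengths via the decreasing behavior coming from Lemma~\ref{lemma: MDS-g}(ii) (the paper packages this as $G_k(s,n+1,q) < G_k(s,n,q)$ in Equation~\eqref{eq: MDS-relation}, you as monotonicity of $g_k$ in $w$, which is the same mechanism). The only difference is presentational: you make explicit the value $n_0(k,q) = 2k+2-s^\ast$, the positivity of all $A_w$, and the boundary triple $(A_0, A_d, A_{d+1})$, which the paper leaves implicit or defers to the preamble of the section.
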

\begin{proof}
	Let $d < w < n$, $w-d = s$.
	Then
	\[ \frac{A_{w}^2}{A_{w-1} A_{w+1}} = G_k(s,n,q) = \frac{n-k+2+s}{n-k+1+s} \cdot  \frac{k-s}{k-1-s} \cdot g_k(s,w,q). \]
	If $s$ is even, then $g_k(s,w,q) > 1$, and $A_w^2 > A_{w-1} A_{w+1}$.
	Note that
	\begin{align} \label{eq: MDS-relation}
		\frac{G_k(s,n+1,q)}{G_k(s,n,q)} & = \frac{n-k+3+s}{n-k+2+s} \cdot \frac{n-k+1+s}{n-k+2+s} \cdot \frac{g_k(s,w+1,q)}{g_k(s,w,q)} \notag \\
		& = \frac{T^2-1}{T^2} \cdot \frac{g_k(s,w+1,q)}{g_k(s,w,q)} < 1,
	\end{align}
	which implies that $G_k(s,n,q)$ decreases when $n$ becomes larger, where $T = n-k+2+s$.
	Therefore, if $g_k(s,k+2,q)>1$ for all possible odd $s$, then there exists an $n_0(k,q)$ such that for all $n < n_0(k,q)$, the $[n,k,d]_q$ MDS code (if it exists) satisfies the log-concave property.
\end{proof}

\begin{thm} \label{thm: MDS-log}
	If $C$ is an $[n,k,d]_q$ MDS code with $3 \leqslant k \leqslant n/2 + 3$, then $C$ is log-concave when $q \geqslant q_0(n,k)$ and $C$ is not log-concave when $q < q_0(n,k)$, where $q_0(n,k)$ is the larger root of the quadratic polynomial
	\[
		(m+k-1) q^2 - (km^2 - 2k +2) q + \frac{km^3-km^2}{2} -km + m+k-1,
	\]
	and $m = n-k+2$.
	Moreover, if $C$ is log-concave, then each $[n', k, d']_q$ MDS code is log-concave, where $n' < n$ and $k\leqslant n'/2 + 3$ may not hold.
\end{thm}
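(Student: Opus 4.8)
The plan is to collapse the log-concavity of $C$ to a single inequality at one weight and then extract $q_0(n,k)$ from it. Since $C$ is MDS its nonzero weights are $A_0, A_d, A_{d+1}, \dots, A_n$, which (in the generic case of $k$ nonzero weights, the two exceptional families already having been disposed of) give the nonzero weight sequence $1, A_d, A_{d+1}, \dots, A_n$. The boundary condition $A_d^2 \geqslant A_0 A_{d+1}$ holds for every MDS code, as recorded before, so $C$ is log-concave if and only if $G_k(s,n,q) \geqslant 1$ for every $s$ with $1 \leqslant s \leqslant k-2$, where $G_k$ is the ratio of \eqref{eq: ratio-MDS} and $s = w-d$. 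For even $s$ this is automatic: Lemma~\ref{lemma: MDS-g}(i) gives $g_k(s,w,q) > 1$, while the prefactors $\frac{n-k+2+s}{n-k+1+s}$ and $\frac{k-s}{k-1-s}$ both exceed $1$ when $1 \leqslant s \leqslant k-2$. Hence only the odd values of $s$ can obstruct log-concavity.

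The heart of the proof is to show that, under $k \leqslant n/2+3$, the quantity $G_k(s,n,q)$ is strictly increasing along $s = 1,3,5,\dots$, so that the weight $w=d+1$ (i.e. $s=1$) carries the binding constraint. Writing $a = n-k+1+s$ and $b = k-s$, I would use
\[
\frac{G_k(s+2,n,q)}{G_k(s,n,q)} = \frac{a(a+3)}{(a+1)(a+2)} \cdot \frac{(b-1)(b-2)}{b(b-3)} \cdot \frac{g_k(s+2,w+2,q)}{g_k(s,w,q)},
\]
whose last factor exceeds $1$ by Lemma~\ref{lemma: MDS-g}(iii). The first prefactor equals $1-\frac{2}{(a+1)(a+2)}$ and the second equals $1+\frac{2}{b(b-3)}$, so their product is at least $1$ exactly when $(a+1)(a+2)-b(b-3)\geqslant 2$. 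Now $a-b = n-2k+1+2s$ increases with $s$, and $k \leqslant n/2+3$ forces $a \geqslant b-3$ already at $s=1$; since $a \geqslant b-3$ yields $(a+1)(a+2) \geqslant (b-2)(b-1) = b(b-3)+2$, the required inequality holds for all odd $s \geqslant 1$, with equality only in the extremal case $k=n/2+3$ at $s=1$. The $g$-factor then makes the whole ratio strictly exceed $1$. Consequently $\min_s G_k(s,n,q)$ is attained at $s=1$, and $C$ is log-concave if and only if $G_k(1,n,q)\geqslant 1$.

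It remains to evaluate $G_k(1,n,q)$ and match it to the stated quadratic. At $s=1$ the weight is $w=d+1=m$, and using $f_k(0,w-1,q)=1$, $f_k(1,m,q)=(q-m+1)/q$ and $f_k(2,m+1,q)=(2q^2-2mq+m^2-m)/(2q^2)$ one obtains
\[
G_k(1,n,q) = \frac{m+1}{m} \cdot \frac{k-1}{k-2} \cdot \frac{2(q-m+1)^2}{2q^2 - 2mq + m^2 - m}.
\]
The denominator $2q^2-2mq+m^2-m$ has negative discriminant, hence is positive, so clearing it and multiplying by the positive number $\frac{1}{2}m(k-2)$ turns $G_k(1,n,q)\geqslant 1$ into precisely $P(q)\geqslant 0$, where $P$ is the displayed quadratic with leading coefficient $m+k-1>0$; a direct (routine) expansion confirms all three coefficients. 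Since $P$ opens upward, $P(q)\geqslant 0$ holds for $q\geqslant q_0(n,k)$ and for $q$ below the smaller root $q_1$, and the lower branch is irrelevant because any admissible MDS alphabet size lies above $q_1$ (this small numerical check I would include). This gives both directions: $C$ is log-concave for $q\geqslant q_0(n,k)$ and fails (already at $w=d+1$) for $q<q_0(n,k)$. Finally, the ``Moreover'' statement follows from the behaviour in $n$: for odd $s$, \eqref{eq: MDS-relation} shows $G_k(s,n,q)$ strictly decreasing in $n$, while for even $s$ one has $G_k(s,n',q)>1$ unconditionally; since the range $1\leqslant s\leqslant k-2$ is independent of $n$, if $C$ is log-concave then $G_k(s,n',q)>G_k(s,n,q)\geqslant 1$ for all $n'<n$, so each shorter MDS code of dimension $k$ is log-concave even when $k\leqslant n'/2+3$ fails.

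The main obstacle I expect is the monotonicity step of the second paragraph: the leading prefactor $a(a+3)/((a+1)(a+2))$ is strictly below $1$ and works against the other two factors, so it cannot simply be dropped. The delicacy is that at the boundary $k=n/2+3$ the prefactor product equals $1$ exactly, so the strict increase of $G_k$ along odd $s$ rests entirely on the $g$-factor inequality of Lemma~\ref{lemma: MDS-g}(iii); one must also confirm that the endpoint indices (where $b-3$ can vanish, i.e.\ the steps ending at $s=k-2$) stay inside the range in which Lemma~\ref{lemma: MDS-g}(iii) was established, namely $w\leqslant n-3$, which is a routine but necessary bookkeeping check.
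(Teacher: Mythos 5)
Your proposal is correct and takes essentially the same route as the paper's own proof: reduce to odd $s$, use Lemma~\ref{lemma: MDS-g}(iii) together with the prefactor inequality (which is exactly where $k \leqslant n/2+3$ enters) to make $s=1$ the binding constraint, convert $G_k(1,n,q)\geqslant 1$ into the stated quadratic, and settle the ``Moreover'' part via Equation~\eqref{eq: MDS-relation}. The one piece you defer as a ``small numerical check'' --- that admissible alphabet sizes exceed the smaller root --- is precisely the half-page estimate the paper carries out (showing $q_0'(n,k) < m$ case by case and invoking the MDS conjecture), so your outline matches the published argument step for step.
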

\begin{proof}
	We only need to consider odd $s$.
	According to Lemma \ref{lemma: MDS-g}, $g_k(s+2,w+2,q) > g_k(s,w,q)$ if $1 \leqslant s \leqslant k-4$ is odd.
	It is clear that
	\[ \frac{A^2_{w+2}}{A_{w+1} A_{w+3}} / \frac{A^2_{w}}{A_{w-1} A_{w+1}} = \frac{(n-t+4)(n-t+1)}{(n-t+3)(n-t+2)} \cdot \frac{t^2-3t+2}{t^2-3t} \cdot \frac{g_k(s+2,w+2,q)}{g_k(s,w,q)} >1, \]
	where $t = k-s \geqslant 1$, if
	\begin{align*}
		& \frac{(n-t+4)(n-t+1)}{(n-t+3)(n-t+2)} \cdot \frac{t^2-3t+2}{t^2-3t} \\
		= & \left( 1 - \frac{2}{(n-t+2)(n-t+3)} \right) \left( 1 + \frac{2}{t^2-3t} \right) \geqslant 1,
	\end{align*}
	which is equivalent to
	\[ (t-1)(t-2) = (t^2-3t) + 2 \leqslant (n-t+2)(n-t+3). \]
	It is from the fact that \[ \left(1-\frac{2}{a} \right) \left(1+ \frac{2}{b}\right) \geqslant 1 \Longleftrightarrow b+2 \leqslant a, \textnormal{where} \  a,b>0. \]
	That is to show \[ \frac{(n-t+2)(n-t+3)}{(t-1)(t-2)} \geqslant 1, \]
	which holds if $n-t+2 \geqslant t-2$, i.e., $s \geqslant k - 2 - n/2$.
	If $k \leqslant n/2 + 3$, then
	\[ \frac{A^2_{d+1}}{A_{d} A_{d+2}} < \frac{A^2_{d+3}}{A_{d+2} A_{d+4}} < \frac{A^2_{d+5}}{A_{d+4} A_{d+6}} < \cdots . \]
	For $s=1$, we have
	\[ \frac{A^2_{d+1}}{A_{d} A_{d+2}} = G_k(1,n,q) = \frac{n-k+3}{n-k+2} \cdot \frac{k-1}{k-2} \cdot \frac{(1-(n-k+1)/q)^2}{1- (n-k+2)/q + \binom{n-k+2}{2}/q^2} \geqslant 1 \]
	if and only if
	\[ \frac{m+1}{m} \cdot \frac{k-1}{k-2} \cdot \frac{(q-m+1)^2}{q^2-mq+\binom{m}{2}} \geqslant 1, \  \textnormal{where} \ m = n-k+2 \geqslant k-4, \]
	which is equivalent to show
	\begin{equation}\label{eq: MDS-NC}
		(m+k-1) q^2 - (km^2 - 2k +2) q + \frac{km^3-km^2}{2} -km + m+k-1 \geqslant 0.
	\end{equation}
	Note that $m>2$ and the discriminant is \[ m(k - 2)(m - 2)(km^2 - 2k + 2) > 0, \]
	then $q \geqslant q_0(n,k)$, the large root of the quadratic polynomial.
	In fact, about the smaller root $q_0'(n,k)$, we have
	\begin{align*}
		q_0'(n,k) & = \frac{(km^2- 2k +2) - \sqrt{m(k - 2)(m - 2)(km^2 - 2k + 2)}}{2(m+k-1)} \\
		& = \frac{\sqrt{km^2- 2k +2}}{2(m+k-1)} \left( \sqrt{km^2- 2k +2} - \sqrt{m(k - 2)(m - 2)} \right) \\
		& = \frac{\sqrt{km^2- 2k +2}}{2(m+k-1)} \cdot \frac{2m^2+2mk-4m-2k+2}{\sqrt{km^2- 2k +2} + \sqrt{m(k - 2)(m - 2)}} \\
		& < \frac{\sqrt{km^2- 2k +2}}{2(m+k-1)} \cdot \frac{2m^2+2mk-4m-2k+2}{2\sqrt{m(k - 2)(m - 2)}} \\
		& < \frac{\sqrt{km^2- 2k +2}}{2(m+k-1)} \cdot \frac{m(k+m-2)}{\sqrt{m(k - 2)(m - 2)}}
		  < \frac{m(k+m-2)}{2(m+k-1)} \cdot \frac{m \sqrt{k}}{(m - 2)\sqrt{k - 2}} \\
		& < \frac{m}{2} \cdot \frac{m \sqrt{k}}{(m - 2)\sqrt{k - 2}} < m, \textnormal{when} \ m \geqslant 7 \ \textnormal{and} \ k \geqslant 4.
	\end{align*}
	If $m = 3$, then $n - k = 1$, the log-concavity for such code is clear.
	If $k = 3$, then
	\[ q_0'(n,3) = \frac{(3m^2-4)-\sqrt{m(m-2)(3m^2-4)}}{2(m+2)} < \frac{(3m^2-4)-\sqrt{3}(m-2)(m-1)}{2(m+2)} < m \]
	for all $m \geqslant 3$.
	If $k \geqslant 4$ and $4 \leqslant m \leqslant 6$, then
	\[ q_0'(n,k) < \frac{m^2 \sqrt{k}}{2(m - 2)\sqrt{k - 2}} \leqslant \frac{m^2}{\sqrt{2}(m-2)} \leqslant \frac{9}{\sqrt{2}} < 7. \]
	There are few MDS codes over $\mathbb{F}_3$, $\mathbb{F}_4$, $\mathbb{F}_5$, and none of them satisfy $k \geqslant 4$ and $m \geqslant 5$.
	As for $m = 4$, the only $[6,4,3]_5$ MDS code is a Hamming code, and it is not log-concave.
	
	That is to say, if $q < q'_0(n,k)$, then such a code is either not log-concave, or does not exist by the MDS conjecture since $q_0'(n,k) < m < n-1$ for many $k$ and $m$.
	Meanwhile, $q_0(n,k) > km^2- 2k +2 -m$.
	
	Besides, if $C'$ is an $[n',k,d']_q$ MDS code, then by Equation~\eqref{eq: MDS-relation}, $C'$ is log-concave, where $k \leqslant n'/2+3$ may not hold.
\end{proof}

\begin{rem}{\em
	It is clear that $q \geqslant q_0(n,k)$ is necessary for every MDS code to be log-concave.
	And the above theorem shows that $q \geqslant q_0(n,k)$ is sufficient when $k \leqslant n/2+3$.
	As for the log-concavity when $k > n/2 + 3$, one method is to check the log-concavity of the longer MDS code (with the same dimension), and another is to give a similar condition as $q \geqslant q_0(n,k)$.
	For example, if $k = n/2+4$ or $k=n/2+5$, the condition for $G_k(3,n,q) \geqslant 1$ is necessary.
	
	Usually, if $n/2+2s \leqslant k \leqslant n/2 + 2s+1$, then the condition for $G_k(s,n,q) \geqslant 1$ is necessary.
	Considering that $g_k(s+2,w+2,q) > g_k(s,w,q)$ for odd $s$, these conditions can be simplified.
	When we solve $G_k(s,n,q) \geqslant 1$, we just replace $g_k(s,w,q)$ with $g_k(1,d+1,q)$, then we can obtain similar quadratic inequality as in Equation~\eqref{eq: MDS-NC}, and a similar condition $q \geqslant q_s(n,k)$, which is sufficient, but maybe not necessary.
	In fact, this technique has been applied in the proof of Theorem \ref{thm-MDS-Hamming} and works well.
}
\end{rem}

\begin{ex}{\em
	According to \cite[Theorem 5.3.4]{HuffmanPless}, there exists an $[n,k,d]_q$ MDS code, when $1 \leqslant k \leqslant n \leqslant q+1$.
	Let $k=3$ and $n=5$, then the MDS codes exist when $q \geqslant 4$.
	And $G_k(1,5,q) = 6q^2 - 44q + 66 \geqslant 1$ if $q \geqslant q_0(5,3)$ where $5 < q_0(5,3) < 6$.
	That is to say, the $[5,3,3]_q$ MDS code is log-concave when $q \geqslant 7$.
	In fact, the weight distribution of the $[5,3,3]_5$ code is
	\[ A_0 = 1, A_3 = 40, A_4 = 40, A_5 = 44. \]
	And the weight distribution of the $[5,3,3]_4$ code is
	\[ A_0 = 1, A_3 = 30, A_4 = 15, A_5 = 18. \]
}
\end{ex}

\begin{ex}{\em
	Let $k = 9$ and $n = 12$. Then $G_9(1,12,q) = 13q^2 - 209q + 418$, so $ 13< q_0(n,k) < 14$. Thus, the $[12, 9, 4]_q$ code for each $q \geqslant 16$ is log-concave.
	Moreover, the $[11, 9, 3]_q$ code for each $q \geqslant 16$ is also log-concave while $9 > (11/2)+3 = 8.5$.
}
\end{ex}

\section{Conclusion}
For the first time, we have studied the nonzero weight distribution of a linear code from the standpoint of log-concavity. In particular, we have shown that all binary Hamming codes of length $2^r -1$ ($r=3$ or $r \ge 5$), the extended Hamming codes of length $2^r ~(r \ge 3)$, and the second order Reed-Muller codes $R(2, m)~ (m \ge 2),$ as well as the homogeneous and projective second order Reed-Muller codes are log-concave. We have also shown that MDS codes of moderate lengths and dimensions are log-concave.

In future works, it will be interesting to investigate the log-concavity of classical families of linear codes such as few-weight codes, cyclic or quasi-cyclic codes, quadratic residue codes, double circulant codes, Griesmer codes, AG codes, and self-dual codes. Other integer sequences attached to codes, like the coset leader weight distribution, the  weight distributions of cosets, or the generalized Hamming weight distribution are also worth considering.

{\bf Acknowledgment:} The authors would like to thank Patrick Sol\'{e} for proposing this topic to us, and suggesting valuable comments.

\end{document}